\newcommand{\bea}{\begin{eqnarray}}
\newcommand{\eea}{\end{eqnarray}}
\newcommand{\oT}{\overline{T}}
\theoremstyle{definition}
\newtheorem{lemma}{Lemma}
\newtheorem{theorem}{Theorem}
\newtheorem{proposition}{Proposition}
\newcommand{\ba}{\begin{equation}\begin{aligned}}
\newcommand{\ea}{\end{aligned}\end{equation}}
\newtheorem{proof*}{Proof}
\newcommand{\MM}{\text{MM}}
\newcommand{\Tensor}{\text{Tensor}}
\DeclareMathOperator{\tr}{tr}
\newcommand{\vell}{\boldsymbol{\ell}}
\newcommand{\vlam}{\boldsymbol{\lambda}}
\newcommand{\vmu}{\boldsymbol{\mu}}
\newcommand{\C}{\mathbbm{C}}
\newcommand{\subalign}[1]{%
  \vcenter{%
    \Let@ \restore@math@cr \default@tag
    \baselineskip\fontdimen10 \scriptfont\tw@
    \advance\baselineskip\fontdimen12 \scriptfont\tw@
    \lineskip\thr@@\fontdimen8 \scriptfont\thr@@
    \lineskiplimit\lineskip
    \ialign{\hfil$\m@th\scriptstyle##$&$\m@th\scriptstyle{}##$\hfil\crcr
      #1\crcr
    }%
  }%
}
\begin{document}

\title{\Large \bf Blobbed topological recursion for correlation functions in tensor models}

\author{{\bf Valentin Bonzom}}\email{bonzom@lipn.univ-paris13.fr}
\author{{\bf Nicolas Dub}}\email{dub@lipn.univ-paris13.fr}
\affiliation{Universit\'e Sorbonne Paris Nord, LIPN, CNRS, UMR 7030, F-93430 Villetaneuse, France, EU}

\date{\small\today}

\begin{abstract}
\noindent Tensor models are generalizations of matrix models and as such, it is a natural question to ask whether they satisfy some form of the topological recursion. The world of unitary-invariant observables is however much richer in tensor models than in matrix models. It is therefore {\it a priori} unclear which set of observables could satisfy the topological recursion. Such a set of observables was identified a few years ago in the context of the quartic melonic model by the first author and Dartois. It was shown to satisfy an extension of the topological recursion introduced by Borot and called the blobbed topological recursion. Here we show that this set of observables is present in arbitrary tensor models which have non-vanishing couplings for the quartic melonic interactions. It satisfies the blobbed topological recursion in a universal way, i.e. independently of the choices of the other interactions. In combinatorial terms, the correlation functions describe stuffed maps with colored boundary components. The specifics of the model only appear in the generating functions of the stuffings and the blobbed topological recursion only requires them to have well-defined $1/N$ expansions. The spectral curve is a disjoint union of Gaussian spectral curves, with the cylinder function receiving an additional holomorphic part. This result is achieved via a perturbative rewriting of tensor models as multi-matrix models due to the first author, Lionni and Rivasseau. It is then possible to formally integrate all degrees of freedom except those which enter the topological recursion, meaning interpreting the Feynman graphs as stuffed maps. We further provide new expressions to relate the expectations of $U(N)^d$-invariant observables on the tensor and matrix sides.
\end{abstract}

\medskip

\keywords{Tensor models, Matrix models, Topological recursion}

\maketitle


\section*{Introduction}

\paragraph{Matrix and tensor models --} 
Tensor models are a generalization of matrix models where the variables are the elements $T_{a_1\dotsb a_d}$ of a tensor, for $a_1, \dotsc, a_d = 1, \dotsc, N$, and their complex conjugate elements $\overline{T}_{a_1 \dotsb a_d}$. It is well-known that matrix models are intimately connected to combinatorial maps: the latter are generated by the Feynman expansion of the former \cite{MatrixModels, MatrixModelCombinatorics}. For instance, in the Hermitian matrix model,
\begin{equation}
\int dM\ e^{-\frac{N}{2t} \tr M^2 + N \sum_{k\geq 1} \frac{p_k}{k} \tr M^k} = \sum_{\text{Maps}} \frac{t^n}{n!} N^{2-2g} \prod_{k\geq 1} p_k^{n_k}
\end{equation}
where the sum is over maps with $n$ labeled edges, of genus $g$ and with $n_k$ faces of degree $k\geq 1$. The quantity in the exponential on the left hand side is called the action, or the potential, and the $p_k$s are called the coupling constants. In tensor models, this relationship is also generalized, meaning that the Feynman expansion of tensor models generates piecewise-linear $d$-dimensional (pseudo-)manifolds \cite{GurauBook, InvitationGurau, UniversalityClasses, 3D}. This is why tensor models were already proposed as candidate for quantum gravity in the early 90s, long before a large $N$ limit was found \cite{Gurau1/N}.

More recently, tensor models have been shown to provide the same large $N$ limit as the SYK model (a model which is exactly solvable at large $N$ in the IR and exhibits maximal chaos, and dual to the Jackiw-Teitelboim 2D gravity) \cite{WittenSYK, TASIKlebanov}. This has driven the development of tensor models in the last few years. New tensor models have been introduced and their large $N$ limits explored \cite{NewModels}. Some models could also be explored beyond leading order using combinatorial techniques \cite{BeyondLO}.

A key feature of tensor models is that the set of observables and interactions is quite larger than in matrix models \cite{Uncoloring}, and grows with $d$. In $U(N)$-invariant matrix models, observables are products of traces $\tr M^n$, for $M$ Hermitian. In $U(N)^2$-invariant models, they are products of traces $\tr(MM^\dagger)^n$, for $M$ a complex matrix. In both cases, there is a single invariant at fixed degree in $M$ (in addition to products of invariants of smaller degree). More generally, there is a set of generators of the ring of $U(N)^d$-invariant polynomials, called the set of bubbles. They are characterized by a $d$-tuple of permutations, up to a left and a right action on the tuple. There is a graphical representation as $d$-regular bipartite graphs with edges labeled by a color from $\{1, \dotsc, d\}$ such that all colors are incident on every vertex. They have been studied in \cite{BenGelounRamgoolam}, where a relation to Kronecker coefficients was found. Enforcing other sets of symmetries leads to other sets of observables, like using $O(N)^d$ instead of $U(N)^d$ relaxes the bipartiteness of the bubbles \cite{O(N)Model,ONinvariants}.

This enlarged set of observables in tensor models compared to matrix models is the source of various universality classes found in the large $N$ and continuum limit. Indeed, it is well-known in 2D that models built with interaction $\tr M^k$, generating $k$-angulations, all have the same universality class (that of pure 2D quantum gravity). However, for $d>2$, there are more possible bubbles, i.e. more interactions at fixed order $k$ in $T, \oT$, which correspond to different $d$-dimensional building blocks. Choosing different bubbles as interactions can then lead to different universality classes \cite{UniversalityClasses}. This however does not seem to be the case in 3D, where all planar bubbles (dual to building blocks homeomorphic to the ball) used as interactions always lead to the universality class of random trees (i.e. branched polymers) \cite{3D}.

\paragraph{Blobbed topological recursion --} 
A natural question for tensor models is to go beyond the large $N$ limit. In particular, it is natural to ask whether the methods used for this purpose for matrix models still work for tensor models and whether it depends on the set of chosen interactions. In this paper we will focus on the topological recursion of Eynard-Orantin \cite{OriginalTR, TR}. Let us nevertheless mention previous works on tensor models beyond the large $N$ limit. A standard combinatorial analysis of maps was applied to the Feynman graphs of the so-called colored tensor models by Gurau and Schaeffer \cite{GurauSchaeffer}, and extended to the set of Feynman graphs of the multiorientable model (which has $U(N)^2\times O(N)$ symmetry, at $d=3$) by Fusy and Tanasa \cite{FusyTanasa}. They classify the Feynman graphs appearing at a given order of the $1/N$ expansion of their respective models. They also identify those which are the most singular in the continuum limit at each order of the $1/N$ expansion, thereby allowing for a double-scaling limit, whose 2-point function was calculated.

In parallel, interest grew around the so-called quartic melonic model. It is a tensor model with up to $d$ quartic interactions having a special structure called melonic. This interest in the quartic model comes from the existence of the Hubbard-Stratonovich technique which transforms this tensor model into a multi-matrix model. It opened up a new way of analyzing tensor models through matrix models. The double-scaling limit for this model was done in \cite{DoubleScalingDartois} (for a result similar to \cite{GurauSchaeffer}). It was also realized in \cite{IntermediateT4} that in the large $N$ limit, the eigenvalues do not spread because of the Coulomb repulsion is subdominant. Instead, they all fall in the potential well, as anticipated in \cite{SDLargeN}. One can then study the fluctuations around the saddle point, an analysis started in \cite{IntermediateT4} where the leading order fluctuations were shown to obey Wigner's semi-circle law.

In \cite{QuarticTR}, the first instance of topological recursion in the context of tensor models was established. Recall that in the ordinary Hermitian 1-matrix model, the topological recursion applies to the calculation of the $n$-point, genus $g$, correlation functions $W_{n,g}(x_1, \dotsc, x_n)$ which appear in the expansion of connected $n$-point functions
\begin{equation}
\langle \tr \frac{1}{x_1-M} \dotsm \tr \frac{1}{x_n-M}\rangle_{\text{conn}} = \sum_{g\geq 0} N^{2-n-2g} W_{n,g}(x_1, \dotsc, x_n).
\end{equation}
In terms of maps, it is a recursion on the generating functions of maps of genus $g$, with $n$ marked faces whose perimeters are tracked by the variables $x_1, \dotsc, x_n$. The topological recursion takes a universal form, and uses a spectral curve as initial data. The spectral curve is determined by the disc and cylinder functions $W_{1,0}(x)$ and $W_{2,0}(x_1, x_2)$.

In \cite{QuarticTR}, the matrix model is the one obtained in \cite{IntermediateT4} for the fluctuations of the eigenvalues around the saddle point. It has $d$ Hermitian matrices $M_1, \dotsc, M_d$ where $M_c$ is said to be of color $c$ and the correlations now need to have the colors of their variables specified,
\begin{equation}
W_{n}(x_1, c_1; \dotsc, x_n, c_n) = \langle \tr \frac{1}{x_1-M_{c_1}} \dotsm \tr \frac{1}{x_n-M_{c_n}}\rangle_{\text{conn}}
\end{equation}
As it turns out, the coupling between the colors is not too strong and a topological recursion be can derived where the spectral curve is a disjoint union of $d$ spectral curves for Gaussian matrix models, with an additional holomorphic term for the cylinder function. This is due to
\begin{description}
\item[condition 1] the $U(N)^d$ symmetry. It implies that the matrices of different colors can only interact through products of traces of different colors. The action is of the form
\begin{equation} \label{Action}
S_N(M_2, \dotsc, M_d) = \sum_{p_1, \dotsc, p_d\geq 0} t_N(p_1, \dotsc, p_d)\ \tr M_{1}^{p_1} \dotsm \tr M_{d}^{p_d}
\end{equation}
\item[condition 2] the $1/N$ expansion. It is such that only the quadratic terms of the action survive the large $N$ limit,
\begin{equation}
S_N(M_1, \dotsc, M_d) \sim_{N\to\infty} N \sum_{c=1}^d a_c \tr M_c^2 + \sum_{c,c'=1}^d b_{cc'} \tr M_c \tr M_{c'}
\end{equation}
(in which sense is explained in the text).
\end{description}
Those two conditions guarantee that an extension of the topological recursion, called the blobbed topological recursion, or rather a multi-colored extension of the latter, holds with the spectral curve being a disjoint union of Gaussian spectral curves, except for $W_{2,0}(x_1, c_1; x_2, c_2)$ which has an additional holomorphic part compared to its usual form.

The blobbed topological recursion was introduced by Borot \cite{BlobbedTR} and further formalized by Borot and Shadrin \cite{BlobbedTR2}. In our context, it applies to matrix models with multi-trace interactions having a topological expansion, i.e. of the form
\begin{equation}
S_N(M) = \sum_{n, h\geq 0} \sum_{p_1, \dotsc, p_n\geq 0} N^{2-n-2h}\ t^{(h)}(p_1, \dotsc, p_n)\ \tr M^{p_1} \dotsm \tr M^{p_n}
\end{equation}
Combinatorially, those types of models generate stuffed maps, defined in \cite{BlobbedTR}. They are maps which are not built by the gluings of disks but as gluings of surfaces of genus $h$ with $n$ boundary components of perimeters $p_1, \dotsc, p_n$. In \cite{QuarticTR} this interpretation survives with an additional coloring of the boundary components.

In the blobbed topological recursion, the recursion for correlation functions still has the same universal term as the ordinary topological recursion, which calculates the singular parts of the correlation functions. In addition, there are now holomorphic contributions \cite{BlobbedTR, BlobbedTR2}. It is also important to keep in mind that the action \eqref{Action} is in fact topological only for $d=4d'+2$, for $d'\in \mathbbm{N}$, \cite{QuarticTR}, meaning that the couplings take the form $t_N(p_1, \dotsc, p_d) = \sum_{h\geq 0} N^{2-d-2h} t^{(h)}(p_1, \dotsc, p_d)$. For other values of $d$, one can do as if the action were topological by absorbing some $N$-dependence into the couplings $t^{(h)}(p_1, \dotsc, p_d)$, and then apply the blobbed topological recursion.

Here we show how to apply this approach to arbitrary $U(N)^d$-invariant models, provided that there are quartic melonic interactions (among others) and some invertibility condition of a quadratic form at large $N$. This revolves around the fact that after some intermediate field techniques and formal integration, such tensor models can always be rewritten as matrix models with a set of $d$ Hermitian matrices satisfying the conditions 1 and 2. Therefore, the correlation functions of these matrices satisfy the blobbed topological recursion, with the same spectral curve as in the quartic melonic model of \cite{QuarticTR}.

Remarkably (and evidently from \cite{BlobbedTR}), the specifics of the model, i.e. the choice of interactions, only contribute to some effective action and not the form of the blobbed topological recursion. In combinatorial terms, the specifics only contribute to the generating functions of the stuffings of the maps. Proving the blobbed topological recursion does not require knowing the explicit effective action, but only that the conditions 1 and 2 are satisfied. In this sense, the blobbed topological recursion is universal in our framework. The only difference between our formulas and those of \cite{QuarticTR} is that the generating functions of the stuffing were explicitly known in \cite{QuarticTR} while their explicit dependence on the couplings constants will be left unknown here (their $N$-dependence is however important).

\paragraph{Method --} 
There are however some technical obstacles to overcome. Arbitrary tensor models can not be directly transformed into matrix models using the Hubbard-Stratonovich transformation which only works for quartic interactions. This first obstacle was overcome in \cite{StuffedWalshMaps} where it was shown that there are still matrix models rewritings. This was proved using a bijection between the Feynman graphs of the tensor model and those of the corresponding matrix model. A second proof was also provided by manipulations of formal integrals (integrals which are only defined as their Feynman series). Here we will repeat this proof, adapting it to go through the second obstacle, which we explain now.

The method of \cite{StuffedWalshMaps} turns a tensor model into a matrix model with complex matrices $M_C$ labeled by subsets of $\{1, \dotsc, d\}$, i.e. $C\subset \{1, \dotsc, d\}$. However, applying the same recipe as in \cite{QuarticTR} requires to have $d$ Hermitian matrices $M_1, \dotsc, M_d$ instead. This is remedied in two times. We first show that it is possible to replace the complex matrices $M_C$ with pairs of Hermitian matrices $(Y_C, \Phi_C)_{C\subset \{1, \dotsc, d\}}$. Then, provided that the quartic melonic interactions are turned on, it is possible to integrate formally over all matrices except $Y_1, \dotsc, Y_d$. In terms of Feynman graphs, this means that one has combinatorial maps with colored edges corresponding to the matrices $Y_1, \dotsc, Y_d$, and everything else is packed in some stuffing of the maps. Keeping in mind the goal of the topological recursion, it is necessary to control the $N$-dependence of the stuffings in terms of their boundary components.

The next step is to observe that all the eigenvalues of $Y_1, \dotsc, Y_d$ fall into some potential wells at large $N$, and move on to the study of their fluctuations. This is where one observes that the conditions 1 and 2 are still satisfied and leads to the blobbed topological recursion. It would be interesting to know whether condition 2 could be removed in general. It is known to be possible in the case of a single matrix model as originally done in \cite{BlobbedTR}. However, in the multi-colored case, it would require a 1-cut (Brown's) lemma for a system of coupled equations with catalytic variables, thus extending the framework of \cite{BM-Jehanne}, which is outside the reach of the present article.

\paragraph{Expectations --}
When discussing the topological recursion in the context of tensor models, there is another natural question to address, which is how to relate the expectations of generic observables on the tensor side to the quantities evaluated via the topological recursion on the matrix side. In \cite{IntermediateT4} for the quartic melonic model, it was shown that the expectations of $\tr M_c^n$ are expectations of Hermite polynomials of some melonic cyclic bubbles on the tensor side. This relation can also be inverted via Hermite polynomials. In \cite{QuarticTR}, the expectations of arbitrary tensor observables (bubbles) were expressed in terms of quantities evaluated by the matrix models, but it involved summing over Wick contractions.

In the present article, we generalize the Hermite polynomial relationship of \cite{IntermediateT4} to arbitrary observables on both the tensor and matrix sides. To express the expectation of a matrix observable in terms of tensorial observables, one has to take derivatives of the potential (which in the case of the quartic melonic model is quadratic, therefore leading to Hermite polynomials). The other way around, i.e. to express the expectation of a tensorial observable in terms of matrix expectations, one has to take derivatives of some effective potential for the matrices $Y_C$s (in the quartic melonic model, it reduces again to a quadratic potential, hence Hermite polynomials), which comes from integrating all the matrices $\Phi_C$s.

\paragraph{Plan --}
In Section \ref{sec:Definitions} we define the tensor models of interest and their multi-matrix equivalent models. Theorems \ref{thm:IntermediateField} and \ref{thm:Expectations} give some of the relationships between the expectations of observables on the tensor and matrix sides. In Section \ref{sec:Effective}, we explain how to formally integrate all matrices except $Y_1, \dotsc, Y_d$, leading to an effective matrix model in Theorem \ref{thm:PartialIntegration}. We use the same technique of formal integration to express the expectations of tensorial observables in terms of matrix expectations in Theorem \ref{thm:BubbleExpectation}. The large $N$ limit of the effective model is discussed and leads to a matrix model for the fluctuations. We study the latter in Section \ref{sec:Blobbed}, by describing the Schwinger-Dyson/loop equations, which can be analyzed along the lines of \cite{BlobbedTR, QuarticTR}. We only present some key aspects which are needed to state the Theorem \ref{thm:BlobbedTR}, about the blobbed topological recursion, since everything works as in \cite{QuarticTR}.


\section{Definition of the tensor and matrix models} \label{sec:Definitions}

\subsection{Bubbles and partition function} \label{sec:PartitionFunction}

Let $d>2$ an integer. For $c=1, \dotsc, d$, we call $E_c \simeq \mathbbm{C}^N$ the space of color $c$. A tensor $T$ of rank $d$ is an object in $\bigotimes_{c=1}^d E_c$ and its elements are denoted $T_{a_1 \dotsb a_d}$, with $a_c = 1, \dotsc, N$ for all $c=1, \dotsc, d$. In tensor models, one is interested in polynomials in the tensor entries which are invariant under the natural action of $U(N)^d$ on $T$ and $\overline{T}$. This group acts as a different copy of $U(N)$ on each color (each index),
\begin{equation}
T \to \bigotimes_{c\in C} U^{(c)}\ T.
\end{equation}
The only way to realize this invariance is to identify the index of a $T$ and a $\overline{T}$ which are in the same position, i.e. have the same color, and sum over the values of that index. This is represented graphically as follows
\begin{equation}
\sum_{a_c=1}^N T_{\dotsb a_c \dotsb} \overline{T}_{\dotsb a_c \dotsb} = \begin{array}{c} \includegraphics[scale=.4]{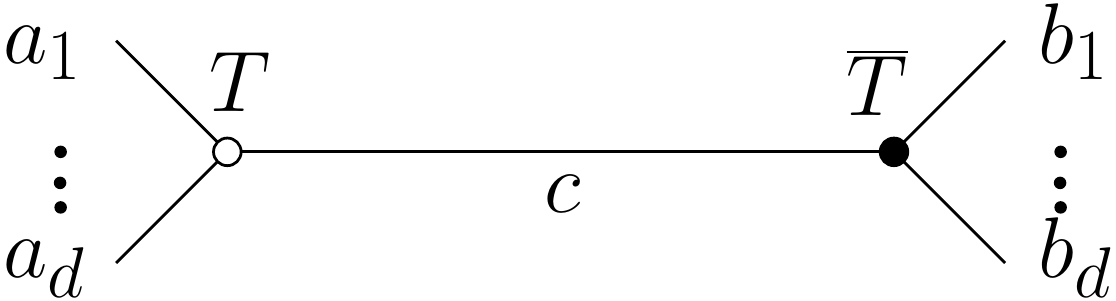}\end{array}.
\end{equation}

A \emph{bubble} is a connected, bipartite graph whose edges are labeled with a color in $\{1, \dotsc, d\}$, and such that each vertex has degree $d$ and all colors are incident to each of them. If $B$ is a bubble, the above rule associates to it a polynomial which is invariant under $U(N)^d$ and denoted $B(T, \overline{T})$. These polynomials generate the ring of $U(N)^d$-invariant polynomials.

If $B$ has $2n$ vertices and one labels the white vertices from $1$ to $n$ and similarly for black vertices, then $B$ can be described as a $d$-tuple $(\tau^{(1)}, \dotsc, \tau^{(d)})$ of permutations on $\{1, \dotsc, n\}$. Set $\tau^{(c)}(v) = v'$ if there is an edge of color $c$ connecting the white vertex $v$ to the black vertex $v'$. The associated polynomial is
\begin{equation} \label{BubblePolynomial}
B(T, \overline{T}) = \sum_{\substack{(i^{(c)}_1, \dotsc, i^{(c)}_n)\\(j^{(c)}_1, \dotsc, j^{(c)}_n)}} \delta^{\tau^{(1)} \dotsb \tau^{(d)}}_{(i^{(c)}_1, \dotsc, i^{(c)}_n), (j^{(c)}_1, \dotsc, j^{(c)}_n)} \prod_{v=1}^n T_{i^{(1)}_v \dotsb i^{(d)}_v} \overline{T}_{j^{(1)}_v \dotsb j^{(d)}_v} \ 
\end{equation}
with by definition
\begin{equation}
\delta^{\tau^{(1)} \dotsb \tau^{(d)}}_{(i^{(c)}_1, \dotsc, i^{(c)}_n), (j^{(c)}_1, \dotsc, j^{(c)}_n)} = \prod_{v=1}^n \prod_{c=1}^d \delta_{i^{(c)}_v, j^{(c)}_{\tau^{(c)}(v)}}
\end{equation}
Invariance under relabeling of the white and black vertices implies invariance of $B(T, \overline{T})$ under left product of $\tau^{(1)}, \dotsc, \tau^{(d)}$ by $\sigma_L$ and right product by $\sigma_R$, two permutations on $\{1, \dotsc, n\}$.

If $C\subset \{1, \dotsc, d\}$ and $\hat{C}$ is its complement, then denote
\begin{equation}
E_C = \bigotimes_{c\in C} E_c \qquad \text{and} \qquad H_{C}(T, \bar{T}) = \begin{array}{c} \includegraphics[scale=.4]{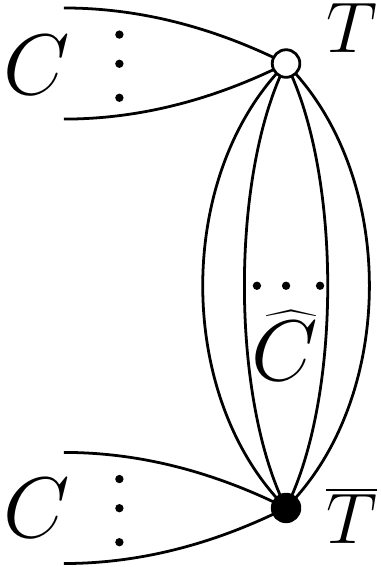} \end{array} \quad \in E_C \otimes E_C^*
\end{equation}
the matrix obtained by contracting all the colors from $\widehat{C}$ between $T$ and $\overline{T}$. We will write $\bigl(H_C(T, \overline{T})\bigr)_{(i^{(c)}), (j^{(c)})}$ the matrix elements.

There is a single quadratic invariant (up to a factor), given by the contraction of $T$ with $\overline{T}$ along all colors,
\begin{equation}
T\cdot\overline{T} = \sum_{a_1,\dotsc, a_d=1}^N T_{a_1\dotsb a_d} \overline{T}_{a_1\dotsb a_d} = H_{\emptyset}(T, \overline{T}).
\end{equation}
For quartic invariants, we choose a color subset $C\subset \{1, \dotsc, d\}$ and connect the indices of $T$ with colors in $C$ with a $\overline{T}$ and those with colors in $\widehat{C}$ with another $\overline{T}$,
\begin{equation}
Q_{C}(T, \overline{T}) = \tr_{E_C} \Bigl(H_C(T, \oT)^2\Bigr) = \begin{array}{c} \includegraphics[scale=.35]{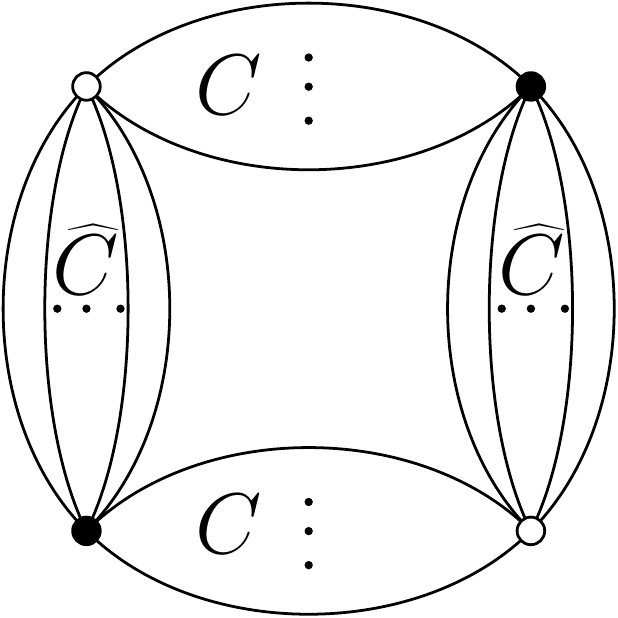}\end{array}
\end{equation}
where the notation $\tr_{E_C}$ indicates that the trace is taken in the spaces with colors in $C$. It is invariant under $C\to \widehat{C}$. In this article, we will furthermore consider cyclic interactions, labeled by a color set $C$ and an integer $n\geq 2$
\begin{equation}
B_{C,n}(T, \overline{T}) = \tr_{E_C} \Bigl(H_{C}(T, \overline{T})^n \Bigr) = \begin{array}{c} \includegraphics[scale=.3]{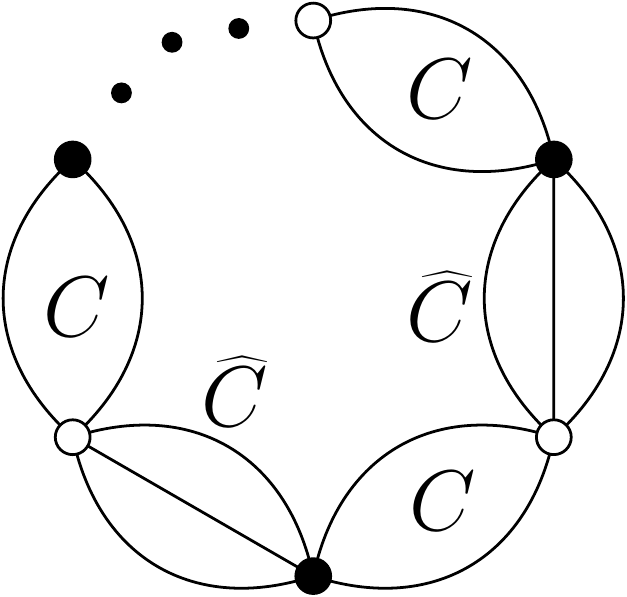} \end{array}
\end{equation}
It is again symmetric under the exchange of $C$ and $\widehat{C}$. We say that the cyclic interaction is \emph{melonic} if $|C|=1$, meaning that in $H_C(T, \overline{T})$, $T$ and $\overline{T}$ are contracted along all colors except one.

Let $I$ be a finite set and $\{B_i\}_{i\in I}$ a finite set of bubbles and $\{t_i\}$ their coupling constants and $\{s_i\}$ some scaling coefficients. Denote $\mathcal{B} = \{(B_i, t_i, s_i)\}_{i\in I}$. Then the partition function is
\begin{equation} \label{TensorModel}
Z_{\Tensor}(N, \mathcal{B}) = \int dT d\oT\ \exp -N^{d-1} T\cdot \oT + V_{N, \mathcal{B}}(T, \oT)
\qquad \text{with} \quad
V_{N, \mathcal{B}}(T, \oT) = \sum_{i\in I} N^{s_i} t_i\, B_i(T, \oT)
\end{equation}
and the free energy is $F(N, \mathcal{B}) = \ln Z_{\Tensor}(N, \mathcal{B})$. Here the measure $dTd\overline{T}$ is proportional to the product of the Lebesgue measures over the tensor entries $\prod_{i_1, \dotsc, i_d} dT_{i_1\dotsb i_d} d\overline{T}_{i_1\dotsb i_d}$, normalized so that
\begin{equation} \label{TensorNormalization}
Z_{\Tensor}(N, \emptyset) = \int dT d\oT\ \exp -N^{d-1} T\cdot \oT = 1.
\end{equation}

Moreover, we only consider \eqref{TensorModel} to make sense by expanding $e^{V_{N, \mathcal{B}}(T, \oT)}$ as a series in $T, \overline{T}$ and integrating each term with the Gaussian weight $e^{-N^{d-1}T\cdot \overline{T}}$. More precisely, write
\begin{equation}
e^{V_{N, \mathcal{B}}(T, \oT)} = \sum_{\{n_i\geq 0\}_{i\in I}} \prod_{i\in I} \frac{1}{n_i!} \bigl(N^{s_i}t_i B_i(T, \overline{T})\bigr)^{n_i}
\end{equation}
and perform the integral at fixed $\{n_i\}$ using the Wick theorem. In more details, we expand $B_i(T, \overline{T})^{n_i}$ as a polynomial in the tensor entries
\begin{equation}
\prod_{i\in I} B_i(T, \overline{T})^{n_i} = \sum_{\substack{a^{(1)}_1, b^{(1)}_1,\dotsc, a^{(1)}_p, b^{(1)}_p\\ \vdots\\ a^{(d)}_1, b^{(d)}_1, \dotsc, a^{(d)}_p, b^{(d)}_p}} \delta^{(\{n_i\})}\Bigl(\{a^{(c)}_q, b^{(c)}_q\}^{c=1, \dotsc, d}_{q=1, \dotsc, p}\Bigr) T_{a^{(1)}_1 \dotsb a^{(d)}_1} \overline{T}_{b^{(1)}_1 \dotsb b^{(d)}_1} \dotsm T_{a^{(1)}_p \dotsb a^{(d)}_p} \overline{T}_{b^{(1)}_p \dotsb b^{(d)}_p}
\end{equation}
by taking a product of \eqref{BubblePolynomial}. The tensor $ \delta^{(\{n_i\})}$ is thus a product of Kroneckers. Here $p$ is the total degree, i.e. if $B_i$ is of degree $p_i$ in $T, \overline{T}$, then $p=\sum_{i\in I} n_i p_i$. Then Wick theorem is applied, and leads to an expansion onto pairings, which are here simply permutations $\sigma\in S_p$ on the set of $p$ elements,
\begin{equation}
\int dT d\oT\ e^{-N^{d-1} T\cdot \oT} T_{a^{(1)}_1 \dotsb a^{(d)}_1} \overline{T}_{b^{(1)}_1 \dotsb b^{(d)}_1} \dotsm T_{a^{(1)}_p \dotsb a^{(d)}_p} \overline{T}_{b^{(1)}_p \dotsb b^{(d)}_p} = N^{-(d-1)p} \sum_{\text{pairings $\sigma\in S_p$}} \delta^{(\sigma)}\Bigl(\{a^{(c)}_q, b^{(c)}_q\}^{c=1, \dotsc, d}_{q=1, \dotsc, p}\Bigr)
\end{equation}
with
\begin{equation}
\delta^{(\sigma)}\Bigl(\{a^{(c)}_q, b^{(c)}_q\}^{c=1, \dotsc, d}_{q=1, \dotsc, p}\Bigr) = \prod_{q=1}^p \prod_{c=1}^d \delta_{a^{(c)}_q, b^{(c)}_{\sigma(q)}}
\end{equation}
A Feynman graph is $G=(\{n_i\}, \sigma)$ and has amplitude
\begin{equation}
A_{N, \mathcal{B}}(G) = N^{-(d-1)p} \prod_{i\in I} \frac{(N^{s_i} t_i)^{n_i}}{n_i!} \sum_{\substack{a^{(1)}_1, b^{(1)}_1,\dotsc, a^{(1)}_p, b^{(1)}_p\\ \vdots\\ a^{(d)}_1, b^{(d)}_1, \dotsc, a^{(d)}_p, b^{(d)}_p}} \delta^{(\{n_i\})}\Bigl(\{a^{(c)}_q, b^{(c)}_q\}^{c=1, \dotsc, d}_{q=1, \dotsc, p}\Bigr) \delta^{(\sigma)}\Bigl(\{a^{(c)}_q, b^{(c)}_q\}^{c=1, \dotsc, d}_{q=1, \dotsc, p}\Bigr)
\end{equation}
Since the tensors $ \delta^{(\{n_i\})}$ and $ \delta^{(\sigma)}$ are products of Kroneckers, and the sums range from 1 to $N$, those sums give $N^{F(G)}$ for some function $F(G)$ which can be given a simple graphical interpretation.

Draw $n_i$ copies of $B_i(T, \overline{T})$ and use $\sigma$ to connect each white vertex (labeled with $q$) to a white vertex (labeled $\sigma(q)$) with an edge. It is customary to give the color 0 to those edges. A bicolored cycle of colors $\{0, c\}$ is a closed path alternating an edge of color 0 and an edge of color $c$. Denote $F_c(G)$ the number of bicolored cycles of colors $\{0,c\}$. Then, by tracking the sequence of index identification along the Kroneckers in the above calculation, it comes that $F(G)$ is the total number of such bicolored cycles, $F(G) = \sum_{c=1}^d F_c(G)$.

Therefore
\begin{equation} \label{FreeEnergyDirectExpansion}
A_{N, \mathcal{B}}(G) = N^{F(G)-(d-1)p + \sum_{i\in I} s_i n_i} \prod_{i\in I} \frac{t_i^{n_i}}{n_i!}
\end{equation}
and the free energy reads
\begin{equation}
F(N, \mathcal{B}) = \sum_{\text{connected $G$}} A_{N, \mathcal{B}}(G)
\end{equation}
The parameters $s_i$ are necessary so that the model has a large $N$ limit which is non-trivial. A non-trivial large $N$ limit is such that $F(N, \mathcal{B})$ appropriately rescaled\footnote{The usual rescaling is $N^{-d}$.} is a non-trivial\footnote{In fact, one usually requires a condition which is a bit stronger: that an infinite number of graphs from the Feynman expansion contributes to $\lim_{N\to\infty} F(N, \mathcal{B})/N^d$. It could be that only a finite number of graphs contribute at large $N$ so that the free energy is a polynomial in the coupling constants. In practice, these two conditions have always been equivalent and we will not discuss these subtleties further.} function of the coupling constants. {\it A priori}, those parameters depend on the whole set $\{B_i\}$. However, all models solved so far are such that $s_i$ is determined by $B_i$ solely. For instance, if $B_i = B_{C,n}$ then $s_i = (|C|-1)n+d-|C|$.

If $P(T, \oT)$ is a polynomial in the tensor entries, its expectation is
\begin{equation}
\langle P(T, \oT)\rangle_{\mathcal{B}} = \frac{1}{Z_{\Tensor}(N, \mathcal{B})} \int dT d\oT\ P(T, \oT)\ \exp -N^{d-1} T\cdot \oT + V_{N, \mathcal{B}}(T, \oT)
\end{equation}

\subsection{Contracted bubbles}

We introduce another representation of $U(N)^d$, this time on matrices. Let $(\Phi_C)_{C\subset \{1,\dotsc,d\}}$ be a set of Hermitian, or complex, matrices labeled by all color subsets (except the empty set), such that $\Phi_C\in E_C\otimes E_C^*$. The action of $U(N)^d$ is
\begin{equation} \label{MatrixU(N)}
\Phi_C \to \bigotimes_{c\in C} U^{(c)}\ \Phi_C \bigotimes_{c\in C} U^{(c)\dagger}
\end{equation}
Bubbles are graphs which can be associated polynomials which generate the ring of $U(N)^d$-invariant polynomials in $T, \oT$. In this representation, the same role is played by \emph{contracted bubbles}.

A \emph{contracted bubble} $P=(B, \pi)$ is obtain from a bubble $B$ by
\begin{itemize}
\item orienting the edges of $B$ from white to black vertices,
\item choosing a pairing $\pi$ of the vertices of $B$ into pairs $\{v, \pi(v)\}$ where $v$ is white and $\pi(v)$ is black,
\item identifying the vertices $v$ and $\pi(v)$ of each pair and removing the loops.
\end{itemize}
Equivalently, i.e. by a trivial bijection, $P$ is a connected graph with oriented edges, each carrying a color $c\in\{1, \dotsc, d\}$, and such that the sub-graph $P_c$, for $c=1, \dotsc, d$, made of the edges of color $c$ only, is a disjoint union of oriented cycles. As a remark, we recall that $P$ can further be transformed into a map with colored edges, as shown in \cite{StuffedWalshMaps}. This will not be necessary here.

There is a bijection between the vertices of $P=(B, \pi)$ and the white vertices of $B$. For this reason we will identify them and use the same notations. From the definition, we see that every vertex $v$ of $P$ carries a color set $C_v \subset \{1, \dotsc, d\}$, with exactly one incoming and one outgoing edges of every color $c\in C_v$.

To obtain an invariant polynomial $P(\{\Phi_C\})$ from the contracted bubble $P$, one associates to every vertex $v_C$ of color set $C$ a matrix $\Phi_C$ and to every incoming incident edge of color $c$ a right index $j^{(c)}_v$ and to every outgoing incident edge a left index $i^{(c)}_v$, e.g.
\begin{equation}
\Bigl(\Phi_{\{c_1, c_2, c_3, c_4\}}\Bigr)_{(i^{(c_1)} i^{(c_2)} i^{(c_3)} i^{(c_4)}), (j^{(c_1)} j^{(c_2)} j^{(c_3)} j^{(c_4)})} = \begin{array}{c} \includegraphics[scale=.5]{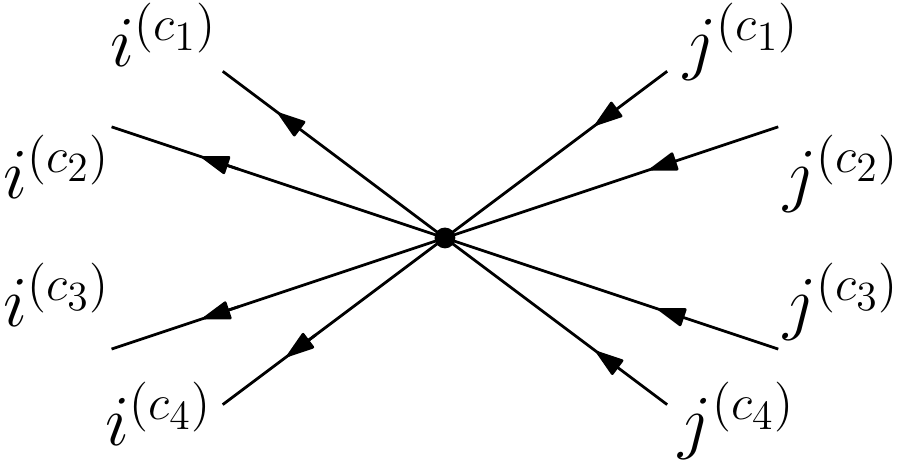} \end{array}
\end{equation}
A left index of color $c$ of a $\Phi_C$ at vertex $v_C$ is identified with the right index of the same color of a $\Phi_{C'}$ at vertex $v_{C'}$ if there is an edge of color $c$ from $v_C$ to $v_{C'}$.

\begin{proposition} \label{thm:BubblePairing}
The polynomial associated to the contracted bubble $P=(B, \pi)$ is related to the bubble polynomial of $B$ as follows
\begin{equation}
P(\{H_C(T, \oT)\}) = B_P(T, \oT).
\end{equation}
\end{proposition}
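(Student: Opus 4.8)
The plan is to expand both sides as explicit sums of products of tensor entries weighted by Kronecker deltas and to match them term by term. Fix the $d$-tuple $(\tau^{(1)},\dotsc,\tau^{(d)})$ of permutations on $\{1,\dotsc,n\}$ describing $B$, and the pairing $\pi$ sending the white vertex $v$ to the black vertex $\pi(v)$. First I would relabel the black vertices of $B$ by $\pi^{-1}$; by the stated invariance under left multiplication this does not change $B(T,\overline{T})$ and replaces the tuple by $\tilde\tau^{(c)}:=\pi^{-1}\tau^{(c)}$. After this relabelling $\pi$ becomes the identity, white $v$ is glued to black $v$, the colour set of the contracted vertex $v$ is $C_v=\{c:\tilde\tau^{(c)}(v)\neq v\}$, and $\widehat{C_v}$ is exactly the set of colours whose edge at $v$ became a removed loop. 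So it suffices to show that $P(\{H_C(T,\overline{T})\})$ equals the bubble polynomial of the tuple $(\tilde\tau^{(c)})$ (this polynomial is what the statement denotes $B_P(T,\overline{T})$).

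Next I would plug $\Phi_{C_v}=H_{C_v}(T,\overline{T})$ into vertex $v$ of $P$ and unfold the definition of $H_{C_v}$: this places at $v$ one factor $T$ carrying the free indices $i^{(c)}_v$ for $c\in C_v$ and a contracted dummy index $a^{(c)}_v$ for $c\in\widehat{C_v}$, and one factor $\overline{T}$ carrying $j^{(c)}_v$ for $c\in C_v$ and the same dummies $a^{(c)}_v$ for $c\in\widehat{C_v}$. Two kinds of Kronecker deltas then arise. For each $c\in\widehat{C_v}$ the internal contraction inside $H_{C_v}$ yields $\delta_{i^{(c)}_v,j^{(c)}_v}=\delta_{i^{(c)}_v,j^{(c)}_{\tilde\tau^{(c)}(v)}}$, since $\tilde\tau^{(c)}(v)=v$ for such $c$. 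For each $c\in C_v$, the gluing edge of $P$ of colour $c$ runs from $v$ to $v'=\tilde\tau^{(c)}(v)\neq v$ and, by the index rule for contracted bubbles (outgoing $\mapsto$ left index, incoming $\mapsto$ right index), identifies the left index $i^{(c)}_v$ with the right index $j^{(c)}_{v'}$, producing $\delta_{i^{(c)}_v,j^{(c)}_{\tilde\tau^{(c)}(v)}}$. Hence exactly one factor $\delta_{i^{(c)}_v,j^{(c)}_{\tilde\tau^{(c)}(v)}}$ appears for each vertex $v$ and each colour $c\in\{1,\dotsc,d\}$, so
\[
P(\{H_C(T,\overline{T})\}) = \sum \Bigl(\prod_{v=1}^n\prod_{c=1}^d \delta_{i^{(c)}_v,\,j^{(c)}_{\tilde\tau^{(c)}(v)}}\Bigr)\prod_{v=1}^n T_{i^{(1)}_v\dotsb i^{(d)}_v}\,\overline{T}_{j^{(1)}_v\dotsb j^{(d)}_v},
\]
the outer sum running over all the indices. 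This is precisely \eqref{BubblePolynomial} for the tuple $(\tilde\tau^{(c)})$, hence it equals the corresponding bubble polynomial, which by the first step is $B(T,\overline{T})$; in particular the answer does not depend on the choice of $\pi$. The only degenerate configuration, a vertex with $C_v=\emptyset$, forces $B$ to be the two-vertex elementary melon ($n=1$), where the identity reduces to $H_{\emptyset}(T,\overline{T})=T\cdot\overline{T}=B(T,\overline{T})$.

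I do not expect any genuine conceptual difficulty: the whole content is the bookkeeping of the middle step, namely checking that the $d$ deltas attached to each vertex of $P$ — the ones packed inside the matrices $H_{C_v}$ as internal contractions over $\widehat{C_v}$, together with the ones carried by the coloured gluing edges of $P$ — reassemble into exactly the delta tensor $\delta^{\tilde\tau^{(1)}\dotsb\tilde\tau^{(d)}}$ of the underlying bubble, plus the elementary observation that relabelling the black vertices by $\pi^{-1}$ leaves $B(T,\overline{T})$ invariant.
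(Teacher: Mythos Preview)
Your argument is correct and follows essentially the same route as the paper's: both proofs expand the bubble polynomial and the contracted-bubble polynomial in indices and match the Kronecker deltas colour by colour, the only cosmetic difference being that you first relabel the black vertices by $\pi^{-1}$ so that the pairing becomes the identity and the relevant permutations become $\tilde\tau^{(c)}=\pi^{-1}\tau^{(c)}$, whereas the paper keeps $\tau$ and $\pi$ separate and packages the same identifications into the tensor $\delta^{\tau^{(1)}\dotsb\tau^{(d)},\pi}$ defined in \eqref{BubbleTensorPairing}. Your handling of the degenerate case $C_v=\emptyset$ is also fine and is not made explicit in the paper.
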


\begin{proof}
We simply rewrite $B(T, \oT)$ in terms of the matrices $H_C(T, \oT)$ given the choice of pairing $\pi$. Indeed, each white vertex carries a $T$ and each black vertex carries a $\oT$. If $v$ and $\pi(v)$ are connected by edges with colors in $\widehat{C}$, then the sums over the indices of this $T$ and $\oT$ with colors in $\widehat{C}$ form the matrix $H_C(T, \oT)$. Therefore
\begin{equation} \label{BubblePolynomialPairing}
\begin{aligned}
B(T, \overline{T}) &= \sum_{\substack{(i^{(c)}_v)_{c\in C_v}\\(j^{(c)}_v)_{c\in C_v}}}
 \prod_{v=1}^n \biggl[H_{C_v}(T, \overline{T})_{(i^{(c)}_v), (j^{(c)}_{\pi(v)})} \prod_{c\in C_v} \delta_{i^{(c)}_{v}, j^{(c)}_{\tau^{(c)}(v)}} \biggr]\\
&= \sum_{\substack{(i^{(c)}_v)_{c\in C_v}\\(j^{(c)}_v)_{c\in C_v}}} \delta^{\tau^{(1)}\dotsb \tau^{(d)}, \pi}_{(i^{(c)}_v)_{c\in C_v},(j^{(c)}_v)_{c\in C_v}}\ \prod_{v=1}^n H_{C_v}(T, \overline{T})_{(i^{(c)}_v), (j^{(c)}_{v})}
\end{aligned}
\end{equation}
with
\begin{equation} \label{BubbleTensorPairing}
\delta^{\tau^{(1)}\dotsb \tau^{(d)}, \pi}_{(i^{(c)}_v)_{c\in C_v},(j^{(c)}_v)_{c\in C_v}} = \prod_{v=1}^n \prod_{c\in C_v} \delta_{i^{(c)}_{v}, j^{(c)}_{\pi^{-1}\circ\tau^{(c)}(v)}}
\end{equation}
Using the bijection between the white vertices of $B$ and the vertices of $P=(B, \pi)$, we recognize the dependence on $H_C$ as the function $P(\{H_C\})$,
\begin{equation} \label{ContracteBubblePolynomial}
P(\{\Phi_C\}) = \sum_{\substack{(i^{(c)}_v)_{c\in C_v}\\(j^{(c)}_v)_{c\in C_v}}} \delta^{\tau^{(1)}\dotsb \tau^{(d)}, \pi}_{(i^{(c)}_v)_{c\in C_v},(j^{(c)}_v)_{c\in C_v}}\ \prod_{v=1}^n \bigl(\Phi_{C_v}\bigr)_{(i^{(c)}_v), (j^{(c)}_{v})}
\end{equation}
\end{proof}

\subsection{Multi-matrix model and expectation values}

Let $\{P_i\}_{i\in I}$ be a finite set of contracted bubbles and denote $\mathcal{P} = \{(P_i, t_i, s_i)\}_{i\in I}$ and
\begin{equation}
V_{N, \mathcal{P}}(\{\Phi_C\}) = \sum_{i\in I} N^{s_i} t_i\,P_i(\{\Phi_C\}).
\end{equation}
We then define the partition function, for pairs of matrices $\{X_C, \Phi_C\}_{C\subset \{1, \dotsc, d\}}$,
\begin{equation} \label{MatrixModel}
Z_{\MM}(N, \mathcal{P}) = \int \prod_{C\in \{1, \dotsc, d\}} dX_C d\Phi_C\ \exp -\sum_{C\subset \{1, \dotsc, d\}}\tr_{E_C} \bigl(X_C \Phi_C\bigr) + V_{N, \mathcal{P}}(\{\Phi_C\}) - \tr_{\otimes_c E_c} \ln \Bigl( \mathbbm{1} - N^{-(d-1)}\sum_C \tilde{X}_C\Bigr)
\end{equation}
where $\tilde{X}_C = \mathbbm{1}_{V_{\widetilde{C}}} \otimes X_C$ is the lift of $X_C$ to $\otimes_{c=1}^d E_c$ by adding the identity to the colors $c\not\in C$. Here $\MM$ stands for ``multi-matrix''. In order to proceed to the Feynman expansion, the above logarithm has to be expanded as
\begin{equation} \label{LogExpansion}
\begin{aligned}
-\tr_{\otimes_c E_c} \ln \Bigl( \mathbbm{1} - N^{-(d-1)}\sum_C \tilde{X}_C\Bigr) &= \sum_{n\geq 1} \frac{N^{-(d-1)n}}{n}\ \tr_{\otimes_c E_c} \Bigl(\sum_C \tilde{X}_C\Bigr)^n\\
&= \sum_{\text{words $w=C_1 \dotsm C_n$}} \frac{N^{-(d-1)n}}{n} \tr_{\otimes_c E_c} \tilde{X}_{C_1} \dotsm \tilde{X}_{C_n}
\end{aligned}
\end{equation}

There are two possibilities for the integral over $X_C, \Phi_C$, for each $C\subset \{1, \dotsc, d\}$, and the precise form of the Feynman expansion depends on those choices.
\begin{itemize}
\item $X_C= \Phi_C^\dagger$ are complex matrices, adjoint to each other. 
\item $\Phi_C$ is Hermitian and $X_C =-i Y_C$ where $Y_C$ is Hermitian. In this case, one needs the coupling constant of the quartic bubble $Q_{C}$ to be negative. We write it $-t_C/2$ with $t_C> 0$.
\end{itemize}

The equivalence between those two choices is not a given {\it a priori} because they require different Feynman expansions. In the first case, one uses the quadratic term $\tr_{E_C} X_C\Phi_C$ to define the propagator. In the second case however, one cannot use this term since it reads $-i\tr_{E_C}Y_C \Phi_C$ in terms of Hermitian matrices, and this is not positive-definite. This is the reason why we need to add the condition $ t_C> 0 $. The following lemma proves the equivalence we need.

\begin{lemma} \label{thm:ComplexVsHermitian}
For positive coupling constants $t, \tau$, and a potential $U$ which is a series in two variables, the following equality holds formally 
\begin{equation}
\int_{\mathbbm{M}_N(\mathbbm{C})} dZ dZ^\dagger\ e^{-\tr ZZ^\dagger - \frac{t}{2} \tr Z^{2} - \frac{\tau}{2} \tr Z^{\dagger 2} + U(Z,Z^\dagger)} = \int_{\mathcal{H}_N^2} dY d\Phi\ e^{i\tr Y\Phi - \frac{t}{2} \tr \Phi^2 - \frac{\tau}{2} \tr Y^2 + U(\Phi, -iY)}.
\end{equation}
Here $\mathbbm{M}_N(\mathbbm{C})$ is the set of complex $N\times N$ matrices and $\mathcal{H}_N$ the set of $N\times N$ Hermitian matrices.
\end{lemma}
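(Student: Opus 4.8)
We sketch the strategy. The plan is to regard both sides as formal Gaussian integrals --- expand $e^{U}$, integrate each resulting monomial against the relevant quadratic weight, and resum --- and to match the two expansions term by term. The first task is to pin down the two reference (Gaussian) measures. On the left the form $\tr ZZ^\dagger+\tfrac t2\tr Z^2+\tfrac\tau2\tr Z^{\dagger2}$ is not positive on $\mathbb M_N(\mathbb C)$, but one is free to keep the ``anomalous'' pieces $\tfrac t2\tr Z^2$ and $\tfrac\tau2\tr Z^{\dagger2}$ inside the potential, so that the reference measure is the genuine complex Gaussian $e^{-\tr ZZ^\dagger}$, with the usual Wick pairing $\langle Z_{ab}(Z^\dagger)_{cd}\rangle=\delta_{ad}\delta_{bc}$ and $\langle Z_{ab}Z_{cd}\rangle=\langle(Z^\dagger)_{ab}(Z^\dagger)_{cd}\rangle=0$. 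On the right the form $i\tr Y\Phi-\tfrac t2\tr\Phi^2-\tfrac\tau2\tr Y^2$ is oscillatory; I would give it meaning in the standard formal way (for instance treating $i\tr Y\Phi$ as the reference and the rest as potential, or by analytic continuation in $t,\tau$) and record that Wick's theorem still applies, with the propagator equal to the (analytically continued) inverse of the quadratic form. The $U=0$ term is then the ratio of the two Gaussian partition functions, evaluated by the same Gaussian integral; it only fixes a compatible normalisation of the measures $dZ\,dZ^\dagger$ and $dY\,d\Phi$.

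Granting this, the proof reduces to a single elementary point. For every $n$, $U(Z,Z^\dagger)^n$ and $U(\Phi,-iY)^n$ are the same polynomial in two non-commuting matrix arguments, evaluated at $(Z,Z^\dagger)$ on one side and at $(\Phi,-iY)$ on the other; hence every interaction vertex of the two Feynman expansions carries the same weight. It therefore suffices to check that the Gaussian two-point functions agree under the substitution $Z\mapsto\Phi$, $Z^\dagger\mapsto -iY$: that $\langle Z Z\rangle$, $\langle Z^\dagger Z^\dagger\rangle$, $\langle Z Z^\dagger\rangle$ computed from the left-hand quadratic form coincide with $\langle\Phi\Phi\rangle$, $\langle(-iY)(-iY)\rangle$, $\langle\Phi(-iY)\rangle$ computed from the right-hand one. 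This is a finite linear-algebra verification --- for each fixed pair of row/column labels the two quadratic forms live in a $2$-dimensional ``flavour'' space, and one inverts the two $2\times2$ matrices and carries the factor $-i$ through the substitution --- and it is the matrix avatar of the corresponding one-variable Gaussian identity, which is checked by completing the square. Once the propagators match, Wick's theorem makes the $n$-th terms of the two expansions identical as sums over pairings, and summation over $n$ gives the lemma. An equivalent route, probably the one closest to the usual intermediate-field manipulations, is to expand $e^{U}$ on the right and then perform the Gaussian integral over $Y$ --- a Hubbard--Stratonovich step run backwards --- which amounts to exactly the same computation.

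The only genuinely delicate point --- and the step I would expect to need the most care when written out --- is making the oscillatory Gaussian on the right-hand side rigorous as a formal object: one must justify that the contour shift / analytic continuation used to evaluate $\int dY\,d\Phi\,e^{i\tr Y\Phi-\ldots}$ commutes with the infinite expansion in the coefficients of $U$, and that the propagator it produces is indeed the inverse of the quadratic form. Everything else is the routine bookkeeping of Wick contractions, trivialised by the fact that the substitution $(Z,Z^\dagger)\mapsto(\Phi,-iY)$ leaves $U$ unchanged.
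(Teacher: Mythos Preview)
Your strategy is correct and is essentially the paper's: expand $e^U$, reduce to matching the effective Gaussian two-point functions on the two sides, and conclude by Wick's theorem. The paper carries this out by writing explicit Feynman rules and summing the chains of bivalent vertices on each side into a common set of ``dressed'' propagators $\frac{\tau}{t\tau+1}$, $\frac{-t}{t\tau+1}$, $\frac{1}{t\tau+1}$; your $2\times 2$ matrix inversion is the closed-form version of the same geometric series.

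The one place where the paper's execution is cleaner than yours is exactly the ``delicate point'' you flag. On the Hermitian side the paper does \emph{not} take $i\tr Y\Phi$ as the reference measure, nor does it analytically continue: it uses the genuinely positive-definite $-\tfrac{t}{2}\tr\Phi^2-\tfrac{\tau}{2}\tr Y^2$ as the Gaussian (this is where $t,\tau>0$ is used) and treats $i\tr Y\Phi$ as a bivalent interaction vertex of weight $i$. The chain sums then become ordinary geometric series in $i^2/(t\tau)$, and no oscillatory integral or contour shift is ever needed. With that choice your worry about commuting analytic continuation with the expansion in $U$ disappears, and the argument is purely formal-perturbative on both sides.
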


\begin{proof}
\begin{figure}
\includegraphics[scale=.5]{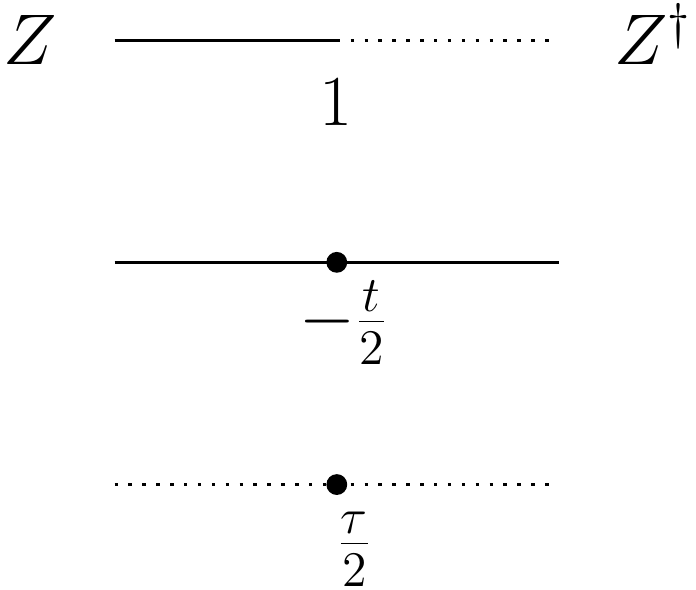} \hspace{2cm} \includegraphics[scale=.5]{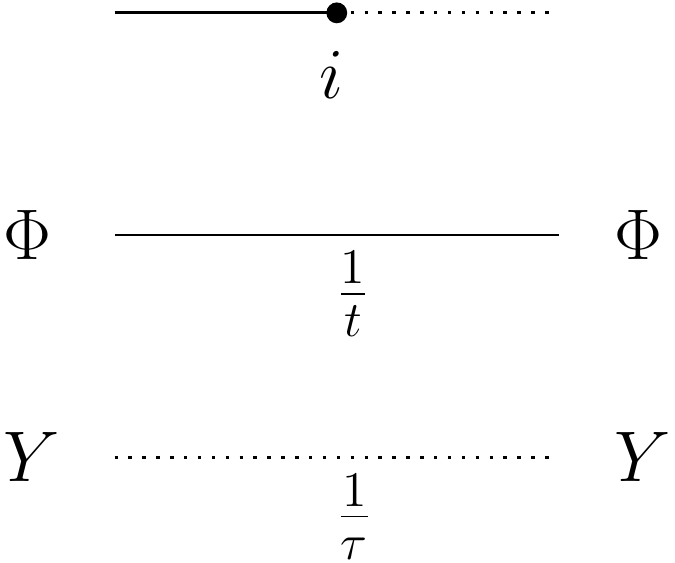}
\caption{\label{fig:FeynmanRules} On the left hand side are the Feynman rules for the complex model, with one propagator and two types of bivalent vertices. On the right hand side are the Feynman rules for the Hermitian model, with two types of propagators and one type of bivalent vertices.}
\end{figure}

The Feynman rules of the left hand side and right hand side, for propagators and bivalent vertices, are in Figure \ref{fig:FeynmanRules}. The Feynman expansion of the left hand side is obtained from the expansion
\begin{equation}
\sum_{l, n, m} \frac{(-t/2)^n (-\tau/2)^m}{l!n!m!} \int_{\mathbbm{M}_N(\mathbbm{C})} dZ dZ^\dagger\ e^{-\tr ZZ^\dagger} \Bigl(\tr Z^{2}\Bigr)^n \Bigl(\tr Z^{\dagger 2}\Bigr)^m U(Z,Z^\dagger)^l
\end{equation}
and performing Wick contractions between $Z$s and $Z^\dagger$s. The Feynman rules are thus
\begin{itemize}
\item Solid half-edges corresponding to the matrix $Z$ and dotted half-edges corresponding to $Z^\dagger$.
\item The propagator, coming from the quadratic term $-\tr ZZ^\dagger$, gives rise to edges which have a solid half and a dotted half, with weight 1.
\item Special vertices of degree 2 with weight $-\frac{t}{2}$ with two incident solid half-edges.
\item Special vertices of degree 2 with weight $-\frac{\tau}{2}$ with two incident dotted half-edges.
\item Other vertices coming from the series expansion of $U(Z, Z^\dagger)$.
\end{itemize}
We call the set of graphs from this expansion $\mathcal{G}_{\text{complex}}$.

The Feynman expansion of the right hand side is obtained from the expansion
\begin{equation}
\sum_{l, p} \frac{i^p}{l!p!} \int_{\mathcal{H}_N^2} dY d\Phi\ e^{- \frac{t}{2} \tr \Phi^2 - \frac{\tau}{2} \tr Y^2} \Bigl(\tr \Phi Y\Bigr)^p U(\Phi, -iY)^l
\end{equation}
and performing independent Wick contractions between pairs of $\Phi$s and between pairs of $Y$s.
\begin{itemize}
\item Solid half-edges corresponding to the matrix $\Phi$ and dotted half-edges corresponding to $X$.
\item Propagators, coming from the quadratic terms $ - \frac{t}{2} \tr \Phi^2 - \frac{\tau}{2} \tr Y^2$, give rise to two types of edges: either two solid half-edges, with weight $1/t$, or two dotted half-edges, with weight $1/\tau$.
\item Special vertices of degree 2 with weight $i$ with an incident solid half-edge and an incident dotted half-edge.
\item Other vertices coming from the series expansion of $U(\Phi, -iY)$.
\end{itemize}
We call the set of graphs from this expansion $\mathcal{G}_{\text{Hermitian}}$.

We show that summing the chains of bivalent vertices in both $\mathcal{G}_{\text{complex}}$ and $\mathcal{G}_{\text{Hermitian}}$ leads to the same new set of rules, for a set of graphs we denote $\mathcal{G}_{\text{summed}}$. These graphs are defined as follows.
\begin{itemize}
\item The have solid and dotted half-edges, and three types of edges: fully solid edges with weight $\tau(t\tau+1)$, fully dotted edges with weight $-t/(t\tau+1)$ and edges made of a solid and a dotted half-edge with weight $1/(t\tau+1)$.
\item Other vertices coming from the series expansion of $U$.
\end{itemize}
In the expansion of $U$, the solid half-edges are associated to the first variable ($Z$ or $\Phi$) and the dotted half-edges to the second variable ($Z^\dagger$ or $-iY$).

The sum of bivalent chains starting and ending on solid half-edges in $\mathcal{G}_{\text{complex}}$ gives
\begin{equation}
{}^{a}_b \begin{array}{c} \includegraphics[scale=.4]{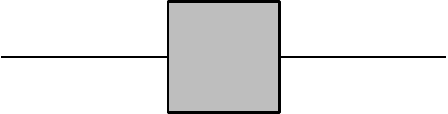} \end{array} {}^{a'}_{b'} = \sum_{n\geq 0} \begin{array}{c} \includegraphics[scale=.6]{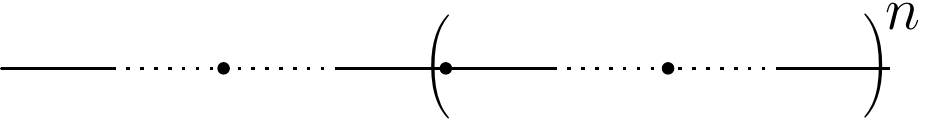} \end{array} = \frac{\tau}{t\tau+1} \delta_{aa'} \delta_{bb'}
\end{equation}
Here the indices $a, b, a', b'$ are the matrix indices which are identified along Wick contractions. Notice that each vertex contributes to either $2\times (-t/2)$ or $2\times \tau/2$ where the extra factors of 2 comes from the two possibilities to add the bivalent vertices, since they are symmetric under the exchange of their incident half-edges. The sum of bivalent chains starting and ending on dotted half-edges is obtained by exchanging $\tau$ with $-t$,
\begin{equation}
{}^{a}_b\begin{array}{c} \includegraphics[scale=.4]{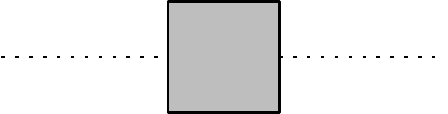} \end{array}{}^{a'}_{b'} = \sum_{n\geq 0} \begin{array}{c} \includegraphics[scale=.6]{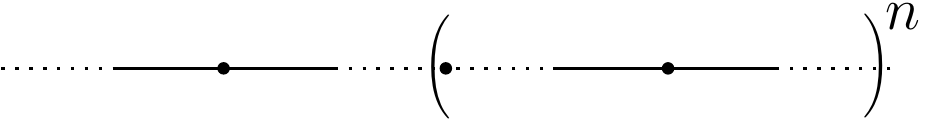} \end{array} = \frac{-t}{t\tau+1} \delta_{aa'} \delta_{bb'}
\end{equation}
The last sum of bivalent chains is between a solid half-edge and a dotted half-edge
\begin{equation}
{}^{a}_b\begin{array}{c} \includegraphics[scale=.4]{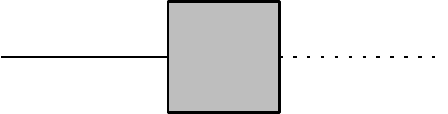} \end{array}{}^{a'}_{b'} = \sum_{n\geq 0} \begin{array}{c} \includegraphics[scale=.6]{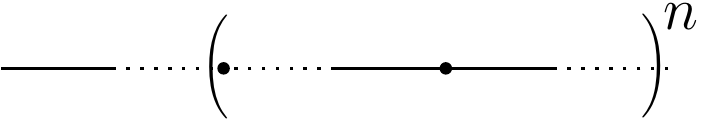} \end{array} = \frac{1}{t\tau+1} \delta_{aa'} \delta_{bb'}
\end{equation}
These are indeed the rules for $\mathcal{G}_{\text{summed}}$.

Performing the same operation in $\mathcal{G}_{\text{Hermitian}}$, one gets
\begin{gather} \label{PhiPhiChain}
{}^{a}_b\begin{array}{c} \includegraphics[scale=.4]{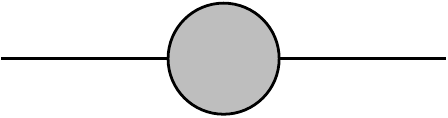} \end{array}{}^{a'}_{b'} = \sum_{n\geq 0} \begin{array}{c} \includegraphics[scale=.6]{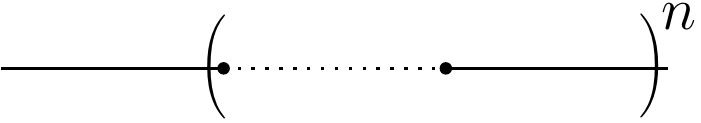} \end{array} = \frac{1}{t}\sum_{n\geq 0} \biggl(\frac{i^2}{t\tau}\biggr)^n = \frac{\tau}{t\tau+1} \delta_{aa'} \delta_{bb'}\\
\label{XXChain}
{}^{a}_b\begin{array}{c} \includegraphics[scale=.4]{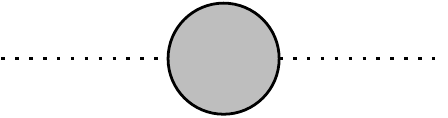} \end{array}{}^{a'}_{b'} = \sum_{n\geq 0} \begin{array}{c} \includegraphics[scale=.6]{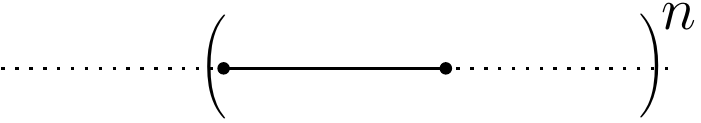} \end{array} = \frac{1}{\tau}\sum_{n\geq 0} \biggl(\frac{i^2}{t\tau}\biggr)^n = \frac{t}{t\tau+1}\delta_{aa'} \delta_{bb'} \\
\label{XPhiChain}
{}^{a}_b\begin{array}{c} \includegraphics[scale=.4]{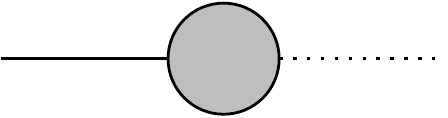} \end{array}{}^{a'}_{b'} = \sum_{n\geq 0} \begin{array}{c} \includegraphics[scale=.6]{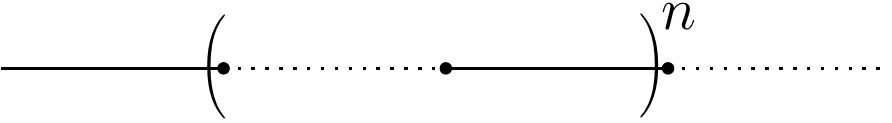} \end{array} = \frac{i}{t\tau}\sum_{n\geq 0} \biggl(\frac{i^2}{t\tau}\biggr)^n = \frac{i}{t\tau+1} \delta_{aa'} \delta_{bb'}
\end{gather}
These are not exactly the expected rules, but the difference is compensated by the other vertices. Indeed, in $\mathcal{G}_{\text{Hermitian}}$, the other vertices come from $U(\Phi, -iY)$, i.e. there is a factor $-i$ on each dotted half-edge incident to such a vertex. Those factors can be re-absorbed so that the weight of the vertices really come from $U(\Phi, Y)$, by multiplying each dotted half-edge in \eqref{PhiPhiChain}, \eqref{XXChain} and \eqref{XPhiChain} by $-i$. This turns those rules into those of $\mathcal{G}_{\text{summed}}$.
\end{proof}

Let $O(\{X_C, \Phi_C\})$ a $U(N)^d$-invariant function (for the simultaneous action \eqref{MatrixU(N)} on $X_C$s and $\Phi_C$s). Its expectation is
\begin{multline} \label{MatrixExpectation}
\langle O(\{X_C, \Phi_C\}) \rangle_{\mathcal{P}} = \frac{1}{Z_{\MM}(N,\mathcal{P})} \int \prod_{C\in \{1, \dotsc, d\}} dX_C d\Phi_C\ O(\{X_C, \Phi_C\})\\
\times \exp -\sum_{C\subset \{1, \dotsc, d\}}\tr_{E_C} \bigl(X_C \Phi_C\bigr) + V_{N, \mathcal{P}}(\{\Phi_C\}) - \tr_{\otimes_c E_c} \ln \Bigl( \mathbbm{1} - N^{-(d-1)}\sum_C \tilde{X}_C\Bigr)
\end{multline}

\begin{lemma} \label{thm:IntermediateField}
Let $f$ be a series which takes as arguments $N\times N$ matrices labeled by the subsets of $\{1, \dotsc, d\}$. Then
\begin{equation} \label{IntermediateField}
\langle f(\{H_C(T, \overline{T})\})\rangle_{V_{N,\mathcal{B}}=0} = \langle f(\{\Phi_C\})\rangle_{V_{N, \mathcal{P}}=0}.
\end{equation}
\end{lemma}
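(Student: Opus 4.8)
My plan is to recognize the multi-matrix integral \eqref{MatrixModel} with $V_{N,\mathcal{P}}=0$ as the intermediate-field (Hubbard--Stratonovich) transform of the Gaussian tensor integral with the observable $f(\{H_C(T,\overline{T})\})$ inserted, and to carry out that transformation explicitly. I will work with the Hermitian representation $X_C=-iY_C$ of \eqref{MatrixModel}; the complex representation $X_C=\Phi_C^\dagger$ yields the same formal series and is handled by the combinatorial correspondence of \cite{StuffedWalshMaps}. Since $Z_{\Tensor}(N,\emptyset)=1$ by \eqref{TensorNormalization}, the left-hand side of \eqref{IntermediateField} is just the integral $\int dT\,d\overline{T}\ e^{-N^{d-1}T\cdot\overline{T}}\,f(\{H_C(T,\overline{T})\})$, with no denominator.

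First I would insert the Fourier representation of the intermediate fields. Each $H_C(T,\overline{T})$ is Hermitian (immediate from its definition as a contraction of $T$ with $\overline{T}$), so for every $C\subset\{1,\dots,d\}$ one has $1=\int d\Phi_C\,\delta(\Phi_C-H_C(T,\overline{T}))$, with the Dirac delta over Hermitian matrices represented as $\delta(\Phi_C-H_C(T,\overline{T}))\propto\int dY_C\,e^{\,i\,\tr_{E_C}(Y_C(\Phi_C-H_C(T,\overline{T})))}$ (the Fourier constant absorbed into $dY_C$). Plugging this in for all $C$ and using the deltas to replace every $H_C$ inside $f$ by $\Phi_C$, the left-hand side becomes $\int dT\,d\overline{T}\prod_C d\Phi_C\,dY_C\ f(\{\Phi_C\})\,\exp\!\big(-N^{d-1}T\cdot\overline{T}+i\sum_C\tr_{E_C}(Y_C\Phi_C)-i\sum_C\tr_{E_C}(Y_C H_C(T,\overline{T}))\big)$.

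Next I would integrate out $T,\overline{T}$. The algebraic identity I need --- immediate from the definition of $H_C$ as the contraction of $T,\overline{T}$ along the colors of $\widehat{C}$ and of the lift $\tilde{Y}_C$ (the matrix $Y_C$ tensored with the identity on the colors $c\notin C$, as $\tilde{X}_C$ in \eqref{MatrixModel}) --- is $\sum_C\tr_{E_C}(Y_C H_C(T,\overline{T}))=T\cdot\big(\sum_C\tilde{Y}_C\big)\cdot\overline{T}$. The $T,\overline{T}$-integral is then Gaussian with quadratic form $N^{d-1}\mathbbm{1}+i\sum_C\tilde{Y}_C$ on $\bigotimes_c E_c$, and with the normalization \eqref{TensorNormalization} it equals $\det\!\big(\mathbbm{1}+iN^{-(d-1)}\sum_C\tilde{Y}_C\big)^{-1}=\exp\!\big(-\tr_{\otimes_c E_c}\ln(\mathbbm{1}+iN^{-(d-1)}\sum_C\tilde{Y}_C)\big)$. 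What is left is $\int\prod_C d\Phi_C\,dY_C\ f(\{\Phi_C\})\,\exp\!\big(i\sum_C\tr_{E_C}(Y_C\Phi_C)-\tr_{\otimes_c E_c}\ln(\mathbbm{1}+iN^{-(d-1)}\sum_C\tilde{Y}_C)\big)$, which is exactly the integral appearing in \eqref{MatrixExpectation} with $X_C=-iY_C$ and $V_{N,\mathcal{P}}=0$, since then $-\sum_C\tr_{E_C}(X_C\Phi_C)=i\sum_C\tr_{E_C}(Y_C\Phi_C)$ and $-\tr_{\otimes_c E_c}\ln(\mathbbm{1}-N^{-(d-1)}\sum_C\tilde{X}_C)=-\tr_{\otimes_c E_c}\ln(\mathbbm{1}+iN^{-(d-1)}\sum_C\tilde{Y}_C)$. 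Hence the left-hand side of \eqref{IntermediateField} equals $\kappa$ times this matrix integral, with $\kappa$ independent of $f$; evaluating at $f\equiv1$, where the left-hand side is $1$ by \eqref{TensorNormalization}, fixes $\kappa=1/Z_{\MM}(N,\emptyset)$, so the matrix integral divided by $Z_{\MM}(N,\emptyset)$ equals $\langle f(\{\Phi_C\})\rangle_{V_{N,\mathcal{P}}=0}$, which is \eqref{IntermediateField}.

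The part requiring care is that all of this is to be read in the formal, Feynman-series sense of Section~\ref{sec:PartitionFunction}: one expands $f$ as a power series and the logarithm as in \eqref{LogExpansion}, so that every integral is a Gaussian moment evaluated by Wick's theorem, and the ``insertion of deltas'' is a shorthand for the ensuing identification of the Wick pairings of the Gaussian tensor model with the Feynman graphs of \eqref{MatrixModel} at $V=0$ --- this combinatorial correspondence is the real content, and is precisely the one established in \cite{StuffedWalshMaps}. For a monomial $f=\prod_v(\Phi_{C_v})_{(i_v),(j_v)}$ the statement reduces to the finite identity $\langle\prod_v(H_{C_v}(T,\overline{T}))_{(i_v),(j_v)}\rangle_{V_{N,\mathcal{B}}=0}=\langle\prod_v(\Phi_{C_v})_{(i_v),(j_v)}\rangle_{V_{N,\mathcal{P}}=0}$, which can also be checked by matching the two Wick expansions directly; and one should confirm that the implicit normalizations of the measures $dY_C\,d\Phi_C$ are mutually consistent, though this cancels in the normalized expectations (it is absorbed into $\kappa$).
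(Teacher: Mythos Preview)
Your argument is essentially the same intermediate-field manipulation the paper uses: insert auxiliary matrices to trade $H_C(T,\overline{T})$ for $\Phi_C$, then perform the Gaussian $T,\overline{T}$ integral to produce the logarithm. The difference is in the representation you choose. The paper works primarily with the complex variables $X_C=\Phi_C^\dagger$, where the identity
\[
f(h_1,\dotsc,h_m)=\int\prod_l dx_l\,d\phi_l\ e^{\sum_l(-x_l\phi_l+x_lh_l)}\,f(\phi_1,\dotsc,\phi_m),\qquad \overline{x}_l=\phi_l,
\]
holds directly by Wick's theorem, and only afterwards invokes Lemma~\ref{thm:ComplexVsHermitian} to pass to the Hermitian pair $(Y_C,\Phi_C)$. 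You instead start with the Hermitian Fourier representation $\delta(\Phi_C-H_C)\propto\int dY_C\,e^{i\tr Y_C(\Phi_C-H_C)}$.

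This is not wrong, but be aware that in the paper's own framework the Hermitian choice at $V_{N,\mathcal{P}}=0$ is delicate: as noted just before Lemma~\ref{thm:ComplexVsHermitian}, the bilinear $-i\tr Y_C\Phi_C$ is not positive-definite and so cannot serve as a propagator for a Feynman expansion. The paper explicitly says that to use the Hermitian variables one must retain the quartic melonic terms $-\tfrac{t_C}{2}\tr\Phi_C^2$ (and the corresponding $\tau$ from the log expansion) so that Lemma~\ref{thm:ComplexVsHermitian} applies. Your final paragraph correctly flags that the ``insertion of deltas'' is a formal shorthand and that the real content is the combinatorial matching of Wick pairings from \cite{StuffedWalshMaps}; but if you want the argument to stand on its own in the Hermitian picture, you should either (i) switch to the complex representation for the derivation, as the paper does, or (ii) keep the quadratic terms $t_C,\tau_C$ explicit throughout and invoke Lemma~\ref{thm:ComplexVsHermitian} at the end. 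Otherwise the oscillatory integral you write has no perturbative meaning in the sense of Section~\ref{sec:PartitionFunction}.
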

It was proved in \cite{StuffedWalshMaps}, both using a bijection between their Feynman expansions, and using formal integrals in the case of complex variables. Here we briefly reproduce the calculation using formal integrals, in order to later relate the expectations of observables on the tensor and matrix sides using the same technique.

\begin{proof}
Let us first focus on the case where the expectation on the right hand side of \eqref{IntermediateField} is evaluated using complex variables only, $\Phi_C^\dagger = X_C$. Then \eqref{IntermediateField} comes from 
\begin{equation} \label{FundamentalIntegral}
f(h_1, \dotsc, h_m) = \int \prod_{l=1}^m dx_l d\phi_l \ e^{\sum_{l=1}^m (-x_l\phi_l + x_l h_l)}\ f(\phi_1, \dotsc, \phi_m)
\end{equation}
where $\overline{x}_l = \phi_l$ for each $l$, and the integral is over $\mathbbm{C}^m$. It holds via the Wick theorem, order by order in its series expansion.

Making use of \eqref{FundamentalIntegral} with every matrix elements of $H_C, \Phi_C$ as variables, one gets
\begin{equation}
f(\{H_C(T, \overline{T})\}) = \int \prod_C d\Phi_C dX_C\ e^{\sum_C \tr_{E_C} (-X_C\Phi_C + X_C H_C(T,\overline{T}))} f(\{\Phi_C\}) 
\end{equation}
It is now possible to directly integrate the above equation over $T, \overline{T}$ with a Gaussian distribution, leading to
\begin{equation}
\int dTd\overline{T}\ f(\{H_C(T, \overline{T})\})\ e^{-N^{d-1} T\cdot \overline{T}} = \int \prod_C d\Phi_C dX_C\ f(\{\Phi_C\})\ e^{-\sum_C \tr_{E_C} X_C\Phi_C -\tr_{\bigotimes_{c=1}^d E_c}  \ln \bigl( \mathbbm{1} - N^{-(d-1)}\sum_C \tilde{X}_C\bigr)}
\end{equation}
This equality holds up to irrelevant constants. Moreover, the measure on the tensor side has been normalized. On the matrix side, the normalization is trivial when $V_{N, \mathcal{P}}=0$. This proves \eqref{IntermediateField}.

In the case one wishes to use the Hermitian $\Phi_C, Y_C$ with $X_C=-iY_C$, it is necessary to have, instead of vanishing potentials, $V_{N, \mathcal{P}}(\{\Phi_C\}) = - N^{d-1} t_C \tr_{E_C} \Phi_C^2/2$ (and in turn to have on the tensor side a quartic interaction $V_{N, \mathcal{B}}(T, \overline{T}) = -N^{d-1} t_C Q_C(T, \overline{T})/2$). Then Lemma \ref{thm:ComplexVsHermitian} can be applied to turn the integrals over the complex matrix elements to real matrix elements. The coefficient $\tau$ needed in Lemma \ref{thm:ComplexVsHermitian} comes from the expansion of the logarithm in the definition \eqref{MatrixExpectation} of the expectation.
\end{proof}

\begin{theorem} \label{thm:Expectations}
Let $\mathcal{B} = \{(B_i, t_i, s_i)\}_{i\in I}$ as in Section \ref{sec:PartitionFunction} and $\mathcal{P} = \{(P_i, t_i, s_i)\}_{i\in I}$ such that $P_i = (B_i, \pi_i)$. Then,
\begin{equation} \label{PartitionFunctionEquality}
Z_{\Tensor}(N, \mathcal{B}) = Z_{\MM}(N, \mathcal{P}).
\end{equation}
Let $f$ be a series which takes as arguments $N\times N$ matrices labeled by the subsets of $\{1, \dotsc, d\}$. Then
\begin{equation} \label{Expectations}
\langle f(\{H_C(T, \overline{T})\})\rangle_{\mathcal{B}} = \langle f(\{\Phi_C\})\rangle_{\mathcal{P}}.
\end{equation}
\end{theorem}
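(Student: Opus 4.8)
The plan is to absorb the interaction potentials into the test function of Lemma~\ref{thm:IntermediateField}. The crucial observation is that, since $P_i=(B_i,\pi_i)$, Proposition~\ref{thm:BubblePairing} gives $P_i(\{H_C(T,\overline{T})\})=B_i(T,\overline{T})$ for every $i\in I$, so that
\begin{equation}
V_{N,\mathcal{P}}(\{H_C(T,\overline{T})\}) = \sum_{i\in I} N^{s_i}t_i\,P_i(\{H_C(T,\overline{T})\}) = \sum_{i\in I} N^{s_i}t_i\,B_i(T,\overline{T}) = V_{N,\mathcal{B}}(T,\overline{T}).
\end{equation}
Consequently, for any series $f$ in the matrices $\Phi_C$, the function $f(\{H_C(T,\overline{T})\})\,e^{V_{N,\mathcal{B}}(T,\overline{T})}$ is precisely the series $g(\{\Phi_C\}):=f(\{\Phi_C\})\,e^{V_{N,\mathcal{P}}(\{\Phi_C\})}$ — itself an admissible argument for Lemma~\ref{thm:IntermediateField}, being a series times the exponential of a polynomial — evaluated at $\Phi_C=H_C(T,\overline{T})$.

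I would then feed such series into the \emph{unnormalized} form of the identity established inside the proof of Lemma~\ref{thm:IntermediateField}: for any such series $h$,
\begin{equation}
\int dT\,d\overline{T}\ h(\{H_C(T,\overline{T})\})\,e^{-N^{d-1}T\cdot\overline{T}} = \int\prod_C d\Phi_C\,dX_C\ h(\{\Phi_C\})\,e^{-\sum_C\tr_{E_C}(X_C\Phi_C)-\tr_{\otimes_c E_c}\ln(\mathbbm{1}-N^{-(d-1)}\sum_C\tilde{X}_C)},
\end{equation}
the overall constant being $1$ with the normalizations fixed in Section~\ref{sec:Definitions}. Taking $h=e^{V_{N,\mathcal{P}}}$ turns this, by the first paragraph and the definitions \eqref{TensorModel} and \eqref{MatrixModel}, into $Z_{\Tensor}(N,\mathcal{B})=Z_{\MM}(N,\mathcal{P})$, which is \eqref{PartitionFunctionEquality}. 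Taking $h=g$, the left-hand side equals $\int dT\,d\overline{T}\ f(\{H_C(T,\overline{T})\})\,e^{-N^{d-1}T\cdot\overline{T}+V_{N,\mathcal{B}}(T,\overline{T})}=Z_{\Tensor}(N,\mathcal{B})\,\langle f(\{H_C(T,\overline{T})\})\rangle_{\mathcal{B}}$, while the right-hand side equals $Z_{\MM}(N,\mathcal{P})\,\langle f(\{\Phi_C\})\rangle_{\mathcal{P}}$ by \eqref{MatrixExpectation}; dividing by \eqref{PartitionFunctionEquality} then yields \eqref{Expectations}.

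The hard part is the bookkeeping when one uses the Hermitian realization $X_C=-iY_C$ for some color sets $C$: for those, the identity of Lemma~\ref{thm:IntermediateField} was established not with a trivial weight on the left but with the quartic melonic term $-N^{d-1}t_C\,Q_C(T,\overline{T})/2$ present (correspondingly $-N^{d-1}t_C\,\tr_{E_C}\Phi_C^2/2$ on the matrix side), the change of variables of Lemma~\ref{thm:ComplexVsHermitian} absorbing it together with the auxiliary Gaussian weight that the expansion of the logarithm in \eqref{MatrixExpectation} generates. To accommodate this I would, for each such $C$, peel that quadratic piece off both $V_{N,\mathcal{B}}$ and $V_{N,\mathcal{P}}$ before forming $g$ — legitimate because Proposition~\ref{thm:BubblePairing} identifies $\tr_{E_C}\Phi_C^2$ with $Q_C(T,\overline{T})$ — keep it in the reference weight to which the lemma is applied, and absorb only the remaining interactions into $g$. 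The only other thing to check is that feeding the exponential of a polynomial into Lemma~\ref{thm:IntermediateField} is permissible, which amounts to the interchange of the sum over the powers $\{n_i\}$ of the interactions with the sum over Wick pairings — exactly the formal prescription already shared by \eqref{TensorModel} and \eqref{MatrixModel} — so that no genuine convergence issue arises.
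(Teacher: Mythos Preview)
Your proposal is correct and follows essentially the same route as the paper: one uses Proposition~\ref{thm:BubblePairing} to identify $V_{N,\mathcal{P}}(\{H_C(T,\overline{T})\})$ with $V_{N,\mathcal{B}}(T,\overline{T})$, then applies Lemma~\ref{thm:IntermediateField} with the exponential of the potential (respectively $f$ times that exponential) absorbed into the test function. The paper's proof is terser and phrases the second step as an equality of expectations with vanishing potentials rather than of unnormalized integrals, but the content is the same; your additional remarks on the Hermitian bookkeeping and on the formal interchange of sums are consistent with how the paper handles these points elsewhere.
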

Again, \eqref{PartitionFunctionEquality} was proved in \cite{StuffedWalshMaps}, both using a bijection, and using formal integrals. 

\begin{proof}
The equality between the partition functions simply derives from Lemma \ref{thm:IntermediateField} for $f(\{\Phi_C\}) = e^{V_{N, \mathcal{P}}(\{\Phi_C\})}$ and the fact that, from Proposition \ref{thm:BubblePairing}, $V_{N, \mathcal{P}}(\{H_C(T, \oT)\}) = V_{N, \mathcal{B}}(T, \oT)$.

This takes care of the denominators in \eqref{Expectations}, and the latter is then equivalent to
\begin{equation}
\langle f(\{H_C(T, \overline{T})\}) e^{V_{N, \mathcal{P}}(\{H_C(T, \overline{T})\})}\rangle_{V_{N,\mathcal{B}} =0} = \langle f(\{\Phi_C\}) e^{V_{N, \mathcal{P}}(\{\Phi_C\})} \rangle_{V_{N,\mathcal{P}}=0}.
\end{equation}
which follows from Lemma \eqref{thm:IntermediateField}.
\end{proof}

\begin{theorem} \label{thm:IF}
With the same notations as previously,
\begin{equation} \label{MatrixToTensor}
\langle P(\{X_C\}) \rangle_{\mathcal{P}} = \sum_{\substack{(i^{(c)}_v)_{c\in C_v}\\(j^{(c)}_v)_{c\in C_v}}} \delta^{\tau^{(1)}\dotsb \tau^{(d)}, \pi}_{(i^{(c)}_v)_{c\in C_v},(j^{(c)}_v)_{c\in C_v}}\ \Big\langle e^{-V_{N, \mathcal{B}}(T, \oT)} \prod_{v=1}^n \frac{-\partial}{\partial \bigl(\Phi_{C_v}\bigr)_{(j^{(c)}_v), (i^{(c)}_v)}} \biggl(e^{V_{N, \mathcal{P}}}\biggr)_{|\Phi_C = H_C(T, \oT)} \Big\rangle_{\mathcal{B}}.
\end{equation}
\end{theorem}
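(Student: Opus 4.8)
The plan is to begin from the matrix-side expectation \eqref{MatrixExpectation} of $P(\{X_C\})$, trade each matrix $X_C$ for a derivative acting on its conjugate $\Phi_C$, integrate by parts, and then recognise a tensor-side expectation through Lemma~\ref{thm:IntermediateField} and Theorem~\ref{thm:Expectations}.

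First I would expand the observable into matrix elements: by \eqref{ContracteBubblePolynomial} with $X_C$ in place of $\Phi_C$, $P(\{X_C\})$ equals the sum over the index tuples $(i^{(c)}_v),(j^{(c)}_v)$ of $\delta^{\tau^{(1)}\dotsb\tau^{(d)},\pi}_{(i^{(c)}_v),(j^{(c)}_v)}\prod_{v=1}^n(X_{C_v})_{(i^{(c)}_v),(j^{(c)}_v)}$. In the matrix model \eqref{MatrixModel} the matrices $X_C$ occur only through the Gaussian pairing $\sum_C \tr_{E_C}(X_C\Phi_C)$ and the intermediate-field logarithm $-\tr_{\otimes_c E_c}\ln(\mathbbm{1}-N^{-(d-1)}\sum_C\tilde{X}_C)$. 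The elementary identity $(X_C)_{(i),(j)}\,e^{-\tr_{E_C}(X_C\Phi_C)} = -\tfrac{\partial}{\partial(\Phi_C)_{(j),(i)}}\,e^{-\tr_{E_C}(X_C\Phi_C)}$, applied to each of the $n$ factors, replaces the insertion of $\prod_v (X_{C_v})_{(i^{(c)}_v),(j^{(c)}_v)}$ in the integrand by the differential operator $\prod_v \tfrac{-\partial}{\partial(\Phi_{C_v})_{(j^{(c)}_v),(i^{(c)}_v)}}$ acting on $e^{-\sum_C\tr_{E_C}(X_C\Phi_C)}$.

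Next I would integrate by parts in the entries of the $\Phi_C$'s, a manipulation which is legitimate term by term in the Wick expansion defining the formal integral. Because the logarithmic term depends only on the $X_C$'s it is inert under these $\Phi$-derivatives, which therefore transfer onto $e^{V_{N,\mathcal{P}}(\{\Phi_C\})}$; collecting signs, one reaches
\begin{equation}
\langle P(\{X_C\})\rangle_{\mathcal{P}} = \frac{1}{Z_{\MM}(N,\mathcal{P})} \int \prod_C dX_C\, d\Phi_C\ \tilde{F}(\{\Phi_C\})\ e^{-\sum_C\tr_{E_C}(X_C\Phi_C) - \tr_{\otimes_c E_c}\ln(\mathbbm{1}-N^{-(d-1)}\sum_C\tilde{X}_C)}\,,
\end{equation}
where $\tilde{F}(\{\Phi_C\})=\sum_{(i^{(c)}_v),(j^{(c)}_v)} \delta^{\tau^{(1)}\dotsb\tau^{(d)},\pi}_{(i^{(c)}_v),(j^{(c)}_v)}\ \prod_v \tfrac{-\partial}{\partial(\Phi_{C_v})_{(j^{(c)}_v),(i^{(c)}_v)}}\,e^{V_{N,\mathcal{P}}(\{\Phi_C\})}$ is precisely the object inside the expectation on the right-hand side of \eqref{MatrixToTensor} before the substitution $\Phi_C=H_C(T,\oT)$; crucially, the weight no longer contains $V_{N,\mathcal{P}}$, the interactions having been differentiated into $\tilde{F}$. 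Since $\tilde{F}$ is a formal power series in the entries of the matrices $\{\Phi_C\}$, Lemma~\ref{thm:IntermediateField} applies — its two sides being exactly the normalised tensor and matrix integrals with vanishing potentials, and $Z_{\MM}(N,\emptyset)=Z_{\Tensor}(N,\emptyset)=1$ thanks to \eqref{TensorNormalization} together with the fundamental integral used in the proof of that lemma — so the matrix integral above equals $\int dT\,d\oT\ \tilde{F}(\{H_C(T,\oT)\})\,e^{-N^{d-1}T\cdot\oT}$. Finally, using $Z_{\MM}(N,\mathcal{P})=Z_{\Tensor}(N,\mathcal{B})$ from \eqref{PartitionFunctionEquality} and inserting $1=e^{V_{N,\mathcal{B}}(T,\oT)}e^{-V_{N,\mathcal{B}}(T,\oT)}$ in the tensor integral, the right-hand side becomes $\bigl\langle e^{-V_{N,\mathcal{B}}(T,\oT)}\,\tilde{F}(\{H_C(T,\oT)\})\bigr\rangle_{\mathcal{B}}$, i.e. \eqref{MatrixToTensor}.

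The point requiring care is not any individual step but that the whole computation takes place at the level of formal (Wick/Feynman) series: one must check that the integration by parts is valid order by order and that Lemma~\ref{thm:IntermediateField} can indeed be applied to the series $\tilde{F}$, and in particular that the intermediate-field term $-\tr_{\otimes_c E_c}\ln(\mathbbm{1}-N^{-(d-1)}\sum_C\tilde{X}_C)$, which is what encodes the tensorial structure, is carried along untouched by the $\Phi$-derivatives so that it can be repackaged through that lemma. The remaining work is bookkeeping between the four partition functions $Z_{\Tensor}(N,\mathcal{B})$, $Z_{\Tensor}(N,\emptyset)$, $Z_{\MM}(N,\mathcal{P})$ and $Z_{\MM}(N,\emptyset)$; the signs in the first two steps are routine and have been written so as to match \eqref{MatrixToTensor}.
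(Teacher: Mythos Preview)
Your proof is correct and follows essentially the same route as the paper's, only run in the opposite direction: the paper starts from the tensor-side expression on the right of \eqref{MatrixToTensor}, uses Theorem~\ref{thm:Expectations} to pass to the matrix side, and then integrates by parts to produce the product $\prod_v (X_{C_v})_{(i^{(c)}_v),(j^{(c)}_v)}$, whereas you start from $\langle P(\{X_C\})\rangle_{\mathcal P}$, replace each $X$ by a $\Phi$-derivative, integrate by parts onto $e^{V_{N,\mathcal P}}$, and then invoke Lemma~\ref{thm:IntermediateField} together with $Z_{\MM}(N,\mathcal P)=Z_{\Tensor}(N,\mathcal B)$ to land on the tensor side. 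The ingredients (the source identity $(X_C)_{(i),(j)}e^{-\tr X_C\Phi_C}=-\partial_{(\Phi_C)_{(j),(i)}}e^{-\tr X_C\Phi_C}$, integration by parts with the logarithmic term inert, and the intermediate-field lemma) are identical.
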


This theorem generalizes \cite{IntermediateT4} in two ways.
\begin{itemize}
\item First, \cite{IntermediateT4} is focused on the quartic melonic model, $V_{N, \mathcal{B}}(T, \oT) = \sum_{c=1}^d -\frac{t_c}{2} N^{d-1} \tr H_c(T, \oT)^2$. In this case, $V_{N, \mathcal{P}}(\{\Phi_C\}) = \sum_{c=1}^d -\frac{t_c}{2} N^{d-1} \tr \Phi_{\{c\}}^2$ is quadratic. The theorem thus explains the appearance of Hermite polynomials in \cite{QuarticTR}. 
\item Theorem \ref{thm:IF} presents the expectation of an arbitrary contracted bubble, while \cite{QuarticTR} only considered $\langle \tr_{E_c} X_\{c\}^n\rangle$.
\end{itemize}	

The reciprocal theorem is Theorem \ref{thm:BubbleExpectation}. It is presented later because it uses the technique of partial integration of Section \ref{sec:PartialIntegral}.

\begin{proof}
To prove \eqref{MatrixToTensor}, we apply Lemma \ref{thm:IntermediateField} to the expectation on the right hand side,
\begin{equation}
\Big\langle e^{-V_{N, \mathcal{B}}(T, \oT)} \prod_{v=1}^n \frac{-\partial}{\partial \bigl(\Phi_{C_v}\bigr)_{(j^{(c)}_v), (i^{(c)}_v)}} \biggl(e^{V_{N, \mathcal{P}}}\biggr)_{|\Phi_C = H_C(T, \oT)} \Big\rangle_{\mathcal{B}} = \Big\langle e^{-V_{N, \mathcal{P}}(\{\Phi_C\})} \prod_{v=1}^n \frac{-\partial}{\partial \bigl(\Phi_{C_v}\bigr)_{(j^{(c)}_v), (i^{(c)}_v)}} \biggl(e^{V_{N, \mathcal{P}}}\biggr) \Big\rangle_{\mathcal{P}}
\end{equation}
which can then be rewritten as
\begin{equation}
\begin{aligned}
&\Big\langle e^{-V_{N, \mathcal{P}}(\{\Phi_C\})} \prod_{v=1}^n \frac{-\partial}{\partial \bigl(\Phi_{C_v}\bigr)_{(j^{(c)}_v), (i^{(c)}_v)}} \biggl(e^{V_{N, \mathcal{P}}}\biggr) \Big\rangle_{\mathcal{P}}\\
&= \frac{1}{Z_{\MM}(N, \mathcal{P})} \int \prod_{C} dX_C d\Phi_C\ \prod_{v=1}^n \frac{-\partial}{\partial \bigl(\Phi_{C_v}\bigr)_{(j^{(c)}_v), (i^{(c)}_v)}} \biggl(e^{V_{N, \mathcal{P}}(\{\Phi_C\})}\biggr) e^{- \sum_C \tr_{E_C}\bigl(X_C \Phi_C\bigr) - \tr_{\otimes_c E_c}\ln\bigl(\mathbbm{1}-N^{-(d-1)}\sum_C \tilde{X}_C\bigr)}\\
&= \frac{1}{Z_{\MM}(N, \mathcal{P})} \int \prod_{C} dX_C d\Phi_C\ \prod_{v=1}^n \frac{\partial}{\partial \bigl(\Phi_{C_v}\bigr)_{(j^{(c)}_v), (i^{(c)}_v)}} \biggl(e^{- \sum_C \tr_{E_C}\bigl(X_C \Phi_C\bigr)}\biggr) e^{V_{N, \mathcal{P}}(\{\Phi_C\}) - \tr_{\otimes_c E_c}\ln\bigl(\mathbbm{1}-N^{-(d-1)}\sum_C \tilde{X}_C\bigr)}\\
&= \frac{1}{Z_{\MM}(N, \mathcal{P})} \int \prod_{C} dX_C d\Phi_C\ \prod_{v=1}^n \bigl(X_{C_v}\bigr)_{(i^{(c)}_v), (j^{(c)}_v)} e^{- \sum_C \tr_{E_C}\bigl(X_C \Phi_C\bigr) + V_{N, \mathcal{P}}(\{\Phi_C\}) - \tr_{\otimes_c E_c}\ln\bigl(\mathbbm{1}-N^{-(d-1)}\sum_C \tilde{X}_C\bigr)}\\
&= \big\langle \prod_{v=1}^n \bigl(X_{C_v}\bigr)_{(i^{(c)}_v), (j^{(c)}_v)} \big\rangle
\end{aligned}
\end{equation}
In the second equality, we have used integration by parts. Summing over all indices with the tensor $\delta^{\tau^{(1)} \dotsb \tau^{(d)}, \pi}$, one recognizes the expansion \eqref{ContracteBubblePolynomial} of $P(\{X_C\})$.
\end{proof}

\section{Effective matrix model} \label{sec:Effective}

Denote $X_c = X_{\{c\}}$ for $c=1, \dotsc, d$ to make the notation lighter. We will show that correlators of the form
\begin{equation}
\langle \tr_{V_{c_1}} \frac{1}{x_{c_1} - \delta X_{c_1}}\ \dotsm\ \tr_{V_{c_n}} \frac{1}{x_{c_n} - \delta X_{c_n}} \rangle_{\text{conn}}
\end{equation}
where $\delta X_c$ is some fluctuation of $X_c$ around its saddle point value, satisfy the blobbed topological recursion. To this aim, we will integrate over all matrices except for $X_1, \dotsc, X_d$ to get an effective matrix model for them.

This effective matrix model has multi-trace interactions. It is convenient to write them using partitions. Given a set of integers $\lambda_1 \geq \dotsb\geq \lambda_l>0$, we say that $\lambda = (\lambda_1, \dotsc, \lambda_l)$ is a partition of length $\ell(\lambda)=l$ and size $|\lambda| = \sum_{i=1}^{\ell{\lambda)}} \lambda_i$. We also allow the special case $\lambda = (0)$ with $\ell(\lambda)=1$. If $M$ is a matrix on $V$, we denote the multi-trace (also known as power sums)
\begin{equation}
I_\lambda(X) = \prod_{i=1}^{\ell(\lambda)} \tr_V X^{\lambda_i}
\end{equation}
Furthermore, we denote $\vlam = (\lambda^{(1)}, \dotsc, \lambda^{(d)})$ a vector of partitions, one for every color, and
\begin{equation}
I_{\vlam}(\{X_c\}) = \prod_{c=1}^d I_{\lambda^{(c)}}(X_c)
\end{equation}
It contains $\ell(\lambda^{(c)})$ traces of color $c$, for a total of $\ell(\vlam) = \sum_{c=1} \ell(\lambda^{(c)})$ traces. It is of total degree $|\vlam| = \sum_{c=1} |\lambda^{(c)}|$ in the matrices.

\begin{proposition} \label{thm:MultiTrace}
A series $F$ in the matrices $X_1, \dotsc, X_d$ which is invariant under $U(N)^d$ has an expansion
\begin{equation}
F(\{X_c\}) = \sum_{\vlam} F(\vlam) I_{\vlam}(\{X_c\}).
\end{equation}
\end{proposition}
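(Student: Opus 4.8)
The plan is to reduce the statement to the classical first fundamental theorem of invariant theory for a single matrix under conjugation, and then to handle the $d$ colors one at a time. Recall that the $U(N)^d$ action is $X_c \mapsto U^{(c)} X_c\, U^{(c)\dagger}$: the $c$-th factor conjugates only $X_c$ and commutes with all the others, so it preserves the multidegree $\vk=(k_1,\dotsc,k_d)$, the degree in each $X_c$. I would first write the invariant series $F$ as the sum of its homogeneous components $F=\sum_{\vk}F_{\vk}$; each $F_{\vk}$ is then itself a $U(N)^d$-invariant polynomial in the matrix entries, and since $I_{\vlam}$ is homogeneous of multidegree $(|\lambda^{(1)}|,\dotsc,|\lambda^{(d)}|)$, it suffices to expand each $F_{\vk}$ as a \emph{finite} linear combination of the $I_{\vlam}$ with $|\lambda^{(c)}|=k_c$ for all $c$ (a component not involving some $X_c$ being accounted for by the special value $\lambda^{(c)}=(0)$, for which $I_{(0)}(X_c)=N$). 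Summing over $\vk$ recovers $F=\sum_{\vlam}F(\vlam)\,I_{\vlam}$.

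Next I would peel off one color at a time. Freezing $X_{c'}$ for all $c'\neq 1$, the polynomial $X_1\mapsto F_{\vk}$ is invariant under conjugation of the single matrix $X_1$; since $F$ is a series in the $X_c$ (not in their adjoints) and $U(N)$ is Zariski dense in $GL(N,\mathbbm{C})$, it is in fact invariant under $GL(N,\mathbbm{C})$-conjugation of $X_1$. By the first fundamental theorem for one matrix under conjugation — equivalently, via Chevalley restriction, a conjugation-invariant polynomial is a symmetric function of the eigenvalues, hence a polynomial in the power sums $p_m(X_1)=\tr X_1^m$ — the degree-$k_1$ conjugation-invariants of $X_1$ are spanned by the products $I_{\lambda^{(1)}}(X_1)=\prod_i\tr X_1^{\lambda^{(1)}_i}$ over partitions $\lambda^{(1)}\vdash k_1$. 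Hence $F_{\vk}=\sum_{\lambda^{(1)}\vdash k_1}G_{\lambda^{(1)}}(\{X_{c'}\}_{c'\geq 2})\,I_{\lambda^{(1)}}(X_1)$ with polynomial coefficients $G_{\lambda^{(1)}}$; because the residual $U(N)^{d-1}$ commutes with the $U(N)$ acting on $X_1$ and leaves each $I_{\lambda^{(1)}}(X_1)$ fixed, the $G_{\lambda^{(1)}}$ are themselves $U(N)^{d-1}$-invariant and homogeneous of the appropriate multidegree. Iterating for $c=2,\dotsc,d$ then expresses $F_{\vk}$ as a linear combination of the $\prod_{c=1}^{d}I_{\lambda^{(c)}}(X_c)=I_{\vlam}(\{X_c\})$, which is the claim. (Conceptually this is the statement that for a product of reductive groups acting on a product of spaces the invariant ring is the tensor product of the individual invariant rings, $\mathbbm{C}[X_1,\dotsc,X_d]^{U(N)^d}=\bigotimes_{c=1}^{d}\mathbbm{C}[X_c]^{U(N)}$, obtained by applying the Reynolds operators one color at a time.)

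The only point needing care is the well-definedness of the intermediate coefficients $G_{\lambda^{(1)}}$ at fixed $N$: once $k_1>N$ the $I_{\lambda^{(1)}}(X_1)$ with $\lambda^{(1)}\vdash k_1$ satisfy linear relations, so the decomposition is not unique. This is harmless, since the Proposition asserts only the \emph{existence} of coefficients $F(\vlam)$. To run the induction cleanly one may either establish the expansion first for $N$ large enough that the relevant $I_\lambda$ are linearly independent and then descend to arbitrary $N$ (the expansion being a polynomial identity in the matrix entries, valid for all large $N$ and hence for all $N$), or apply the $U(N)$-averaging operator directly to an ordinary monomial basis of the degree-$k_1$ polynomials in $X_1$ to obtain a manifestly invariant expansion from the outset. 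Either route goes through with no genuine obstacle, precisely because different colors never mix under the group action; the entire content is the one-matrix fundamental theorem together with bookkeeping of multidegrees.
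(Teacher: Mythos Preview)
Your argument is correct, and it takes a genuinely different route from the paper. The paper's proof is a two-line observation internal to its own graphical calculus: since the ring of $U(N)^d$-invariants in the matrices $\{\Phi_C\}$ is generated by contracted bubbles, and since when only the single-color matrices $X_1,\dotsc,X_d$ are present every vertex of a contracted bubble carries a singleton color set, the defining condition that each color subgraph be a disjoint union of oriented cycles forces the whole graph to be a disjoint union of monochromatic cycles; the associated polynomial is then exactly $I_{\vlam}(\{X_c\})$.

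Your proof instead invokes the classical first fundamental theorem for one matrix under conjugation and peels off colors one at a time, using that the factors of $U(N)^d$ act independently. This is more self-contained and does not rely on the paper's bubble formalism, at the price of being longer and having to manage the bookkeeping (multidegree decomposition, polynomiality and invariance of the intermediate coefficients, and the linear dependence among the $I_\lambda$ when $k_1>N$). The paper's approach buys brevity by cashing in on machinery already built; yours buys independence from that machinery and makes explicit the conceptual statement $\mathbbm{C}[X_1,\dotsc,X_d]^{U(N)^d}=\bigotimes_{c}\mathbbm{C}[X_c]^{U(N)}$, which is really what underlies both arguments.
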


\begin{proof}
$F$ has an expansion onto contracted bubbles. The latter are easily seen to be disjoint union of unicycles for all colors, and their corresponding polynomials are $I_{\vlam}$.
\end{proof}

\subsection{Partial integrals} \label{sec:PartialIntegral}

We will assume 
\begin{itemize}
\item The free energy $F(N, \mathcal{B})$ has a $1/N$ expansion
\begin{equation}
F(N, \mathcal{B}) = N^{d'} \sum_{i\geq 0} N^{-\delta_i} F_i(\mathcal{B})
\end{equation}
where $(\delta_i)_{i\geq 0}$ is an increasing, positive sequence and $d'> 2$. As far as we know, all tensor models for which the scaling of the free energy is known satisfy $d'=d$.
\item $I$ is a finite set and $s_i$ is rational for all $i\in I$. This implies rationality of the $\delta_i$s via Equation \eqref{FreeEnergyDirectExpansion}.
\item Each $F_i(\mathcal{B})$ exists in a neighborhood of the origin in the space of coupling constants.
\item Denote $-t_c/2$ the coupling constant of the quartic bubble $Q_{\{c\}}$ for $c=1, \dotsc, d$. We will need $t_c> 0$ in order to choose $\Phi_{\{c\}}$ Hermitian and $X_c= -iY_{c}$ with $Y_{c}$ Hermitian.
\end{itemize}

The results of this section are a bit simplified by rescaling the matrices $X_c$ in the definition \eqref{MatrixModel} by $N^{d-1}$, so that up to irrelevant constants,
\begin{equation}
Z_{\MM}(N, \mathcal{P}) = \int \prod_{C\in \{1, \dotsc, d\}} dX_C d\Phi_C\ \exp -\sum_{C\subset \{1, \dotsc, d\}} N^{d-1} \tr_{E_C} \bigl(X_C \Phi_C\bigr) + V_{N, \mathcal{P}}(\{\Phi_C\}) - \tr_{\otimes_c E_c} \ln \Bigl( \mathbbm{1} - \sum_C \tilde{X}_C\Bigr)
\end{equation}
Denote
\begin{equation}
L(\{X_C\}) = - \tr_{\otimes E_c}\ln\Bigl(\mathbbm{1} - \sum_C \tilde{X}_C\Bigr) - \frac{1}{2} N^{d-1} \sum_{c=1}^d \tr_{E_c} X_c^2
\end{equation}
which just removes some quadratic terms from the expansion of the logarithm (recall $X_c = X_{\{c\}}$).

\begin{theorem} \label{thm:PartialIntegration}
Define the partial free energy 
\begin{equation} \label{PartialFreeEnergy}
\exp F_{N, \mathcal{P}}(\{X_c\}) = 
\int \prod_{\substack{C\subset \{1, \dotsc, d\}\\ |C|\geq 2}} dX_C d\Phi_C \prod_{c=1}^d d\Phi_{\{c\}} e^{-\sum_C N^{d-1} \tr (X_C \Phi_C) + V_{N, \mathcal{P}}(\{\Phi_C\}) + L(\{X_C\})}
\end{equation}
where $X_c=-iY_c$, $Y_c$ Hermitian for $c=1, \dotsc, d$. In the integral $\Phi_{\{c\}}$ are Hermitian, while the pairs $\{\Phi_C, X_C\}$ can be chosen complex, $\Phi_C^\dagger = X_C$, or with $X_C = -iY_C$ with $\Phi_C, Y_C$ Hermitian (provided $t_C>0$) for all $|C|>1$. Then
\begin{equation} \label{EffectiveAction}
F_{N, \mathcal{P}}(\{X_c\}) = \sum_{\vlam} N^{d'-\ell(\vlam)}\ t_{N, \mathcal{P}}(\vlam)\ I_{\vlam}(\{X_c\})
\end{equation}
where $t_{N, \mathcal{P}}(\vlam)$ has a $1/N$ expansion which starts at order $\mathcal{O}(1)$
\begin{equation} \label{1/NExpansionCouplings}
t_{N, \mathcal{P}}(\vlam) = \sum_{i \geq 0} N^{-\delta_i(\vlam)} t_{\mathcal{P}}^{(i)}(\vlam)
\end{equation}
and $(\delta_i(\vlam))_{i\geq 0}$ are positive, increasing sequences of rationals. Furthermore
\begin{equation} \label{NewExpansion}
Z_{\Tensor}(N, \mathcal{B}) = \int \prod_{c=1}^d dX_c\,\exp \biggl(\frac{1}{2} N^{d-1} \sum_{c=1}^d \tr_{E_c} X_c^2 + F_{N, \mathcal{P}}(\{X_c\})\biggr)
\end{equation}
with the same relation between $\mathcal{B}$ and $\mathcal{P}$ as in Theorem \ref{thm:IF}.
\end{theorem}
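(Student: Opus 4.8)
The plan is to integrate the matrices in two stages, using the tools already set up: first replace every complex pair $\{X_C,\Phi_C\}$ by Hermitian pairs via Lemma \ref{thm:ComplexVsHermitian} when needed, then Gaussian-integrate everything except $X_1,\dots,X_d$ order by order in the Feynman expansion, and finally track the $N$-power of each resulting connected amplitude as a function of its number of boundary traces. The starting point is the rescaled partition function and the definition \eqref{PartialFreeEnergy} of $\exp F_{N,\mathcal{P}}$: since $L(\{X_C\})$ has had the quadratic terms $\tfrac12 N^{d-1}\sum_c \tr X_c^2$ subtracted, the only remaining quadratic coupling between the $\Phi$'s and the $X$'s is $-\sum_C N^{d-1}\tr(X_C\Phi_C)$, which is exactly the propagator used in Lemma \ref{thm:IntermediateField}. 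Expanding $e^{V_{N,\mathcal{P}}+L}$ and performing the Gaussian $\{X_C,\Phi_C\}_{|C|\ge 2}$ and $\Phi_{\{c\}}$ integrations by Wick's theorem produces a sum of connected vacuum graphs whose only external legs are the surviving $X_c$'s; $U(N)^d$-invariance of every vertex and propagator forces the external dependence to be a product of traces $I_{\vlam}(\{X_c\})$, which is precisely Proposition \ref{thm:MultiTrace}. This yields the form \eqref{EffectiveAction}, and the coupling $t_{N,\mathcal{P}}(\vlam)$ is the generating series of these connected graphs with $\ell(\vlam)$ marked boundary components.

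\medskip

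For the $N$-scaling \eqref{1/NExpansionCouplings}, I would argue indirectly rather than by a direct graph-by-graph power count, which is delicate because of the mixed complex/Hermitian propagators and the color structure inside $L$. The clean route is: plug \eqref{EffectiveAction} into the Gaussian $X_c$-integral \eqref{NewExpansion} — which follows immediately from \eqref{PartialFreeEnergy} by reinstating the quadratic terms and from Theorem \ref{thm:Expectations}/\ref{thm:IF} identifying $Z_{\MM}=Z_{\Tensor}$ — and compare with the assumed $1/N$ expansion $F(N,\mathcal{B})=N^{d'}\sum_i N^{-\delta_i}F_i(\mathcal{B})$. Matching the two expansions in the coupling constants, using that the $s_i$ are rational (hence the $\delta_i$ are rational by \eqref{FreeEnergyDirectExpansion}) and that each $F_i(\mathcal{B})$ is analytic near the origin, forces each $t_{N,\mathcal{P}}(\vlam)$ to admit a $1/N$ expansion in rational powers; the normalization $N^{d'-\ell(\vlam)}$ in front of $I_{\vlam}$ is chosen precisely so that the leading term is $\mathcal{O}(1)$, which is what one checks on the quadratic ($\ell(\vlam)=d$, the Gaussian part) and the known melonic contributions. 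Positivity and monotonicity of $(\delta_i(\vlam))_{i\ge0}$ is then inherited from that of $(\delta_i)_{i\ge0}$ together with the bookkeeping that each extra boundary trace costs a definite power of $N$ — this is the content of ``condition 2'' that will be used in Section \ref{sec:Blobbed}.

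\medskip

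The main obstacle is controlling the $N$-dependence uniformly, i.e.\ proving that the power of $N$ attached to a connected effective graph is bounded above in terms of $\ell(\vlam)$ alone and does not run away as the number of $|C|\ge2$ vertices grows. Concretely, one must check that integrating out the $\Phi_C,X_C$ with $|C|\ge2$ and the $\Phi_{\{c\}}$ never generates positive powers of $N$ beyond those already accounted for, which is where the hypothesis $d'>2$ and the specific $N^{d-1}$ weighting of the propagators and of the logarithmic vertex \eqref{LogExpansion} enter. I would handle this by noting that the combined system is, after the Hermitian rewriting, a multi-matrix model with a topological (in the sense of \eqref{Action}) large-$N$ structure, so the same bounds that guarantee $F(N,\mathcal{B})$ has the stated expansion bound the partial free energy term by term; the rationality assumption on the $s_i$ then guarantees the exponents live in a discrete set, so the expansion \eqref{1/NExpansionCouplings} is well-ordered. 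The remaining statements — \eqref{NewExpansion} and the matching of $(\mathcal{B},\mathcal{P})$ — are then formal consequences of Fubini for formal Gaussian integrals and of Theorem \ref{thm:IF}, requiring no new input.
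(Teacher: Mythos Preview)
Your structural outline --- Hermitian rewriting via Lemma~\ref{thm:ComplexVsHermitian}, Feynman expansion of the partial integral, and the use of $U(N)^d$-invariance and Proposition~\ref{thm:MultiTrace} to get the $I_{\vlam}$ form, followed by \eqref{NewExpansion} via Fubini and Theorem~\ref{thm:Expectations} --- matches the paper and is fine.

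The gap is in the $N$-scaling. You correctly identify the obstacle (``controlling the $N$-dependence uniformly\dots does not run away as the number of $|C|\ge2$ vertices grows''), but your proposed resolution is circular. Saying that ``the same bounds that guarantee $F(N,\mathcal{B})$ has the stated expansion bound the partial free energy term by term'' is precisely the statement to be proved: a bound on the total free energy does not, without further argument, imply a bound on each coefficient $t_{N,\mathcal{P}}(\vlam)$, since the free energy is a nonlinear functional of these coefficients (sums over trees and loops of effective vertices) and cancellations are not excluded a priori. Your ``matching the two expansions in the coupling constants'' does not isolate a single $\vlam$, and the check on ``the quadratic\dots and the known melonic contributions'' is a spot-check, not a proof.

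What the paper does instead is a direct, graph-by-graph argument that avoids this inversion problem entirely. For each connected graph $H$ contributing to $F_{N,\mathcal{P}}(\vlam)$ (with $|\vlam|$ hanging half-edges of colors $1,\dotsc,d$ and $\ell(\vlam)$ external faces), one \emph{constructs} a specific closed graph $G(H)\in\mathcal{G}$ by capping each hanging half-edge of color $c$ with the univalent vertex $H_c$ coming from the length-one word in the expansion of the logarithm. A short face count shows that this capping adds exactly $\ell(\vlam)$ to the exponent of $N$: the $|\vlam|$ new edges contribute $-(d-1)|\vlam|$, while the new faces number $\ell(\vlam)+(d-1)|\vlam|$, and the univalent caps carry $\eta(H_c)=0$. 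Hence the $N$-exponent of $G(H)$ equals $\eta(H)+\ell(\vlam)$. Since $G(H)$ is a bona fide Feynman graph of the full model, the assumed bound $\le N^{d'}$ on every such graph gives $\eta(H)\le d'-\ell(\vlam)$ immediately, and uniformly in $H$. Rationality of the subleading exponents then follows from rationality of the $s_i$ as you say. This constructive step --- producing from each effective vertex a closed graph whose $N$-power is shifted by exactly $\ell(\vlam)$ --- is the missing idea in your proposal.
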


\begin{proof}
The proof first establishes \eqref{NewExpansion}, by showing that the Feynman expansion of $Z_{\Tensor}(N, \mathcal{B})$ can be obtained from that of $F_{N, \mathcal{P}}(\{X_c\})$. Then, for each Feynman graph of the expansion of $F_{N, \mathcal{P}}(\{X_c\})$, we build a special Feynman graph for $Z_{\Tensor}(N, \mathcal{B})$ in a way such that the $N$-dependence of this construction is controlled. Thus, the $1/N$ expansion of the $Z_{\Tensor}(N, \mathcal{B})$ implies \eqref{EffectiveAction}, \eqref{1/NExpansionCouplings}.

\paragraph{Feynman rules --} Theorem \ref{thm:IF} allows for studying the matrix model instead of the tensor model. Then proving \eqref{NewExpansion} amounts to show that one can write the Feynman rules so as to integrate first over all matrices $\Phi_C, X_C$, except $X_1, \dotsc, X_d$, and then integrate the latter. This will be clear once we have describe how we use the Feynman expansion on $Z_{\MM}(N, \mathcal{P})$.

We choose for the Feynman expansion of $Z_{\MM}(N, \mathcal{P})$ to use $\Phi_{\{c\}}$ Hermitian and $X_c = -iY_c$ with $Y_c$ Hermitian for $c=1, \dotsc, d$. As for the matrix $\Phi_C, X_C$ for $|C|>1$, we can use the complex matrices or the Hermitian matrices either way. Let us choose the latter, since it will let us have a unified description of the Feynman rules, and it is the setup for Theorem \ref{thm:BubbleExpectation}. This however requires $t_C>0$. We thus write integrals over Hermitian matrices,
\begin{equation} \label{RealIntegral}
Z_{\MM}(N, \mathcal{P}) = \int \prod_C d\Phi_C dY_C\ e^{\sum_C \tr_{E_C} \bigl(i N^{d-1} \Phi_CY_C - \frac{t_C}{2} \Phi_C^2 - \frac{N^{d-|C|}}{2} Y_C^2\bigr) + V_{N, \mathcal{P}}(\{\Phi_C\}) + K(\{Y_C\})}
\end{equation}
with
\begin{equation}
\begin{aligned}
K(\{Y_C\}) &= -\tr_{\bigotimes_{c=1}^d E_c} \ln\Bigl(\mathbbm{1} + i \sum_C \tilde{Y}_C\Bigr) + \frac{1}{2} \sum_{C} N^{d-|C|} \tr_{E_C} Y_C^2\\
&= \sum_{\text{words $w\in W$}} \frac{(-i)^{|w|}}{|w|} \tr_{\bigotimes_{c=1}^d E_c} \tilde{Y}_{C_1} \dotsm \tilde{Y}_{C_{|w|}}
\end{aligned}
\end{equation}
which is the series expansion of the logarithm minus its single-trace, quadratic terms (since we have isolated them to be used for the propagators). The set $W$ is a set of words
\begin{equation}
W= \{w=C_1 \dotsm C_{|w|}/\quad \forall q=1, \dotsc, |w| \quad C_q\subset \{1, \dotsc, d\}, \quad \text{$|w|\neq 2$ or $w=C_1 C_2$ with $C_1\neq C_2$}\}
\end{equation}

We have also separated from $V_{N, \mathcal{P}}$ its quadratic terms but we retain the notation. To perform the Feynman expansion, we first expand
\begin{multline}
e^{\sum_C \tr_{E_C} i N^{d-1} \Phi_CY_C  + V_{N, \mathcal{P}}(\{\Phi_C\}) + K(\{Y_C\})} \\
= \sum_{\{n_C\}, \{n_i\}, \{n_w\}} \frac{i^{\sum_Cn_C} (-i)^{\sum_w n_w |w|}\bigl(N^{s_i} t_i\bigr)^{n_i}}{\prod_C n_C! \prod_i n_i! \prod_w n_w! |w|} \prod_{C} \Bigl(\tr_{E_C} \Phi_CY_C\Bigr)^{n_C} \prod_{i} \Bigl(P_i(\{\Phi_C\})\Bigr)^{n_i} \prod_{w}\Bigl(\tr_{\bigotimes_c E_c} \tilde{Y}_{C_1} \dotsm \tilde{Y}_{C_{|w|}}\Bigr)^{n_w}
\end{multline}
Here the indices $C$ span the subsets of $\{1, \dotsc, d\}$, and $i\in I$, and $w\in W$.

Wick theorem can then be applied and, at fixed $\{n_C, n_i, n_w\}$, expresses the Gaussian moments as sums over pairings $\{\sigma_C, \rho_C\}$, the first identifying pairs of $\Phi_C$s and the second identifying pairs of $Y_C$s,
\begin{equation}
\int e^{-\frac{t_C}{2} \tr_{E_C} \Phi_C^2} \Phi_{C a_1 b_1} \dotsm \Phi_{C a_{2n} b_{2n}} = \frac{1}{t_C^n}\sum_{\sigma_C} \prod_{\{i,j\}\in \sigma_C} \delta_{a_i, a_j} \delta_{b_i, b_j}
\end{equation}
where a pairing $\sigma_C$ is a way to partition $\{1, \dotsc, 2n\}$ into disjoint pairs $\{i, j\}$, and
\begin{equation}
\int e^{-\frac{N^{d-|C|}}{2} \tr_{E_C} Y_C^2} Y_{C a_1 b_1} \dotsm Y_{C a_{2n} b_{2n}} = N^{-(d-|C|)n} \sum_{\rho_C} \prod_{\{i,j\}\in \rho_C} \delta_{a_i, a_j} \delta_{b_i, b_j}.
\end{equation}
We call a Feynman graph $G = (\{n_C, \sigma_C, \rho_C\}_{C\subset \{1, \dotsc, d\}}, \{n_i\}_{i\in I}, \{n_w\}_{w\in W})$. The Feynman rules are as follows. 
\begin{itemize}
\item Solid half-edges of color type $C$ correspond to the matrix $\Phi_C$, and dotted half-edges of color type $C$ correspond to $Y_C$.
\item Propagators come from the quadratic terms $-\frac{1}{2} \sum_C  (t_C \tr_{E_C}\Phi_C^2 + N^{d-|C|}\tr_{E_C} Y_C^2)$ and give rise to two types of edges: fully solid edges of color type $C$ with weight $1/t_C$ and fully dotted edges of color type $C$ with weight $N^{|C|-d}$.
\item A special bivalent vertex of weight $iN^{d-1}$ with an incident solid half-edge and an incident dotted half-edge.
\item Vertices coming from the expansion of $K$ have dotted incident half-edges, and vertices coming from the expansion of $V_{N, \mathcal{P}}$ have solid incident half-edges.
\end{itemize}
The sums over all the indices, identified along interactions and Wick contractions will be described below. Denote $\mathcal{G}$ the set of connected Feynman graphs for this Feynman expansion.

It is clear with those rules that one can first perform the integrals over the matrices $\Phi_C, Y_C$ for $|C|>1$ and $\Phi_{\{1\}}, \dotsc, \Phi_{\{d\}}$ by summing over the corresponding pairings, and then integrate over $Y_1, \dotsc, Y_d$ by summing over $\rho_1, \dotsc, \rho_d$. This proves \eqref{NewExpansion}. In fact, one can integrate the matrices $\Phi_C, Y_C$ in any particular order. This means that partial integrals can be performed as one wishes. However, only in our case will we be able to describe the effective action $F_{N, \mathcal{P}}(\{Y_c\})$.

We denote $\mathcal{G}(\{Y_c\})$ the set of connected Feynman graphs for $F_{N, \mathcal{P}}(\{Y_c\})$. They are constructed using the same set of Feynman rules except that fully dotted edges of color $c=1, \dotsc, d$ are not allowed anymore. This means that all dotted half-edges of color $c=1, \dotsc, d$ are left hanging,
\begin{equation}
\begin{array}{c} \includegraphics[scale=.4]{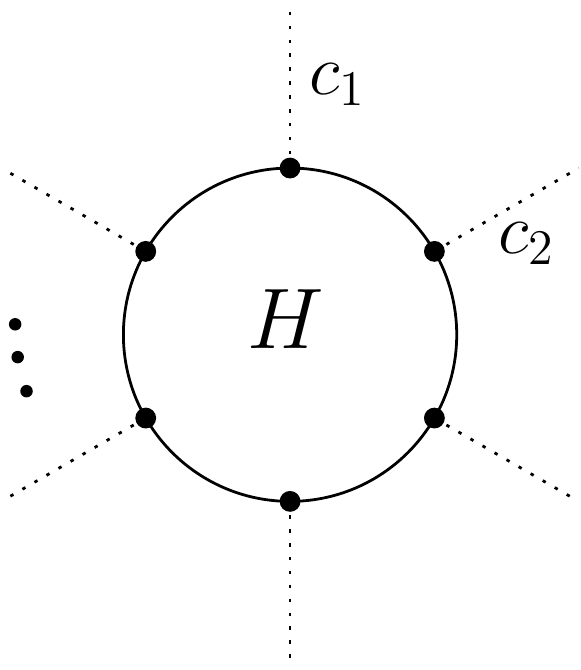} \end{array}
\end{equation}
For instance, in the expansion of $K(\{Y_C\})$, one finds terms with $w=c_1 \dotsb c_n\in\{1, \dotsc, d\}^n$. They contribute to $H_w\in\mathcal{G}(\{Y_c\})$ as graphs with a single vertex and $n$ hanging dotted half-edges of colors $c_1, \dotsc, c_n$ (respecting the cyclic order).

Clearly there is a bijection between $\mathcal{G}$ and collections of graphs from $\mathcal{G}(\{Y_c\})$ connected by Wick pairings $\rho_1, \dotsc, \rho_d$. Graphically, one obtains from $G\in\mathcal{G}$ the collections of graphs from $\mathcal{G}(\{Y_c\})$ by cutting all dotted edges into half-edges. The other way around, performing the Wick pairings $\rho_1, \dotsc, \rho_d$ means connecting the corresponding dotted half-edges.

From the above Feynman rules we have found
\begin{equation}
F_{N, \mathcal{P}}(\{Y_c\}) = \sum_{G\in\mathcal{G}(\{Y_c\})} A_G(N,\mathcal{P}; \{Y_c\})
\end{equation}
where $A_G(N,\mathcal{P}; \{Y_c\})$ is the amplitude of $G$. It is a polynomial in the matrices $Y_c$s (since it is finite object). From the invariance under unitary transformations and Proposition \ref{thm:MultiTrace}, we find that the amplitude of a 
\begin{equation}
F_{N, \mathcal{P}}(\{Y_c\}) = \sum_{\vlam} F_{N, \mathcal{P}}(\vlam) I_{\vlam}(\{Y_c\})
\end{equation}
To establish the $1/N$ expansion of $F_{N, \mathcal{P}}(\vlam)$, we need to look into the structure of the Feynman graphs and their $N$-dependence. The reader already familiar with faces as the origin of the $N$-dependence can skip this discussion.

\paragraph{Faces of Feynman graphs --} As usual in matrix models, a factor $N$ comes from each face, but we need to explain what faces are in this multi-matrix, multi-size context. In ordinary single-trace matrix models, it corresponds to a sequence of identifications of matrix indices via propagators and interactions on a Wick contraction. In terms of Feynman graphs, there is a cyclic order of the half-edges incident to each vertex. This means that Feynman graphs are in fact combinatorial maps. A face is then a closed path which follows an edge to a vertex, then uses the cyclic order around that vertex to move to another half-edge (e.g. counter-clockwise), then follows that edge, etc. 

Consider an index of color $c$ of a $\Phi_C$ in $V_{N, \mathcal{P}}$, or of a $X_C$ in $K(\{X_C\})$, for $c\in C$. In a Feynman graph $G\in\mathcal{G}$, the propagator identifies it with an index of another interaction which has the same color, then because the interaction are unitary invariant, it is further identified with an index of the same color of another matrix, which is then identified to another index of the same color by a propagator, and so on. One ends up with a free sum from 1 to $N$ for each such cycle, hence a power of $N$. We thus see that there could be a notion of faces, but it is color by color, and it requires to track the index identifications in $V_{N, \mathcal{P}}(\{\Phi_C\})$.

One of the key results of \cite{StuffedWalshMaps} is that the interaction $V_{N, \mathcal{P}}(\{\Phi_C\})$ can be given the structure of a map with edges labeled by color type $C\subset \{1, \dotsc, d\}$. Consider $G\in \mathcal{G}$ and denote $G_c\subset G$ for $c=1, \dotsc, d$ the sub-graph whose edges have color set $C\ni c$. It is a disjoint union of ordinary combinatorial maps and the \emph{faces of color $c$} are defined as the faces of those maps. The weight of a graph then goes like $N^{F(G)}$ (and other factors of $N$).


\paragraph{Feynman graphs for $F_{N, \mathcal{P}}(\{Y_c\})$ --} Let $H\in\mathcal{G}(\{Y_c\})$. It has faces of color $c\in\{1, \dotsc, d\}$ as defined above, but some faces go around some hanging dotted half-edges. We call them external faces, while the others are internal faces.

Every internal face contributes with a factor $N$. However, since dotted half-edges of color $c$ correspond to the matrix $Y_c$ which is not integrated over, the Feynman amplitude receives a matrix $Y_c$ every time one goes around a face and meets a hanging dotted half-edge. The matrix indices of $Y_c$ are then identified along the face. An external face thus receives $\tr_{E_c}Y_c^{l_f}$ where $l_f$ is the number of such bivalent interactions around the face.
This shows that for every $H\in\mathcal{G}(\{Y_c\})$, there exists a unique $\vlam$ such that
\begin{equation}
A_H(N,\mathcal{P}; \{Y_c\}) = \tilde{A}_H N^{\eta(G)} I_{\vlam}(\{Y_c\})
\end{equation}
where $\tilde{A}_H$ is independent of $N$ and of the matrices $Y_c$s.

For $H\in\mathcal{G}(\{Y_c\})$, we denote $n_i$ the number of interactions $P_i$, and $b_C$ the number of bivalent interactions $\tr(X_C\Phi_C)$ for $C\subset \{1, \dotsc, d\}$, and $F_{\text{int}}(G)$ the number of external faces. It comes
\begin{equation}
\eta(H) = F_{\text{int}}(H) + \sum_{i\in I} n_i s_i + \sum_C (d-1)b_C.
\end{equation}

We can thus write $\mathcal{G}(\{Y_c\}) = \bigcup_{\vlam} \mathcal{G}_{\vlam}(\{Y_c\})$, where the amplitude of $H\in\mathcal{G}_{\vlam}(\{Y_c\})$ is proportional to $I_{\vlam}(\{Y_c\})$, and
\begin{equation}
F_{N, \mathcal{P}}(\vlam) = \sum_{H\in\mathcal{G}_{\vlam}(\{Y_c\})} \tilde{A}_H N^{\eta(H)}.
\end{equation}
Denote 
\begin{equation}
d_{\vlam} = \sup_{H\in\mathcal{G}_{\vlam}(\{Y_c\})} \eta(H).
\end{equation}
If $d_{\vlam}<\infty$, then $d_{\vlam}$ is actually a maximum. Indeed, we see in Equation \eqref{FreeEnergyDirectExpansion} that the exponent of $N$ is finite sum of integers, except for the $s_i$. But there are a finite number of them and there are rationals and the same for all graphs. Therefore one can write the exponents of $N$ with the same denominator for all graphs, while the numerators consist in a sequence of integers. If the latter has a finite supremum, it is obviously a maximum.

\paragraph{$1/N$ expansion --} A graph $G\in\mathcal{G}$ is made out of sub-graphs $H_{\vlam_1}, \dotsc, H_{\vlam_R} \in\mathcal{G}(\{Y_c\})$ connected by dotted edges of colors in $\{1, \dotsc, d\}$. We denote $E'$ the number of those edges and further denote $F'_c$ the number of faces of color $c$ which go along them, and $F'=\sum_{c=1}^d F'_c$. The sub-graphs $H_{\vlam_1}, \dotsc, H_{\vlam_R} \in\mathcal{G}(\{X_c\})$ come with powers of $N$, $\eta(H_{\vlam_1}), \dotsc, \eta(H_{\vlam_R})$ and amplitudes $\tilde{A}_{H_{\vlam_1}}, \dotsc, \tilde{A}_{H_{\vlam_R}}$. Altogether, the amplitude of $G$ is
\begin{equation} \label{DecomposeDegree}
A_{N, \mathcal{P}}(G) = N^{F' - (d-1)E' + \sum_{r=1}^R \eta(H_{\vlam_r})} \prod_{r=1}^R \tilde{A}_{H_{\vlam_r}}
\end{equation}

Consider $H\in\mathcal{G}_{\vlam}(\{Y_c\})$. We build a graph $G(H)\in\mathcal{G}$ as follows. Let $H_{c}\in \mathcal{G}(\{Y_c\})$ be the graph which consists in a single vertex and a single dotted half-edge of color $c$. Then we connect each hanging dotted half-edge of color $c$ of $H$ to $H_c$,
\begin{equation}
G(H) = \begin{array}{c}\includegraphics[scale=.4]{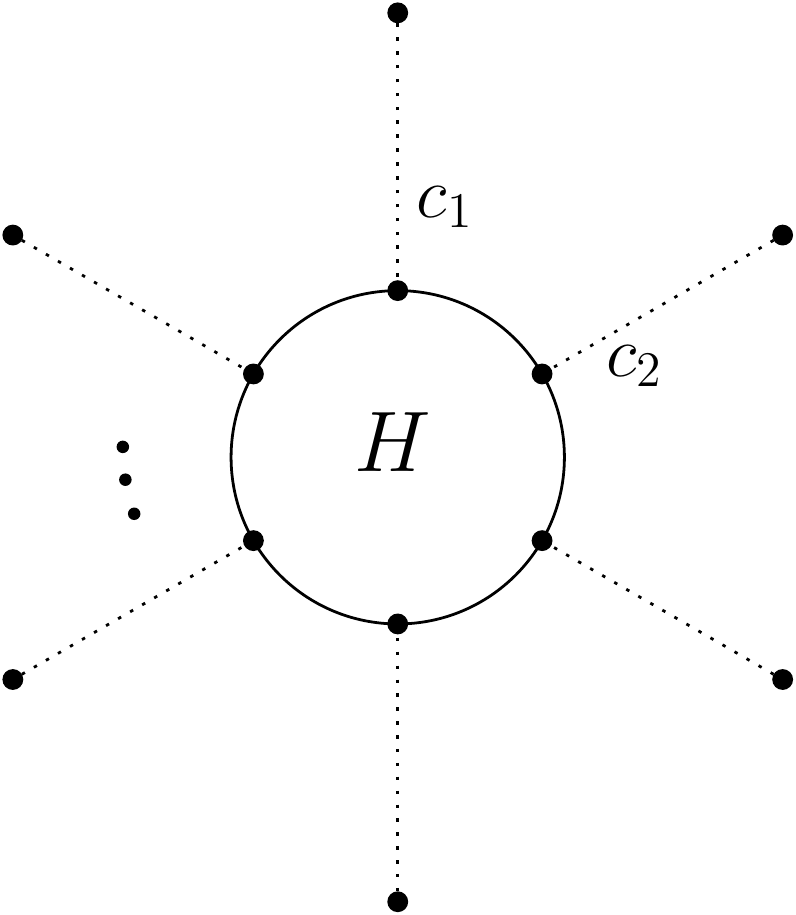} \end{array}
\end{equation}
$H$ corresponds to the interaction $I_{\vlam}$ and therefore has $\ell(\vlam)$ external faces, and $|\vlam|$ hanging dotted half-edges. In $G$, each external face of $H$ becomes an internal face, adding a factor of $N$ to the amplitude. Each univalent vertex $H_c$ is also a connected component for the $(d-1)$ sub-graphs $\mathcal{G}_{c'}$, $c'\neq c$. The total number $F'$ of faces which go along the newly added, fully dotted edges of color $c\in\{1, \dotsc, d\}$ is thus $\ell(\vlam) + (d-1)|\vlam|$. The number $E'$ of those new edges is equal to $I_{\vlam}$. Formula \eqref{DecomposeDegree} thus gives
\begin{equation}
A_{N, \mathcal{P}}(G(H)) = N^{\ell(\vlam) + \eta(H)}\ A(G(H)).
\end{equation}
By assumption, this is bounded by $N^{d'}$ for all $H$. Taking the supremum over all $H\in\mathcal{G}_{\vlam}(\{Y_c\})$, one finds
\begin{equation}
d_{\vlam} \leq d' - \ell(\vlam),
\end{equation}
further implying that $d_{\vlam}$ is a maximum. This shows that $F_N(\vlam) \leq N^{d' - \ell(\vlam)}$ and it is obvious that the sub-leading orders are rational powers of $N$.
\end{proof}

While it is not necessary for our purposes, the leading order of $F(N, \mathcal{P})$ can be described as follows.
\begin{proposition}
The graphs contributing to the leading order of $F(N, \mathcal{P})$ are trees over sub-graphs $H_{\vlam_1}, \dotsc, H_{\vlam_R}$, with $H_{\vlam_r} \in \mathcal{G}_{\vlam_r}(\{Y_c\})$, satisfying $\eta(H_{\vlam_r}) = d'-\ell(\vlam)$.
\end{proposition}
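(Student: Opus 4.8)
The plan is to read off the leading-order graphs directly from the $1/N$-counting established in the proof of Theorem \ref{thm:PartialIntegration}. Recall that any connected graph $G\in\mathcal{G}$ is a collection of sub-graphs $H_{\vlam_1},\dotsc,H_{\vlam_R}\in\mathcal{G}(\{Y_c\})$ glued along fully dotted edges of colors $c=1,\dotsc,d$, with amplitude \eqref{DecomposeDegree},
\begin{equation}
A_{N,\mathcal{P}}(G) = N^{F' - (d-1)E' + \sum_{r=1}^R \eta(H_{\vlam_r})}\ \prod_{r=1}^R \tilde{A}_{H_{\vlam_r}},
\end{equation}
where $E'$ is the number of gluing edges and $F'=\sum_c F'_c$ the number of faces running along them. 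So the first step is to obtain a sharp upper bound on $F' - (d-1)E'$ in terms of the combinatorics of the gluing, and then add the bound $\eta(H_{\vlam_r})\leq d'-\ell(\vlam_r)$ from the theorem.

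The key step is the face count for the gluing structure. Contract each sub-graph $H_{\vlam_r}$ to a single point; the gluing data becomes an ordinary (multi-)graph $\Gamma$ on $R$ vertices with $E'$ edges, each edge carrying a color $c$. For each fixed color $c$, the faces of color $c$ running along the gluing edges are the faces of the corresponding ribbon structure $\Gamma_c$ (the sub-map of $\Gamma$ made of color-$c$ edges, with the cyclic orders inherited from the $H_{\vlam_r}$ where the dotted half-edges were attached). Writing $V_c, E'_c, F'_c$ for its vertices, edges and faces, Euler's formula gives $F'_c \leq 2\,(\#\text{components of }\Gamma_c) - V_c + E'_c$, with equality iff $\Gamma_c$ is planar, and a further loss of $2$ for every independent cycle. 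Here $V_c$ counts only those contracted vertices incident to a color-$c$ gluing edge. Summing over $c$ and using $\sum_c E'_c = E'$, one gets $F' - (d-1)E' \leq \sum_c\big(2\,k_c - V_c\big)$ where $k_c$ is the number of connected components of $\Gamma_c$, with equality precisely when every $\Gamma_c$ is a planar forest (no cycles). Tracking this bound together with $\sum_r \eta(H_{\vlam_r}) \leq \sum_r (d'-\ell(\vlam_r))$ and the fact that $\sum_r \ell(\vlam_r)$ equals the total number of dotted half-edges, which is $E'_c$-summed data, one finds the total exponent is at most $d'$, with equality forcing (i) each $\Gamma_c$ to be a planar forest and (ii) every sub-graph to saturate $\eta(H_{\vlam_r})=d'-\ell(\vlam_r)$. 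One then checks that connectedness of $G$, combined with all $\Gamma_c$ being forests, forces the gluing graph $\Gamma$ itself to be a tree: any cycle in $\Gamma$ would, since colors can be freely mixed along it, produce a cycle in some $\Gamma_c$ (or else be removable), costing a power of $N$; this is the argument already used implicitly in the proof of the theorem when the leaves $H_c$ were attached. Hence the leading graphs are exactly trees over sub-graphs $H_{\vlam_r}$ with $\eta(H_{\vlam_r})=d'-\ell(\vlam_r)$, as claimed.

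I would then assemble these pieces: state the gluing decomposition, prove the face inequality color by color via Euler's formula, substitute into \eqref{DecomposeDegree}, impose the saturation conditions, and conclude that the underlying gluing structure is a tree. The main obstacle is the bookkeeping in the face count: one has to be careful that $F'_c$ counts only faces entirely supported on gluing edges (the external faces of the $H_{\vlam_r}$ becoming internal), that a contracted vertex with several incident color-$c$ gluing half-edges still contributes the correct Euler characteristic, and that the inequality $F'-(d-1)E'\leq \sum_c(2k_c - V_c)$ is saturated exactly when each $\Gamma_c$ is a planar forest — the planarity part is automatic once there are no cycles, so the real content is "forest". The second delicate point is the implication "all $\Gamma_c$ forests and $G$ connected $\Rightarrow$ $\Gamma$ tree": this needs the observation that a cycle in the uncolored gluing graph $\Gamma$ that is not a cycle in any single $\Gamma_c$ can be rerouted or still contributes a genus/handle defect, which is precisely the kind of estimate that went into bounding $d_{\vlam}\leq d'-\ell(\vlam)$, so it can be invoked rather than redone.
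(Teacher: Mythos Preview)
Your route is genuinely different from the paper's, but it has a real gap that, once you try to close it, collapses back into the paper's argument.

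The paper does not use Euler's formula color by color. It proves a local lemma: if a gluing edge (a fully dotted edge of some single color $c$) in $G\in\mathcal{G}$ is not a bridge, one can cut it into two leaves without disconnecting $G$; the face count of color $c$ changes by $\pm 1$, every other color gains exactly two faces from the two new isolated vertices, and the net change in the exponent of $N$ is at least $d-2>0$. Hence any non-bridge gluing edge is strictly sub-leading, so leading-order graphs are trees over the $H_{\vlam_r}$, and then maximising each $\eta(H_{\vlam_r})$ gives the saturation condition.

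In your approach there are three problems. First, the bound is mis-derived: from $F'_c\leq 2k_c-V_c+E'_c$ and $\sum_c E'_c=E'$ you get $F'-(d-1)E'\leq \sum_c(2k_c-V_c)-(d-2)E'$, not $\sum_c(2k_c-V_c)$; the missing $-(d-2)E'$ is exactly what penalises extra gluing edges. Second, contracting each $H_{\vlam_r}$ to a single point loses structure: if $H_{\vlam_r}$ has several external faces of color $c$, their half-edges sit on distinct cyclic pieces and there is no single ``inherited'' cyclic order at the contracted vertex, so the faces of your $\Gamma_c$ do not compute $F'_c$. The honest map has as vertices the external faces of color $c$, not the $H_{\vlam_r}$. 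Third, and most importantly, the implication ``all $\Gamma_c$ forests and $G$ connected $\Rightarrow$ $\Gamma$ a tree'' is simply false: take two sub-graphs joined by one color-$1$ edge and one color-$2$ edge; each $\Gamma_c$ is a single edge (a forest) yet $\Gamma$ is a $2$-cycle. This configuration \emph{is} sub-leading, but your stated saturation condition does not exclude it; what excludes it is precisely the $-(d-2)E'$ term you dropped, or equivalently the cutting move. Your closing appeal that this step ``can be invoked rather than redone'' points to nothing already proved: the inequality $d_{\vlam}\leq d'-\ell(\vlam)$ in Theorem~\ref{thm:PartialIntegration} was obtained by attaching leaves and using the global bound on the free energy, not by any cycle analysis. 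So the missing step is exactly the content of the paper's lemma, and you should prove that directly instead.
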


\begin{proof}
We start with the following lemma: an edge $e$ in $G\in\mathcal{G}$ is not a bridge, then $G$ is not leading order. Indeed, if $e$ is of color $c$ and not a bridge, it can be cut into edges of color $c$ as follows without disconnecting $G$,
\begin{equation}
G = \begin{array}{c} \includegraphics[scale=.6]{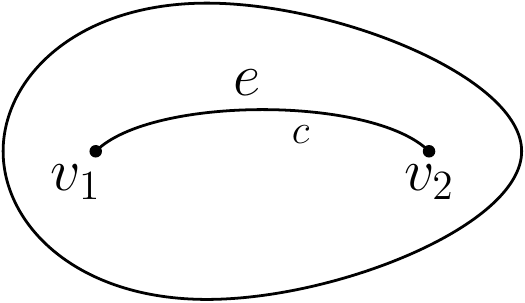} \end{array} \quad \to \quad G' = \begin{array}{c} \includegraphics[scale=.6]{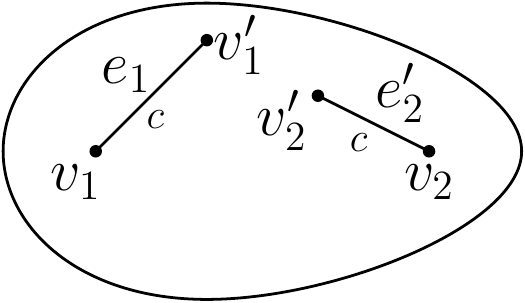} \end{array}
\end{equation}
The exponent of $N$ for $G'$ is calculated from the Feynman rules as follows
\begin{equation} 
A_{N, \mathcal{P}}(H)\ N^{F + \sum_{i\in I} n_i s_i + \sum_C (d-1)b_C - (d-|C|) E_C} A(H)
\end{equation}
for any $H\in\mathcal{G}$, where $A(G)$ is independent of $N$. We now look at the variations of the quantities appearing in this formula. From $G$ to $G'$, the number of edges changes by 1, the number of contracted bubbles $b_C$ of each type is unchanged. The number $F_c$ of faces of color $c$ can increase or decrease by one. Moreover, $\{v'_1\}$ and $\{v_2'\}$ are new connected components of the sub-graphs $G_{c'}$ for $c'\neq c$, implying that the number of faces $F_{c'}$ varies by exactly two. Therefore the total variation of the exponent of $N$ is
\begin{equation}
\Delta\bigl(F + \sum_{i\in I} n_i s_i - (d-1)(E-b)\bigr) = \pm 1 + 2(d-1) - (d-1) \geq  d-2 >0
\end{equation}
which proves the lemma.

As a consequence, a leading order graph is a tree over sub-graphs $H_{\vlam_1}, \dotsc, H_{\vlam_R} \in \mathcal{G}(\{X_c\})$. To maximize the exponent of $N$, we need to maximize each $\eta(H_{\vlam_r})$. As seen in the above proof, $H_{\vlam_r}$ therefore contributes to the leading order if and only if $\eta(H_{\vlam_r}) = d'-\ell(\vlam)$. It is then easy to check that all graphs obtained this way behaves as $N^{d'}$ with respect to $N$ and is thus a leading order graph.
\end{proof}

\subsection{Bubble observables}

In Theorem \ref{thm:IF}, we wrote the expectations of the multi-matrix model in terms of expectations of the tensor model. Here we provide the reverse theorem.

\begin{theorem} \label{thm:BubbleExpectation}
Define the effective action
\begin{equation}
e^{W_{N, \mathcal{P}}(\{X_C\})} = \int \prod_{C\in \{1, \dotsc, d\}} d\Phi_C\ e^{-\sum_{C}\tr_{E_C} \bigl(X_C \Phi_C\bigr) + V_{N, \mathcal{P}}(\{\Phi_C\})}
\end{equation}
with the same techniques as in the proof of Theorem \ref{thm:PartialIntegration} to define $F_{N, \mathcal{P}}(\{X_c\})$. Then for any contracted bubble $P=(B, \pi)$,
\begin{equation} \label{BubbleExpectationMatrix}
\langle B(T, \oT)\rangle_{\mathcal{B}} = \sum_{\substack{(i^{(c)}_v)_{c\in C_v}\\(j^{(c)}_v)_{c\in C_v}}} \delta^{\tau^{(1)}\dotsb \tau^{(d)}, \pi}_{(i^{(c)}_v)_{c\in C_v},(j^{(c)}_v)_{c\in C_v}}\ \Big\langle e^{-W_{N, \mathcal{P}}(\{X_C\})} \prod_{v} \frac{-\partial}{\partial \bigl(X_{C_v}\bigr)_{(j^{(c)}_v), (i^{(c)}_v)}} \biggl(e^{W_{N, \mathcal{P}}(\{X_C\})}\biggr)\Big\rangle_{\mathcal{P}}
\end{equation}
\end{theorem}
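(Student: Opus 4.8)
The plan is to mirror the proof of Theorem \ref{thm:IF}, but now integrate out all the matrices $\Phi_C$ in favor of the dual variables $X_C$ using the effective action $W_{N,\mathcal{P}}$, rather than working directly with the bare potential. First I would use Proposition \ref{thm:BubblePairing} to write $B(T,\oT) = B_P(T,\oT) = P(\{H_C(T,\oT)\})$ for the chosen pairing $\pi$, so that $\langle B(T,\oT)\rangle_{\mathcal{B}}$ becomes the expectation of a contracted-bubble polynomial in the matrices $H_C(T,\oT)$. By Theorem \ref{thm:Expectations} (Equation \eqref{Expectations}), this equals $\langle P(\{\Phi_C\})\rangle_{\mathcal{P}}$, the expectation of the same polynomial in the matrix model with the full weight including the logarithmic term and all the $X_C$-couplings.

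The core step is then to re-express $\langle P(\{\Phi_C\})\rangle_{\mathcal{P}}$ as an $X_C$-derivative expectation. I would insert the definition of $W_{N,\mathcal{P}}$: in the matrix model partition function \eqref{MatrixModel}, the $\Phi_C$ integrals with weight $e^{-\sum_C \tr_{E_C}(X_C\Phi_C) + V_{N,\mathcal{P}}(\{\Phi_C\})}$ produce exactly $e^{W_{N,\mathcal{P}}(\{X_C\})}$, so after integrating out the $\Phi_C$'s the remaining $X_C$ integral has weight $e^{W_{N,\mathcal{P}}(\{X_C\})}$ times the logarithmic term. Now one observes, exactly as in the third line of the computation in the proof of Theorem \ref{thm:IF}, that each factor $(\Phi_{C_v})_{(i^{(c)}_v),(j^{(c)}_v)}$ can be produced by the derivative $-\partial/\partial (X_{C_v})_{(j^{(c)}_v),(i^{(c)}_v)}$ acting on $e^{-\sum_C \tr_{E_C}(X_C\Phi_C)}$; integrating that derivative by parts against the $\Phi_C$-measure moves it onto $e^{V_{N,\mathcal{P}}(\{\Phi_C\})}$ and, after doing the $\Phi_C$ integral, onto $e^{W_{N,\mathcal{P}}(\{X_C\})}$. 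Equivalently, since $\langle (\Phi_{C_v})_{\bullet}\rangle$-type insertions translate into $-\partial/\partial X_{C_v}$ acting on the generating functional, the product $\prod_v (\Phi_{C_v})_{(i^{(c)}_v),(j^{(c)}_v)}$ under $\langle\cdot\rangle_{\mathcal{P}}$ equals the $X$-expectation of $e^{-W_{N,\mathcal{P}}}\prod_v(-\partial/\partial (X_{C_v})_{(j^{(c)}_v),(i^{(c)}_v)})e^{W_{N,\mathcal{P}}}$. Summing over all index configurations against $\delta^{\tau^{(1)}\dotsb\tau^{(d)},\pi}$ reconstructs $P(\{\Phi_C\}) = B(T,\oT)$ on the left and yields exactly the right-hand side of \eqref{BubbleExpectationMatrix}.

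There is one subtlety that I expect to be the main obstacle, namely making the manipulation ``$W_{N,\mathcal{P}}$ is a well-defined series'' precise and legitimizing the integration by parts and the order of integration. The effective action $W_{N,\mathcal{P}}(\{X_C\})$ is only defined as a formal Feynman series (a multi-trace series in the $X_C$, analogous to $F_{N,\mathcal{P}}(\{X_c\})$ in Theorem \ref{thm:PartialIntegration}), so all the identities above must be read order by order in the joint expansion in the $t_i$'s and in the entries of the $X_C$'s; one must check that differentiating $e^{W_{N,\mathcal{P}}}$ with respect to an entry of $X_{C_v}$ and then reinserting it into the $X_C$-integral commutes with the Gaussian/Wick bookkeeping, exactly as the analogous step in the proof of Theorem \ref{thm:IF} does for $V_{N,\mathcal{P}}$. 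For the case where some pairs $(\Phi_C,X_C)$ are taken Hermitian ($X_C=-iY_C$) rather than complex, one invokes Lemma \ref{thm:ComplexVsHermitian} (as in the proof of Lemma \ref{thm:IntermediateField}) to pass between the two descriptions, the coefficient $\tau$ again being supplied by the expansion of the logarithm; and positivity $t_C>0$ is what makes this legitimate. Once these formal-integral technicalities are in place — and they are essentially identical to those already used for Theorem \ref{thm:IF} and Theorem \ref{thm:PartialIntegration} — the statement follows by the chain Proposition \ref{thm:BubblePairing} $\Rightarrow$ Theorem \ref{thm:Expectations} $\Rightarrow$ integrate out $\Phi_C$ $\Rightarrow$ integration by parts.
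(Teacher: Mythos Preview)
Your approach is essentially the paper's: Proposition~\ref{thm:BubblePairing} $\Rightarrow$ Theorem~\ref{thm:Expectations} $\Rightarrow$ produce each $\Phi_{C_v}$ insertion as $-\partial/\partial X_{C_v}$ acting on $e^{-\sum_C\tr X_C\Phi_C}$ $\Rightarrow$ repackage as an $\langle\cdot\rangle_{\mathcal P}$ expectation of $e^{-W}\prod_v(-\partial_{X_{C_v}})e^{W}$. One sentence in your write-up is confused: the derivative is with respect to $X_C$, so ``integrating that derivative by parts against the $\Phi_C$-measure moves it onto $e^{V_{N,\mathcal P}}$'' is meaningless (the latter is $X$-independent); your ``Equivalently'' clause is the correct version --- simply pull the $X$-derivative outside the $\Phi$-integral, obtaining $\prod_v(-\partial/\partial X_{C_v})\,e^{W_{N,\mathcal P}}$ directly, no integration by parts needed. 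The paper instead integrates by parts twice in $X_C$ (first onto the logarithmic term, then integrates out all $\Phi_C$, then integrates by parts back onto $e^{W}$); your direct route reaches the same endpoint and is in fact a bit shorter.
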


It is the reciprocal of Theorem \ref{thm:IF}. It also generalizes \cite{IntermediateT4} in the same two ways: to arbitrary bubbles $B(T, \oT)$ instead of melonic cycles $\tr H_{\{c\}}(T, \oT)^n$, and to an arbitrary model instead of the quartic melonic one. There, since $V_{N, \mathcal{P}}(\{\Phi_C\}) = \sum_{c=1}^d -\frac{t_c}{2} \tr_{E_c} \Phi_{\{c\}}^2$, the integral defining $W_{N, \mathcal{P}}$ can be performed to find $W_{N, \mathcal{P}}(\{X_c\}) = \frac{1}{2t_c} \tr_{E_c} X_c^2$. This is how we recover the Hermite polynomials found in \cite{IntermediateT4}.

Notice that the equivalent of our theorems \ref{thm:IF} and \ref{thm:BubbleExpectation} in \cite{IntermediateT4} both feature Hermite polynomials. Here we see that in general it is not the same object in both theorems, since one has derivatives with respect to $V_{N, \mathcal{P}}(\{\Phi_C\})$, while the other has derivatives with respect to $W_{N, \mathcal{P}}(\{X_C\})$.

\begin{proof}
We start with the following equalities
\begin{equation}
\langle B(T, \oT)\rangle_{\mathcal{B}} = \langle P(\{H_C(T, \oT)\})\rangle_{\mathcal{B}} = \langle P(\{\Phi_C\})\rangle_{\mathcal{P}}
\end{equation}
the first one being Proposition \ref{thm:BubblePairing} and the second Theorem \ref{thm:Expectations}. Then we expand $P$ as a polynomial \eqref{ContracteBubblePolynomial} and use the matrices $X_C$s as sources,
\begin{multline}
\langle \prod_{v=1}^n \bigl(\Phi_{C_v}\bigr)_{(i^{(c)}_v), (j^{(c)}_{v})}\rangle_{\mathcal{P}} 
= \frac{1}{Z_{\MM}(N,\mathcal{P})} \\
\int \prod_{C\in \{1, \dotsc, d\}} dX_C d\Phi_C\ e^{V_{N, \mathcal{P}}(\{\Phi_C\}) - \tr_{\otimes_c E_c} \ln \bigl( \mathbbm{1} - N^{-(d-1)}\sum_C \tilde{X}_C\bigr)} \prod_{v=1}^n \frac{-\partial}{\partial \bigl(X_{C_v}\bigr)_{(j^{(c)}_v), (i^{(c)}_{v})}} e^{-\sum_{C}\tr_{E_C} \bigl(X_C \Phi_C\bigr)}
\end{multline}
then integrate by parts
\begin{multline}
\langle \prod_{v=1}^n \bigl(\Phi_{C_v}\bigr)_{(i^{(c)}_v), (j^{(c)}_{v})}\rangle_{\mathcal{P}} 
= \frac{1}{Z_{\MM}(N,\mathcal{P})} \\
\int \prod_{C\in \{1, \dotsc, d\}} dX_C d\Phi_C\ e^{-\sum_{C}\tr_{E_C} \bigl(X_C \Phi_C\bigr) + V_{N, \mathcal{P}}(\{\Phi_C\})} \prod_{v=1}^n \frac{\partial}{\partial \bigl(X_{C_v}\bigr)_{(j^{(c)}_v), (i^{(c)}_{v})}} e^{ - \tr_{\otimes_c E_c} \ln \Bigl( \mathbbm{1} - N^{-(d-1)}\sum_C \tilde{X}_C\Bigr)}
\end{multline}
At this stage, one performs the integrals over all $\Phi_C$s,
\begin{multline}
\langle \prod_{v=1}^n \bigl(\Phi_{C_v}\bigr)_{(i^{(c)}_v), (j^{(c)}_{v})}\rangle_{\mathcal{P}} \\
= \frac{1}{Z_{\MM}(N,\mathcal{P})} 
\int \prod_{C\in \{1, \dotsc, d\}} dX_C\ e^{W_{N, \mathcal{P}}(\{X_C\})} \prod_{v=1}^n \frac{\partial}{\partial \bigl(X_{C_v}\bigr)_{(j^{(c)}_v), (i^{(c)}_{v})}} e^{ - \tr_{\otimes_c E_c} \ln \Bigl( \mathbbm{1} - N^{-(d-1)}\sum_C \tilde{X}_C\Bigr)}
\end{multline}
and integrate by parts again to find the Theorem.
\end{proof}

\subsection{Comparison with ordinary multi-trace matrix models} \label{sec:MultitraceMatrixModels}

The model \eqref{NewExpansion} with action \eqref{EffectiveAction} has a natural interpretation in terms of stuffed maps as introduced in \cite{BlobbedTR}, with additional colors on their boundary components, and a non-topological expansion.

Recall that a map can be seen as a gluing of polygon along their boundaries. Here a polygon is simply a 2-cell homeomorphic to a disc, with $k$ boundary edges. We then call $k$ the perimeter of the boundary. In stuffed maps, polygons are replaced with \emph{elementary 2-cells of topology $(h,\lambda)$}, where $\lambda = (\lambda_1, \dotsc, \lambda_{\ell(\lambda)})$ is a partition. Such a 2-cell is homeomorphic to a surface of genus $h$ with $\ell(\lambda)$ boundary components of perimeters $\lambda_1, \dotsc, \lambda_{\ell(\lambda)}$. A \emph{stuffed map} is a gluing of elementary 2-cells along the edges of their boundary components.

In matrix models, a polygon of perimeter $k$ corresponds to an interaction $\tr X^k$ in Feynman graphs. In matrix models with multi-trace interactions, whose partition functions are of the form
\begin{equation}
\int dM\ \exp \sum_{\lambda, h} N^{2-\ell(\lambda) - 2h} t^{(h)}(\lambda) I_{\lambda}(X)
\end{equation} 
the interaction $N^{2-\ell(\lambda) - 2h} I_\lambda(X)$ is naturally interpreted as an elementary 2-cell of topology $(h, \lambda)$. Each trace in $I_\lambda(X) = \prod_{i=1}^{\ell(\lambda)} \tr X^{\lambda_i}$ gives rise to a boundary component, and the exponent $\lambda_i$ gives its perimeter.

When interpreting the Feynman expansion of a matrix model with multi-trace interactions in terms of stuffed maps, notice that the only way to associate the genus $h$ to an elementary 2-cell is that the exponent of $N$ in front of $I_{\lambda}(M)$ is $N^{2-\ell(\lambda) - 2h}$ and the coupling constant $t^{(h)}(\lambda)$ is independent of $N$.

In fact, a matrix model with an interaction like 
\begin{equation}
\sum_\lambda N^{2-\ell(\lambda)} t_N(\lambda) I_\lambda(X)
\end{equation}
has a well-defined large $N$ limit when $t_{N}(\lambda)$ itself admits a $1/N$ expansion starting at order $\mathcal{O}(1)$, i.e. $t_N(\lambda) = \sum_{i\geq 0} t^{(i)}(\lambda) N^{-\delta_i(\lambda)}$ where $(\delta_i(\lambda))_{i\geq 0}$ is an increasing sequence of non-negative numbers. However, for the expansions of the free energy and of the correlation functions to be called topological, they must be series in $1/N^2$. It is be the case when the coupling constants $t_N(\lambda)$ are series in $1/N^2$, i.e. $\delta_i(\lambda)\in 2\mathbbm{N}$ but not in general.

The model \eqref{ShiftedPartitionFunction} fit into this framework, with the following amendments.
\begin{itemize}
\item It has $d$ matrices $X_1, \dotsc, X_d$ and interaction in the form $I_{\vlam}(\{X_c\})$. In terms of stuffed maps, it simply means that the boundary components of elementary 2-cells are now colored and a partition $\lambda^{(c)}$ is needed to describe the perimeters of the boundary components of each color $c=1, \dotsc, d$. An elementary 2-cell with boundary profile $\vlam$ moreover comes with the weight $N^{d'-\ell(\vlam)} t_{N, \mathcal{P}}(\vlam)$.
\item Obvious from the above discussion, the $1/N$ expansion is not topological, since $d'\neq 2$, and the sequence $(\delta_i(\vlam))$ in \eqref{1/NExpansionCouplings} may not consist in even integers.
\end{itemize}
In the following section we first focus on the consequence of $d'>2$ for the large $N$ limit.

\subsection{Large \texorpdfstring{$N$} limit and fluctuations}

Theorem \ref{thm:PartialIntegration} provides the form we are looking for to be able to apply the blobbed topological recursion, as in \cite{QuarticTR}. In this section, we thus follow the first step of \cite{QuarticTR} which is to subtract the leading contribution at large $N$. 

\subsubsection{Subtracting the leading order}

Changing variables from $X_c = U_c D_c U_c^\dagger$ to unitary matrices $(U_1, \dotsc, U_d)$ and eigenvalues, $D_c = \operatorname{diag}(x^{(c)}_1, \dotsc, x^{(c)}_N)$ for all $c=1, \dotsc, d$, in the matrix formulation of Theorem \ref{thm:PartialIntegration}, the angular parts are trivially integrated out. The change of variables also produces a squared Vandermonde determinant for all $c=1, \dotsc, d$. This gives
\begin{equation}
Z_{\MM}(N, \mathcal{P}) = \int \prod_{c=1}^d \prod_{i_c=1}^N dx_{i_c}^{(c)}\ \exp \Bigl(\frac{1}{2}N^{d-1} \sum_{c=1}^d \tr_{E_c} D_c^2 + F_{N, \mathcal{P}}(\{D_c\}) + 2\sum_{c=1}^d \sum_{1\leq i_c<j_c\leq N} \ln |x_{i_c}^{(c)} - x_{j_c}^{(c)}| \Bigr)
\end{equation}
If one looks for saddle-points such that the eigenvalues do not scale with $N$, then we see that 
\begin{itemize}
\item the quadratic terms scale like $N^{d}$,
\item all terms from $F_{N, \mathcal{P}}(\{X_c\})$ scale like $N^{d'}$
\item all the terms from Vandermonde determinants scale like $N^2$.
\end{itemize}
This means that we can look for a solution without repulsion between eigenvalues, meaning the eigenvalues $x_i^{(c)}$ can simply fall onto their preferred value for all $c$. We moreover set $d'=d$ so the two first types of contributions have the same scale.

Consider
\begin{equation} \label{SaddlePoint}
X_c = \alpha_c \mathbbm{1}_{E_c} + \frac{1}{N^{\frac{d-2}{2}}} M_c
\end{equation}
The $\alpha_c$s are set on a saddle point of the action for $\{X_c\}$. Moreover, the scaling $1/N^{\frac{d-2}{2}}$ of the fluctuations is chosen so that the leading terms of the action in the $X_c$s scale like $N^2$, i.e. the same as the Vandermonde contributions.

\subsubsection{Matrix model for the fluctuations}

Before plugging \eqref{SaddlePoint} into \eqref{NewExpansion}, let us see its effect on a multi-trace interaction for a single color (which we do not write explicitly),
\begin{equation}
I_\lambda\Bigl(\alpha\mathbbm{1}_V + \frac{1}{N^{\frac{d-2}{2}}} M\Bigr) = \prod_{i=1}^{\ell(\lambda)} \tr_V \Bigl(\alpha\mathbbm{1}_V + \frac{1}{N^{\frac{d-2}{2}}} M\Bigr)^{\lambda_i} = \sum_{\mu_1, \dots, \mu_{\ell(\lambda)}=0}^{\lambda_1, \dotsc, \lambda_{\ell(\lambda)}} \prod_{i=1}^{\ell(\lambda)} \binom{\lambda_i}{\mu_i} \alpha^{\lambda_i-\mu_i}\, N^{-\frac{d-2}{2}\mu_i} \tr_V X^{\mu_i}
\end{equation}
It is easily rewritten as a sum over partitions $\mu\subset \lambda$. We denote the skew-partition $\lambda-\mu = (\lambda_1-\mu_1, \dotsc, \lambda_{\ell(\lambda)}-\mu_{\ell(\lambda)})$ (we can complete $\mu$ with zeros if needed for $\mu_{i\geq \ell(\mu)}$) and $|\lambda-\mu| = \sum_{i\geq 1} \lambda_i-\mu_i$. Also use the notation $\binom{\lambda}{\mu} = \prod_{i=1}^{\ell(\mu)} \binom{\lambda_i}{\mu_i}$. Then
\begin{equation}
I_\lambda\Bigl(\alpha\mathbbm{1}_V + \frac{1}{N^{\frac{d-2}{2}}} M\Bigr) = \sum_{\mu\subset\lambda} \binom{\lambda}{\mu}\,\alpha^{|\lambda-\mu|}\, N^{\ell(\lambda)-\ell(\mu) - \frac{d-2}{2}|\mu|}\ I_\mu(x)
\end{equation}
We thus get the same expansion for $I_{\vlam}(\{\alpha_c \mathbbm{1} + M_c/N^{\frac{d-2}{2}}\})$ by taking a product over the colors. Overall,
\begin{equation}
F_{N, \mathcal{P}}\Bigl(\{\alpha_c \mathbbm{1}_{E_c} + \frac{1}{N^{\frac{d-2}{2}}} M_c\}\Bigr) = \sum_{\vmu} N^{2-\ell(\vmu) - \frac{d-2}{2}(|\vmu|-2)}\, s_{N, \mathcal{P}}(\vmu)\, I_{\vmu}(\{M_c\})
\end{equation}
where the sum over all $d$-tuple of partitions $\vmu = (\mu^{(1)}, \dotsc, \mu^{(d)})$ and $\ell(\vmu) = \sum_{c=1}^d \ell(\mu^{(c)})$ and $|\vmu| = \sum_{c=1}^d |\mu^{(c)}|$. The coefficients are
\begin{equation}
s_{N, \mathcal{P}}(\vmu) = \sum_{\vlam \supset \vmu} t_{N, \mathcal{P}}(\vlam) \prod_{c=1}^d \binom{\lambda^{(c)}}{\mu^{(c)}} \alpha_c^{|\lambda^{(c)} - \mu^{(c)}|}
\end{equation}
They all have $1/N$ expansions starting at order $\mathcal{O}(1)$, simply obtained by using the $1/N$ expansion of the coefficients $t_{N, \mathcal{P}}(\vlam) = \sum_{i \geq 0} N^{-\delta_i(\vlam)} t_{\mathcal{P}}^{(i)}(\vlam)$.

As a result, the matrix model from Theorem \ref{thm:PartialIntegration} becomes
\begin{equation} \label{ShiftedPartitionFunction}
\begin{aligned}
&Z_{\MM}(N, \mathcal{P}) = e^{\frac{1}{2} N^2 \sum_c \alpha_c^2 + F_{N, \mathcal{P}}(\{\alpha_c\mathbbm{1}_{E_c}\})}\, Z_{\text{Fluct}}(N, \mathcal{P})\\ \qquad \text{with} \qquad &Z_{\text{Fluct}}(N, \mathcal{P}) = \int \prod_{c=1}^d dM_c\ \exp \Bigl(\frac{1}{2}N\sum_c \tr_{E_c}M_c^2 + S_{N, \mathcal{P}}(\{M_c\})\Bigr)
\end{aligned}
\end{equation}
where $S_{N, \mathcal{P}}(\{M_c\})$ is
\begin{equation} \label{ShiftedAction}
S_{N, \mathcal{P}}(\{M_c\}) = \sum_{\vmu, |\vmu|\geq 2} N^{2-\ell(\vmu) - \frac{d-2}{2}(|\vmu|-2)} s_{N, \mathcal{P}}(\vmu)\ I_{\vmu}(\{M_c\})
\end{equation}
with
\begin{equation}
s_{N, \mathcal{P}}(\vmu) = \sum_{i\geq 0} N^{-\eta_i(\vmu)} s^{(i)}_{\mathcal{P}}(\vmu)
\end{equation}
and $(\eta_i(\vmu))_{i\geq 0}$ is an increasing sequence of non-negative rationals. The reason there is no linear term in \eqref{ShiftedPartitionFunction} (and no $|\vmu|=1$ term above) is that the set $\{\alpha_c\}$ is a saddle point.

\section{Blobbed topological recursion for colored, multi-trace matrix models} \label{sec:Blobbed}

Going back to the discussion of Section \ref{sec:MultitraceMatrixModels}, but replacing the original model $Z_{\MM}(N, \mathcal{P})$ with the one for the fluctuations, $Z_{\text{Fluct}}(N, \mathcal{P})$ and action \eqref{ShiftedAction}, we see that the latter differs from the multi-trace models of \cite{BlobbedTR} by 
\begin{itemize}
\item the fact that it has $d$ matrices,
\item the fact that the $1/N$ expansions of the coupling constants in \eqref{ShiftedAction} are not topological.
\end{itemize}

To remedy the first difference, consider the following model, which is a generalization of \cite{BlobbedTR} to colored matrices. For a vector $\vell = (\ell_1, \dotsc, \ell_d)$ of non-negative integers, denote $|\vell| = \sum_{c=1}^d\ell_c$. Also denote
\begin{equation}
E_{\vell} = \bigotimes_{c=1}^d E_c^{\otimes \ell_c}
\end{equation}
and $M_c^{(i)} = \mathbbm{1}\otimes \dotsb\otimes M_c\otimes \dotsb\otimes \mathbbm{1}$ the matrix acting on $E_c^{\otimes \ell_c}$ by $M_c$ on the $i$-th factor and the identity everywhere else. Consider the model with partition function
\begin{equation} \label{TopologicalModel}
Z_{\text{Top}}(N, S) = \int \prod_{c=1}^d dM_c\ \exp \sum_{\vell} N^{2-|\vell|} \tr_{E_{\vell}} S_{N, \vell}(M_1^{(1)}, \dotsc, M_1^{(\ell_1)}; \dotsb; M_d^{(1)}, \dotsc, M_d^{(\ell_d)})
\end{equation}
When the action has the following expansion
\begin{equation} \label{TopologicalAction}
S_{N, \vell}(M_1^{(1)}, \dotsc, M_1^{(\ell_1)}; \dotsb; M_d^{(1)}, \dotsc, M_d^{(\ell_d)}) = \sum_{h\geq 0} N^{-2h} S^{(h)}_{\vell}(M_1^{(1)}, \dotsc, M_1^{(\ell_1)}; \dotsb; M_d^{(1)}, \dotsc, M_d^{(\ell_d)})
\end{equation}
it is said to be topological (because it leads to an expansion in $1/N^2$ for the free energy, like the genus expansion of single-trace matrix models). When it has an expansion onto products of traces of the matrices $M_c$s, i.e.
\begin{equation} \label{StuffedMapsAction}
\tr_{E_{\vell}} S_{N, \vell}(M_1^{(1)}, \dotsc, M_1^{(\ell_1)}; \dotsb; M_d^{(1)}, \dotsc, M_d^{(\ell_d)}) = \sum_{\substack{\vlam\\ \ell(\lambda^{(c)}) = \ell_c}} S_N(\vlam) I_{\vlam}(\{M_c\})
\end{equation}
and an invertible quadratic form, then the Feynman expansion corresponds to an expansion onto stuffed maps (non-necessarily topological), as described in Section \ref{sec:MultitraceMatrixModels}. Finally, a topological expansion onto stuffed maps is obtained from
\begin{equation} \label{TopologicalStuffedMapsAction}
\tr_{E_{\vell}} S_{N, \vell}(M_1^{(1)}, \dotsc, M_1^{(\ell_1)}; \dotsb; M_d^{(1)}, \dotsc, M_d^{(\ell_d)}) = \sum_{\substack{h\geq 0\\ \ell(\lambda^{(c)}) = \ell_c}} N^{-2h} S^{(h)}(\vlam) I_{\vlam}(\{M_c\})
\end{equation}
We will present the loop equations of this model. However, we will calculate the disc and cylinder function only in the case where a special property of \eqref{ShiftedAction} is satisfied,
\begin{equation} \label{GaussianReduction}
S^{(0)}(\vlam) = \mathcal{O}(N^{\frac{d-2}{2}}) \qquad \text{for $|\vlam| \geq 3$}
\end{equation}
which in fact will imply that the large $N$ limit is that of a Gaussian model.

The second key difference with \cite{BlobbedTR} is that \eqref{ShiftedAction} is in general non-topological. Since the leading order of the coupling constants is nevertheless $N^{2-\ell(\vlam)}$, as in \eqref{TopologicalModel}. Therefore, we can define new coupling constants as follows
\begin{equation} \label{TopologicalCoefficients}
N^{-2h} S^{(h)}_{N, \mathcal{P}}(\vlam) = \sum_{\substack{i\geq 0\\ 2h = \lfloor \frac{d-2}{2}(|\vlam|-2) + \delta_i(\vlam)\rfloor}} N^{-\frac{d-2}{2}(|\vlam|-2) + \delta_i(\vlam)}\, s^{(i)}_{\mathcal{P}}(\vlam)
\end{equation}
i.e. we absorb $s^{(i)}_{\mathcal{P}}(\vlam)$ and its $N$-dependent prefactor into the coefficients $S^{(h)}_{N, \mathcal{P}}(\vlam)$ by rounding down its order to the closest $2h$.

Here we will follow the same route as in \cite{QuarticTR, BlobbedTR} and do everything as if $S^{(h)}_{N, \mathcal{P}}(\vlam)$ were independent of $N$, except for the explicit evaluation of the disc function and the cylinder function using \eqref{GaussianReduction}. In principle, one also has to eventually re-expand the coefficients $S^{(h)}_{N, \mathcal{P}}(\vlam)$ as in \eqref{TopologicalCoefficients}.

\subsection{Correlation functions}

If $P(\{M_c\})$ is an observable, its expectation is
\begin{equation}
\langle P(\{M_c\}) \rangle = \frac{1}{Z_{\text{Top}}(N, S)} \int \prod_{c=1}^d dM_c\ P(\{M_c\})\ \exp \sum_{\vell} N^{2-|\vell|} \tr_{E_{\vell}} S_{N, \vell}(M_1^{(1)}, \dotsc, M_1^{(\ell_1)}; \dotsb; M_d^{(1)}, \dotsc, M_d^{(\ell_d)})
\end{equation}
The natural set of observables are the expectations of products of traces of the matrices $\{M_c\}$, i.e. expectations of $I_{\vlam}(\{M_c\})$. We introduce the following generating functions for products of $n$ traces of colors $c_1, \dotsc, c_n$,
\begin{equation}
\overline{W}_{n}(x_1, c_1; \dotsc; x_n, c_n) = \Big\langle \prod_{i=1}^n \tr_{V_{c_i}} \frac{1}{x_i - M_{c_i}} \Big\rangle = \sum_{k_1, \dotsc, k_n \geq0} \overline{W}^{(k_1, c_1; \dotsc; k_n, c_n)}_n \prod_{i=1}^n x_i^{-k_i-1}
\end{equation}
i.e.
\begin{equation}
\overline{W}^{(k_1, c_1; \dotsc; k_n, c_n)}_n = \Bigl[ \prod_{i=1}^n x^{-k_i-1}\Bigr] \overline{W}_{n}(x_1, c_1; \dotsc; x_n, c_n) = \Big\langle \prod_{i=1}^n \tr_{E_{c_i}} M_{c_i}^{k_i}\Big\rangle
\end{equation}
and their connected counterparts
\begin{equation}
W_{n}(x_1, c_1; \dotsc; x_n, c_n) = \Big\langle \prod_{i=1}^n \tr_{V_{c_i}} \frac{1}{x_i - M_{c_i}} \Big\rangle_{\text{conn}} = \sum_{k_1, \dotsc, k_n \geq0} W^{(k_1, c_1; \dotsc; k_n, c_n)}_n \prod_{i=1}^n x_i^{-k_i-1}
\end{equation}
i.e.
\begin{equation}
W^{(k_1, c_1; \dotsc; k_n, c_n)}_n = \Bigl[ \prod_{i=1}^n x^{-k_i-1}\Bigr] W_{n}(x_1, c_1; \dotsc; x_n, c_n) = \Big\langle \prod_{i=1}^n \tr_{E_{c_i}} M_{c_i}^{k_i}\Big\rangle_{\text{conn}}
\end{equation}
The variable $x_i$ is said to be of color $c_i$ when it is the generating parameter for $\tr_{E_{c_i}} M_{c_i}^{k_i}$ expanded around infinity. We denote $\C_c$ the copy of $\C$ of color $c$, so that $x_i\in U_{c_i}\C_{c_i}$ for some open subset of $\C_{c_i}$. 

We will also need the functions
\begin{equation}
W_{n}^{(k_1, c_1'; \dotsc; k_l, c_l')}(x_1, c_1;\dotsc; x_{n-l}, c_{n-l}) = \Big\langle \prod_{i=1}^l \tr_{E_{c'_i}} M_{c'_i}^{k_i}\ \prod_{j=1}^{n-l} \tr_{E_{c_j}} \frac{1}{x_j-M_{c_j}} \Big\rangle
\end{equation}
which are obtained from $W_n(x_1, c_1; \dotsc; x_{n-l}, c_{n-l}; x'_1, c_1'; \dotsc; x'_l, c'_l)$ by extracting some series coefficients 
\begin{equation}
W_{n}^{(k_1, c_1'; \dotsc; k_l, c_l')}(x_1, c_1;\dotsc; x_{n-l}, c_{n-l}) = \Bigl[ \prod_{i=1}^l {x'}_i^{-k_i-1}\Bigr] W_n(x_1, c_1; \dotsc; x_{n-l}, c_{n-l}; x'_1, c_1'; \dotsc; x'_l, c'_l)
\end{equation}
As a special case of such functions, when $\vlam = (\lambda^{(1)}, \dotsc, \lambda^{(d)})$ is a $d$-tuple of partitions, and $c\in\{1, \dotsc, d\}$ and $j\in \{1, \dotsc, \ell(\lambda^{(c)})\}$, we denote $\vlam_{(c,j)} = ({\lambda'}^{(1)}, \dotsc, {\lambda'}^{(d)})$ the $d$-tuple of partitions with 
\begin{equation}
{\lambda'}^{(c')} = \lambda^{(c')} \quad \text{for $c'\neq c$ and }\quad {\lambda'}^{(c)} = \bigl(\lambda^{(c)}_1, \dotsc, \lambda^{(c)}_{j-1}, \lambda^{(c)}_{j+1}, \dotsc, \lambda^{(c)}_{\ell(\lambda^{(c)})}\bigr)
\end{equation}
i.e. the $j$-th row of $\lambda^{(c)}$ is removed. Then we denote
\begin{equation}
W_n^{(\vlam_{(c,j)})}(x_1, c_1; \dotsc; x_p, c_p) = W_n^{(\lambda^{c'}_i, c')_{c'=1, \dotsc, d\text{ and } i=1, \dotsc, \ell(\lambda^{(c')})}^{(c',i)\neq (c, j)}}(x_1, c_1; \dotsc; x_p, c_p)
\end{equation}
with $n = p+\ell(\vlam)-1$.

It will also appear natural to introduce \emph{global} correlation functions which are defined on (an open subset of)
\begin{equation}
E_n = \Bigl(\bigcup_{c=1}^d \C_c\setminus \Gamma_c\Bigr)^n
\end{equation}
so that each $x_i$ can be evaluated on any color. These correlation functions are
\begin{equation}
W_n (x_1, \dotsc, x_n) = \sum_{c_1, \dotsc, c_n=1}^d W_{n}(x_1, c_1; \dotsc; x_n, c_n)\, \prod_{i=1}^n \mathbbm{1}(x_i, c_i).
\end{equation}
where $\mathbbm{1}(x,c)$ is 1 if $x\in\C_c$ and 0 else. In terms of components
\begin{equation}
W_n (x_1, \dotsc, x_n) = \sum_{\substack{k_1, \dotsc, k_n\geq 0\\ c_1, \dotsc, c_n=1, \dotsc, d}} W_{n}^{(k_1, c_1; \dotsc; k_n, c_n)}\, \prod_{i=1}^n x_i^{-k_i-1} \mathbbm{1}(x_i, c_i).
\end{equation}
The correlation functions $W_{n}(x_1, c_1; \dotsc; x_n, c_n)$ are said to be the \emph{local expressions} of $W_n(x_1, \dotsc, x_n)$, since each variables is assigned a fixed color. 

\subsection{Loop equations}

In this section we use the form \eqref{StuffedMapsAction} of the action.

\subsubsection{1-point equation}

The Schwinger-Dyson/loop equations are obtained from
\begin{equation}
\frac{1}{Z_{\text{Top}}(N, S)}\int \prod_{c=1}^d dM_c \sum_{a,b=1}^N \frac{\partial}{\partial (M_c)_{ab}}\Bigl( \bigl(M_c^n\bigr)_{ab}\ e^{\sum_{\vlam} N^{2-\ell(\vlam)} S_N(\vlam) I_{\vlam}(\{M_c\})}\Bigr) = 0.
\end{equation}
by making the action of the derivative above explicit on each term of integrand, and summing over $n\geq0$ with $x^{-n-1}$. One gets
\begin{equation} \label{1PointEquation}
\overline{W}_2(x,c;x,c) + \sum_{\vlam} N^{2-\ell(\vlam)} S_N(\vlam) \sum_{j=1}^{\ell(\lambda^{(c)})} \lambda^{(c)}_j \Big\langle \tr_{E_c} \frac{M_c^{\lambda^{(c)}_j-1}}{x-M_c} \prod_{i\neq j} \tr_{E_c} M_c^{\lambda^{(c)}_i}\ \prod_{c'\neq c} I_{\lambda^{(c')}}(M_{c'}) \Big\rangle = 0
\end{equation}

We now work towards rewriting \eqref{1PointEquation} in terms of connected correlation functions. Denote $\{R_1, R_2, \dotsc\}$ a partition of $\{1, \dotsc, n\}$, then
\begin{equation}
\overline{W}_n(x_1, c_1; \dotsc; x_n, c_n) = \sum_{\{R_1, R_2, \dotsc\}} \prod_\alpha W_{|R_\alpha|}(\{x_{R_\alpha}, c_{R_\alpha}\})
\end{equation}
with the short hand notation $\{x_{R_\alpha}, c_{R_\alpha}\} = \{x_r, c_r\}_{r\in R_\alpha}$. This first gives
\begin{equation}
\overline{W}_2(x,c;x,c) = W_1(x,c)^2 + W_2(x,c;x,c)
\end{equation}

The contribution of the interaction is split in the usual way using
\begin{equation}
\frac{M_c^{\lambda^{(c)}_j-1}}{x-M_c} = \frac{x^{\lambda^{(c)}_j-1}}{x-M_c} + \sum_{q=0}^{\lambda^{(c)}_j-2} x^{\lambda^{(c)}_j-2-q} M_c^q
\end{equation}
which leads to 
\begin{equation}
\Big\langle \tr_{E_c} \frac{M_c^{\lambda^{(c)}_j-1}}{x-M_c} \prod_{i\neq j} \tr_{E_c} M_c^{\lambda^{(c)}_i}\ \prod_{c'\neq c} I_{\lambda^{(c')}}(M_{c'}) \Big\rangle 
= x^{\lambda^{(c)}_j-1} \overline{W}_{\ell(\vlam)}^{(\vlam_{(c,j)})}(x,c)
- \sum_{q=1}^{\lambda^{(c)}_j-2} x^{\lambda^{(c)}_j-2-q} \overline{W}_{\ell(\vlam)}^{(\vlam_{(c,j)};q,c)}
\end{equation}
Then rewrite each of the two terms using connected correlations. To do so, denote
\begin{equation}
L(\vlam) = \{(c, i); c\in\{1, \dotsc, d\} \text{ and } i\in\{1, \dotsc, \ell(\lambda^{(c)})\}
\end{equation}
and $\mathcal{P}(L(\vlam))$ the set of partitions of $L(\vlam)$, i.e. $R=\{R_1, \dotsc, R_{\ell(R)}\}\in\mathcal{P}(L(\vlam))$ if the $R_\alpha$s are non-empty, disjoint, and $\mathop{\dot\bigcup}_{\alpha} R_{\alpha} = L(\vlam)$. Moreover, for a fixed pair $(c, j)\in L(\vlam)$, we denote $R(c,j)$ the part which contains $(c, j)$, and
\begin{equation}
R(c,j) = \{(c,j)\}\cup R'(c,j)
\end{equation}
where $R'(c,j)$ can be empty. Then, 
\begin{equation}
\overline{W}_{\ell(\vlam)}^{(\vlam_{(c,j)})}(x,c) = \sum_{R\in\mathcal{P}(L(\vlam))} W_{|R(c,j)|}^{(\lambda_{R'(c,j)}, c_{R'(c,j)})}(x,c) \prod_{\substack{\alpha\\ R_{\alpha}\neq R(c,j)}} W_{|R_\alpha|}^{(\lambda_{R_\alpha}, c_{R_\alpha})}
\end{equation}
Here we use the short-hand notation $(\lambda_{R_\alpha}, c_{R_\alpha})=(\lambda^{(c')}_{i}, c')_{(c',i)\in R_\alpha}$. Furthermore
\begin{equation}
\overline{W}_{\ell(\vlam)}^{(\vlam_{(c,j)};q,c)} = \sum_{R\in\mathcal{P}(L(\vlam))} W_{|R(c,j)|}^{(\lambda_{R'(c,j)}, c_{R'(c,j)}; q,c)} \prod_{\substack{\alpha\\ R_{\alpha}\neq R(c,j)}} W_{|R_\alpha|}^{(\lambda_{R_\alpha}, c_{R_\alpha})}
\end{equation}

The loop equation \eqref{1PointEquation} then reads
\begin{multline}
W_1(x,c)^2 + W_2(x,c;x,c) + \sum_{\substack{\vlam\\ R\in\mathcal{P}(L(\vlam))}} \sum_{j=1}^{\ell(\lambda^{(c)})} N^{2-\ell(\vlam)} S_N(\vlam) \prod_{\substack{\alpha\\ R_{\alpha}\neq R(c,j)}} W_{|R_\alpha|}^{(\lambda_{R_\alpha}, c_{R_\alpha})}\\
\times \lambda^{(c)}_j \Bigl(x^{\lambda^{(c)}_j-1} W_{|R(c,j)|}^{(\lambda_{R'(c,j)}, c_{R'(c,j)})}(x,c) - \sum_{q=0}^{\lambda^{(c)}_j-2} x^{\lambda^{(c)}_j-2-q} W_{|R(c,j)|}^{(\lambda_{R'(c,j)}, c_{R'(c,j)}; q,c)}\Bigr)  = 0
\end{multline}
which can also be written as a global equation
\begin{multline} \label{Global1PointEquation}
W_1(x)^2 + W_2(x,x) + \sum_{\substack{\vlam\\ R\in\mathcal{P}(L(\vlam))}} \sum_{c=1}^d \mathbbm{1}(x,c) \sum_{j=1}^{\ell(\lambda^{(c)})} N^{2-\ell(\vlam)} S_N(\vlam) \prod_{\substack{\alpha\\ R_{\alpha}\neq R(c,j)}} W_{|R_\alpha|}^{(\lambda_{R_\alpha}, c_{R_\alpha})}\\
\times \lambda^{(c)}_j \Bigl(x^{\lambda^{(c)}_j-1} W_{|R(c,j)|}^{(\lambda_{R'(c,j)}, c_{R'(c,j)})}(x,c) - \sum_{q=0}^{\lambda^{(c)}_j-2} x^{\lambda^{(c)}_j-2-q} W_{|R(c,j)|}^{(\lambda_{R'(c,j)}, c_{R'(c,j)}; q,c)}\Bigr)  = 0
\end{multline}

\subsubsection{\texorpdfstring{$n$}-point equations}

The single-trace terms of the potential are
\begin{equation}
\sum_{c=1}^d \sum_{\lambda\geq 1} N \bigl(s^{(0)}(\lambda, c) + o(1)\bigr) \tr_{E_c}M_c^\lambda
\end{equation}
Then the loop insertion operator with respect to color $c\in\{1, \dotsc, d\}$ is
\begin{equation}
\delta_{x} = \sum_{c=1}^d \sum_{\lambda\geq 0} \mathbbm{1}(x,c) x^{-\lambda-1} \frac{\partial}{\partial s^{(0)}(\lambda, c)}
\end{equation}
Furthermore, for integers $A, n$, denote $\mathcal{I}_A(n)$ the set of lists of the form $I=(I_1, \dotsc, I_A)$, such that 
\begin{itemize}
\item $I_\alpha\subseteq \{2,\dotsc, n\}$ is possibly empty, 
\item the non-empty $I_\alpha$s are disjoint,
\item $\bigcup_{\alpha=1}^A I_\alpha = \{2, \dotsc, n\}$
\end{itemize}

Repeated actions of the loop insertion operator on \eqref{Global1PointEquation} gives
\begin{multline} \label{nPointEquation}
\sum_{(I_1, I_2)\in \mathcal{I}_2(n)} W_{|I_1|+1}(x_1, x_{I_1}) W_{|I_2|+1}(x_1,x_{I_2}) + W_{n+1}(x_1, x_1, \dotsc, x_n) \\
+ \sum_{j=2}^n \mathbbm{1}(x_1,x_j) \frac{\partial}{\partial x_j} \frac{W_{n-1}(x_2, \dotsc, x_n) - W_{n-1}(x_2, \dotsc, x_{j-1}, x_1, x_j, \dotsc, x_n)}{x_j-x_1}\\
+ \sum_{\substack{\vlam\\ R\in\mathcal{P}(L(\vlam))}} \sum_{(I_1, \dotsc, I_{\ell(R)}) \in \mathcal{I}_{\ell(R)}(n)} \sum_{c=1}^d \mathbbm{1}(x_1,c) \sum_{j=1}^{\ell(\lambda^{(c)})} N^{2-\ell(\vlam)} S_N(\vlam) \prod_{\substack{\alpha\\ R_{\alpha}\neq R(c,j)}} W_{|R_\alpha|+ |I_\alpha|}^{(\lambda_{R_\alpha}, c_{R_\alpha})}(x_{I_\alpha})\\
\times \lambda^{(c)}_j \Bigl(x_1^{\lambda^{(c)}_j-1} W_{|R(c,j)|+|I(c,j)|}^{(\lambda_{R'(c,j)}, c_{R'(c,j)})}(x_1,x_{I(c,j)}) - \sum_{q=0}^{\lambda^{(c)}_j-2} x_1^{\lambda^{(c)}_j-2-q} W_{|R(c,j)|+|I(c,j)|}^{(\lambda_{R'(c,j)}, c_{R'(c,j)}; q,c)}(x_{I(c,j)}\Bigr)  = 0
\end{multline}
$I(c,j)$ is defined as $I_{\alpha^*}$ where $\alpha^*$ is the index such that $R_{\alpha^*} = R(c, j)$. Moreover, $\mathbbm{1}(x,y) = \sum_{c=1}^d \mathbbm{1}(x,c) \mathbbm{1}(y,c)$ is 1 if and only if $x$ and $y$ are variables of the same colors.

\subsubsection{Topological expansion}

All correlation functions admit the usual topological expansion
\begin{equation}
W^{(k_1, c_1; \dotsc; k_p, c_p)}_{n+p}(x_1, \dotsc, x_n) = \sum_{h\geq 0} N^{2-n+p-2h}\, W^{(k_1, c_1; \dotsc, k_p, c_p)}_{n+p,h}(x_1, \dotsc, x_n)
\end{equation}
Plugging it into \eqref{nPointEquation} leads to
\begin{multline} \label{TopologicalLoopEquation}
\sum_{\substack{(I_1, I_2)\in \mathcal{I}_2(n)\\ h=0, \dotsc, g}} W_{|I_1|+1, h}(x_1, x_{I_1}) W_{|I_2|+1, g-h}(x_1,x_{I_2}) + W_{n+1, g-1}(x_1, x_1, \dotsc, x_n) \\
+ \sum_{j=2}^n \mathbbm{1}(x_1,x_j) \frac{\partial}{\partial x_j} \frac{W_{n-1, g}(x_2, \dotsc, x_n) - W_{n-1, g}(x_2, \dotsc, x_{j-1}, x_1, x_j, \dotsc, x_n)}{x_j-x_1}\\
+ \sum_{\substack{\vlam\\ R\in\mathcal{P}(L(\vlam))}} \sum_{(I_1, \dotsc, I_{\ell(R)}) \in \mathcal{I}_{\ell(R)}(n)} \sum_{\substack{h, h_1, \dotsc, h_{\ell(R)}\geq 0\\ \ell(\vlam) - \ell(R) + h + \sum_{\alpha=1}^{\ell(R)} h_{\alpha}= g}} \sum_{c=1}^d \mathbbm{1}(x_1,c) \sum_{j=1}^{\ell(\lambda^{(c)})} S^{(h)}(\vlam) \prod_{\substack{\alpha\\ R_{\alpha}\neq R(c,j)}} W_{|R_\alpha|+ |I_\alpha|, h_\alpha}^{(\lambda_{R_\alpha}, c_{R_\alpha})}(x_{I_\alpha})\\
\times \lambda^{(c)}_j \Bigl(x_1^{\lambda^{(c)}_j-1} W_{|R(c,j)|+|I(c,j)|, h(c,j)}^{(\lambda_{R'(c,j)}, c_{R'(c,j)})}(x_1,x_{I(c,j)}) - \sum_{q=0}^{\lambda^{(c)}_j-2} x_1^{\lambda^{(c)}_j-2-q} W_{|R(c,j)|+|I(c,j)|, h(c,j)}^{(\lambda_{R'(c,j)}, c_{R'(c,j)}; q,c)}(x_{I(c,j)}\Bigr)  = 0
\end{multline}
where $h(c,j)$ is the $h_{\alpha^*}$ where $\alpha^*$ is such that $R_{\alpha^*} = R(c,j)$.

\subsection{Large \texorpdfstring{$N$} limit}

Restricting \eqref{TopologicalLoopEquation} to $g=0$ gives the constraint $h=h_1=\dotsb=h_{\ell(R)}=0$, which in turn implies $\ell(R) = \ell(\vlam)$. This reduces the sum over partitions of $L(\vlam)$ to a single one for each $\vlam$, i.e. the partition into singletons, $R=\{\{c',i\}\}$, for $c'=1, \dotsc, d$ and $i=1, \dotsc, \ell(\lambda^{(c')})$. For $(I_1, \dotsc, I_{\ell(\vlam)})\in \mathcal{I}_{\ell(\vlam)}(n)$, we denote $I(c',i) = I_\alpha$ when $R_{\alpha} = \{c',i\}$. This gives, for a generic potential,
\begin{multline} \label{PlanarLoopEquation}
\sum_{(I_1, I_2)\in \mathcal{I}_2(n)} W_{|I_1|+1, 0}(x_1, x_{I_1}) W_{|I_2|+1, 0}(x_1,x_{I_2}) \\
+ \sum_{j=2}^n \mathbbm{1}(x_1,x_j) \frac{\partial}{\partial x_j} \frac{W_{n-1, 0}(x_2, \dotsc, x_n) - W_{n-1, 0}(x_2, \dotsc, x_{j-1}, x_1, x_j, \dotsc, x_n)}{x_j-x_1}\\
+ \sum_{\substack{\vlam\\ (I_1, \dotsc, I_{\ell(\vlam)})\in \mathcal{I}_{\ell(\vlam)}(n)}} \sum_{c=1}^d \mathbbm{1}(x_1,c) \sum_{j=1}^{\ell(\lambda^{(c)})} S^{(0)}(\vlam) \prod_{(c',i)\neq (c,j)} W_{|I(c',i)|+1, 0}^{(\lambda^{(c')}_i,c')}(x_{I(c',i)})\\
\times \lambda^{(c)}_j \Bigl(x_1^{\lambda^{(c)}_j-1} W_{|I(c,j)|+1, 0}(x_1,x_{I(c,j)}) - \sum_{q=0}^{\lambda^{(c)}_j-2} x_1^{\lambda^{(c)}_j-2-q} W_{|I(c,j)|+1, 0}^{(q,c)}(x_{I(c,j)}\Bigr)  = 0.
\end{multline}

In the case of $Z_{\text{Fluct}}(N, \mathcal{P})$, Equation \eqref{ShiftedPartitionFunction}, which we are interested in, the leading order coefficients $S^{(0)}(\vlam)$ actually have an extra $N$-dependence, satisfies \eqref{GaussianReduction}. It implies that the partitions $\vlam$ appearing in \eqref{PlanarLoopEquation} must be of size 2, $|\vlam|=2$. These partitions are of the form $\vlam=(\lambda^{(1)}, \dotsc, \lambda^{(d)})$ with
\begin{itemize}
\item either $\lambda^{(c)} = (2)$ for some $c$. We write the corresponding part of the action $\frac{1}{2}\sum_{c=1}^d a_c \tr_{E_c}M_c^2$.
\item or $\lambda^{(c)}=(1,1)$ for some $c$. We write the corresponding part of the action $\frac{1}{2}\sum_{c=1}^d b_{cc} \bigl(\tr_{E_c}M_c\bigr)^2$.
\item or $\lambda^{(c)}=(1)$ and $\lambda^{(c')}=(1)$ for some $c\neq c'$.

We write the corresponding part of the action $\frac{1}{2}\sum_{c\neq c'}^d b_{cc'} \tr_{E_c}M_c\ \tr_{E_{c'}}M_{c'}$, with $b_{cc'}=b_{c'c}$.
\end{itemize}
This means that at large $N$, the correlation functions behave as if the action simply was
\begin{equation} \label{GaussianCoefficients}
-\frac{1}{2}N\sum_c \tr_{E_c}M_c^2 - \sum_{|\vlam|= 2} N^{2-\ell(\vlam)} s^{(0)}_{\mathcal{P}}(\vlam)\ I_{\vlam}(\{M_c\}) = \frac{1}{2}N \sum_c a_c \tr_{E_c}M_c^2 + \frac{1}{2}\sum_{c,c'} b_{c, c'} \tr_{E_c}M_c\, \tr_{E_{c'}} M_{c'}.
\end{equation}
Then \eqref{PlanarLoopEquation} becomes
\begin{multline} \label{PlanarGaussianLoopEquation}
\sum_{(I_1, I_2)\in \mathcal{I}_2(n)} W_{|I_1|+1, 0}(x_1, x_{I_1}) W_{|I_2|+1, 0}(x_1,x_{I_2}) \\
+ \sum_{j=2}^n \mathbbm{1}(x_1,x_j) \frac{\partial}{\partial x_j} \frac{W_{n-1, 0}(x_2, \dotsc, x_n) - W_{n-1, 0}(x_2, \dotsc, x_{j-1}, x_1, x_j, \dotsc, x_n)}{x_j-x_1}\\
- \sum_{c=1}^d \mathbbm{1}(x_1,c) \Bigl(a_c x_1 W_n(x_1, \dotsc, x_n) + \sum_{(I_1, I_2)\in \mathcal{I}_2(n)} W_{|I_1|+1,0}(x_1, x_{I_1}) \sum_{c'=1}^d b_{cc'} W_{|I_2|+1,0}^{(1,c')}(x_{I_2})\Bigr) = 0.
\end{multline}

\subsubsection{Disc function}

The disc function of color $c$ is $W_{1,0}(x,c)=\lim_{N\to\infty} \frac{1}{N} W_1(x,c)$, i.e. the generating function of planar stuffed maps with a single boundary of arbitrary perimeter. The global disc function is $W_{0,1}(x) = \sum_{c=1}^d \mathbbm{1}(x,c) W_{0,1}(x,c)$. 

Equation \eqref{PlanarGaussianLoopEquation} can be directly applied. Before doing so, however, let us briefly mention the case of a generic potential, by setting $n=1$ in \eqref{PlanarLoopEquation}. For a fixed color $c$,
\begin{equation} \label{GenericDiscEquation}
W_{0,1}(x,c)^2 + \sum_{\vlam} \sum_{j=1}^{\ell(\lambda^{(c)})} s^{(0)}(\vlam) \biggl(\prod_{(c',i)\neq (c,j)} W_{1,0}^{(\lambda^{(c')}_i,c')}\biggr) \lambda^{(c)}_j \Bigl(x^{\lambda^{(c)}_j-1} W_{1,0}(x,c) - \sum_{q=0}^{\lambda^{(c)}_j-2} x^{\lambda^{(c)}_j-2-q} W_{1,0}^{(q,c)}\Bigr) = 0
\end{equation}
This is thus a set of $d$ equations on $d$ functions $W_{0,1}(x,c)$ with a single catalytic variable, and some explicit dependence on coefficients of the unknown series. This generalizes the classical equation of the 1-matrix, multi-trace model \cite{BlobbedTR}. The analytic properties of its disc function, described in \cite{BorotBouttierGuitter}, derive from an extension of \cite{BM-Jehanne}, which applies to stuffed maps with bounded face degrees, to stuffed maps with unbounded face degrees. Here, we would require a further extension, to a system of equations.

Instead of pursuing the generic route, we focus on the specific model $Z_{\text{Fluct}}(N, \mathcal{P})$. Setting $n=1$ in \eqref{PlanarGaussianLoopEquation}, one finds
\begin{equation}
W_{1,0}(x,c)^2 - a_c \bigl(xW_{1,0}(x,c) - 1\bigr) = 0.
\end{equation}
where we have used $W_{1,0}^{(1,c')} = 0$ for all $c'$, since $W_{1,0}^{(1,c')} = \lim_{N\to\infty} \frac{1}{N} \langle \tr_{E_{c'}} M_{c'}\rangle$ and the model is invariant under $\{M_c\} \to \{-M_c\}$ at large $N$. The disc function of color $c$ is thus
\begin{equation}
W_{1,0}(x,c) = \frac{a_c}{2} \biggl(x - \sqrt{x^2 - \frac{4}{a_c}}\biggr)
\end{equation}
as in \cite{QuarticTR}. It has a cut along $\Gamma_c = [-\frac{2}{\sqrt{a_c}}, \frac{2}{\sqrt{a_c}}]$ (if $a_c>0$), which is said to be the cut of color $c$. The global disc function $W_{1,0}(x) = \sum_{c=1}^d W_1^{(0)}(x,c) \mathbbm{1}_{\C_c}(x)$ thus has $d$ cuts, along $\bigcup_{c=1}^d \Gamma_c$.

\subsubsection{Cylinder function}

The cylinder function is the leading order of the two-point function. The local expression is
\begin{equation}
W_{2,0}(x_1, c_1; x_2, c_2) = \lim_{N\to\infty} W_{2}(x_1,c_1;x_2,c_2) = \lim_{N\to\infty} \langle \tr_{E_{c_1}}\frac{1}{x_1-M_{c_1}} \tr_{V_{c_2}}\frac{1}{x_2-M_{c_2}}\rangle_{\text{conn}}
\end{equation}

By setting $n=2$ in \eqref{PlanarGaussianLoopEquation}, one finds
\begin{multline}
\bigl(2 W_{1,0}(x_1, c_1) - a_{c_1} x_1\bigr) W_{2,0}(x_1, c_1; x_2, c_2) + \delta_{c_1,c_2} \frac{\partial}{\partial x_2} \frac{W_{1,0}(x_1, c_1) - W_{1,0}(x_2, c_2)}{x_1-x_2} \\
 - W_{1,0}(x_1, c_1) \sum_{c'=1}^d b_{c_1c'} W_{2,0}^{(1,c')}(x_2, c_2) =0
\end{multline}
where a term $- W^{(0)}_{2}(x_1, c_1; x_2, c_2) \sum_{c'=1}^d b_{c_1c'} W_{1,0}^{(1,c')}$ has been removed because $W_{1,0}^{(1,c')}=0$ in this model.

It generalizes the equations found for $W_{2}^{(0)}(x_1, c_1; x_2, c_2)$ in \cite{QuarticTR}. The method used to solve them still works here. It processes by first finding the values of $W^{(1,c')}_{2,0}(x_2, c_2)$. To do so, we extract the coefficient of the equations at order $1/x_1$. It gives
\begin{equation}
- a_{c_1} W_{2,0}^{(1,c_1)}(x_2, c_2) - \sum_{c'=1}^d b_{c_1,c'} W_{2,0}^{(1,c')}(x_2, c_2) = \delta_{c_1,c_2} \frac{\partial W_{1,0}(x_2,c_2)}{\partial x_2}
\end{equation}
which can be given a matrix form. Introduce the following $d\times d$ matrices $A = \operatorname{diag}(a_1, \dotsc, a_d)$, $B = (b_{c,c'})_{1\leq c,c'\leq d}$ and 
\begin{equation}
W^{(1)}_{2,0}(x) = \bigl(W^{(1,c)}_{2,0}(x,c')\bigr)_{1\leq c,c'\leq d}\qquad \partial W_{1,0}(x_2) = \operatorname{diag}\Bigl(\frac{\partial W^{(0)}_{1}(x_2)}{\partial x_2}, \dotsc, \frac{\partial W^{(0)}_{d}(x_2)}{\partial x_2}\Bigr)
\end{equation}
It comes
\begin{equation}
W^{(1)}_{2,0}(x) = - (A+B)^{-1} \partial W_{1,0}(x)
\end{equation}
provided $A+B$ is invertible.

Denoting $\sigma(x,c) = \sqrt{x^2-4/a_c}$, we have
\begin{equation}
\frac{\partial W_{1,0}(x, c)}{\partial x} = -\frac{W_{1,0}(x, c)}{\sigma(x,c)}
\end{equation}
and it comes
\begin{equation} \label{W20}
W_{2,0}(x_1, c_1; x_2, c_2) = \delta_{c_1c_2} \frac{x_1 x_2 - \sigma(x_1,c_1) \sigma(x_2,c_2) - 4/a_{c_1}}{2(x_1 - x_2)^2 \sigma(x_1,c_1) \sigma(x_2,c_2)} - \frac{1}{a_{c_1}} \Bigl(B \frac{1}{A+B}\Bigr)_{c_1c_2} \frac{W_{1,0}(x_1, c_1) W_{1,0}(x_2, c_2)}{\sigma(x_1,c_1) \sigma(x_2,c_2)}
\end{equation}
The first part of this formula is the cylinder function for the GUE, while the other term is due to the multi-trace interaction. Notice that the latter is not manifestly invariant under the exchange symmetry $(x_1, c_1) \leftrightarrow (x_2, c_2)$.

The cylinder function is thus basically the same as in the quartic melonic model \cite{QuarticTR}. The difference is that in that case, the matrices $A, B$ were specific functions of the coupling constants $t_1, \dotsc, t_d$, while they are here, in general, functions of all the coupling constants $t_i$, $i\in I$.

\subsection{Blobbed topological recursion}

The remaining is exactly similar to \cite{QuarticTR}, with the replacement $a_c = 1-\alpha_c^2$, which already followed the theorems of \cite{BlobbedTR, BlobbedTR2}.

\subsubsection{Spectral curve}

Denote the Riemann sphere $\hat{\C}$, and $\hat{\C}_c$ its copy of color $c\in\{1, \dotsc, d\}$. For each color, define
\begin{equation}
f_c(x,y) = y^2 - a_c xy + a_c
\end{equation}
The Gaussian spectral curve is defined as $\mathcal{C}_{\text{Gaussian}}\subset \bigcup_{c=1}^d\hat{\C}_c^2$ by
\begin{equation}
f(x,y) = \sum_{c=1}^d \mathbbm{1}(x,c) \mathbbm{1}(y,c) f_c(x,y) = 0
\end{equation}

Recall that $\Gamma_c = [-\frac{2}{\sqrt{a_c}}, \frac{2}{\sqrt{a_c}}]$ (for $a_c>0$) and denote $\Gamma = \bigcup_{c=1}^d \Gamma_c$. It can be checked as in \cite{BlobbedTR} that our correlation functions $W_{n,g}(x_1, \dotsc, x_n)$ are only singular along $\Gamma$ (with respect to each variable), except for $(n,g)=(1,0),(2,0)$. We therefore introduce a Zhukovski parametrization 
\begin{equation}
x(z) = \sum_{c=1}^d \mathbbm{1}(x,c) \mathbbm{1}(z,c) \frac{1}{\sqrt{a_c}}(z+z^{-1})
\end{equation}
for $|z|\geq 1$ in each color. As shown in \cite{BlobbedTR}, the correlation functions for $(n,g)\neq(1,0),(2,0)$ are holomorphic for $|z|\geq 1$ except at $z=\pm 1$. Moreover, they can be analytically continued to the interior of a neighborhood of the unit circle, except at $z=\pm 1$. In our case (see below), this analytic continuation can be performed for all $0<|z|<1$. The correlation functions can thus be turned into differential forms
\begin{equation}
\omega_{n,g}(z_1, \dotsc, z_n) = W_{n,g}(x(z_1), \dotsc, x(z_n)) dx(z_1) \dotsm dx(z_n) + \delta_{(n,g),(2,0)} \sum_{c=1}^d \mathbbm{1}(z_1,c) \mathbbm{1}(z_2,c) \frac{dx(z_2) dx(z_2)}{(x(z_1)-x(z_2))^2}.
\end{equation}
For $(n,g)\neq (1,0),(2,0)$, they are holomorphic on $\mathcal{C}_{\text{Gaussian}}^n$ except at $z_i = 0, \pm 1$.

In the framework of the blobbed topological recursion, the singularities at 0 and at $\pm 1$ play different roles. This is because the disc function has (simple) zeroes at $z=\pm 1$,
\begin{equation}
W_{1,0}(x(z)) = \sum_{c=1}^d \mathbbm{1}(z,c) \frac{\sqrt{a_c}}{z} \quad \Rightarrow \quad \omega_{1,0}(z) = \sum_{c=1}^d \mathbbm{1}(z,c) \frac{\sqrt{a_c}}{z} (1-z^{-2}) dz
\end{equation}
because they are the zeroes of $dx(z)$.

As for the cylinder function, it becomes
\begin{equation} \label{omega20}
\omega_{2,0}(z_1,z_2) = \frac{dz_1\,dz_2}{(z_1-z_2)^2} \sum_{c=1}^d \mathbbm{1}(z_1,c) \mathbbm{1}(z_2,c) - \frac{dz_1 dz_2}{z_1^2 z_2^2}\sum_{c_1, c_2=1}^d \frac{\mathbbm{1}(z_1,c_1) \mathbbm{1}(z_2,c_2)}{\sqrt{a_{c_1} a_{c_2}}} \frac{1}{a_{c_1}} \Bigl(B \frac{1}{A+B}\Bigr)_{c_1c_2}
\end{equation}
which is singular along the diagonal $z_1=z_2$ as expected, but also has poles at $z_1=0$ and $z_2=0$ on each color.

The points $z=\pm 1$ in each color are called the ramification points and
\begin{equation}
\mathcal{R} = \{z=\pm 1 \in\C_c, c=1, \dotsc, d\}
\end{equation}
The spectral curve is also supplemented with the canonical involution $\iota(z)=1/z$ which preserves the ramification points.

\subsubsection{Topological recursion formula}

Denote $G(z,z_1) = \int^{z} \omega_{2,0}(\cdot, z_1)$, and $\Delta \varphi = \varphi - \iota^*\varphi$ for a differential form $\varphi$. The kernel of the topological recursion is $K(z, z_1) = \frac{\Delta G(z,z_1)}{2\Delta \omega_{1,0}(z)}$. We further define the polar and holomorphic part of $\omega_{n,g}(z_1, \dotsc, z_n)$ as follows (terminology justified below)
\begin{equation}
P\omega_{n,g}(z_1, \dotsc, z_n) = \sum_{z\in\mathcal{R}} \operatorname{res}_z G(z,z_1) \omega_{n,g}(z, \dotsc, z_n),\qquad H\omega_{n,g}(z_1, \dotsc, z_n) = \omega_{n,g}(z_1, \dotsc, z_n) - P\omega_{n,g}(z_1, \dotsc, z_n)
\end{equation}

\begin{theorem} \label{thm:BlobbedTR}
Assume that $A+B$ is invertible, so that $\omega_{2,0}$ is given by \eqref{omega20}. For all $(n,g)\neq (1,0),(2,0)$, the holomorphic part is holomorphic while the polar part has poles on $\mathcal{R}$. They are given by
\begin{equation}
P\omega_{n,g}(z_1, \dotsc, z_n) = \sum_{z\in\mathcal{R}} \operatorname{res}_z K(z,z_1)\Bigl( \omega_{n+1,g-1}(z,\iota(z),z_2, \dotsc, z_n) + \sum_{\substack{(I_1, I_2)\in\mathcal{I}_2(n)\\ h=0, \dotsc, g\\ (|I_1|,h) \neq (1,0)\\ (|I_2|,h)\neq(1,g)}} \omega_{|I_1|+1,h}(z, z_{I_1}) \omega_{|I_2|+1}(\iota(z), z_{I_2})\Bigr)
\end{equation}
and
\begin{equation}
H \omega_{n,g}(z_1, \dotsc, z_n) = \frac{1}{2i\pi} \oint_{z\in\bigcup_{c=1}^d \mathbbm{U}_c} \omega_{2,0}(z_1,z) \nu_{n,g}(z, z_2, \dotsc, z_n) 
\end{equation}
where $\mathbbm{U}_c$ is the copy of the unit circle of color $c$, and 
\begin{equation}
\nu_{n,g}(z, z_2, \dotsc, z_n) = V_{n,g}(x(z), x(z_2), \dotsc, x(z_n))\ dx(z_2) \dotsm dx(z_n)
\end{equation}
for
\begin{multline}
V_{n,g}(x, x_2, \dotsc, x_n) = \sum_{c=1}^d \mathbbm{1}(x,c) \\
\sum_{\vlam} \sum_{j=1}^{\ell(\lambda^{(c)})} \sum_{\substack{R\in\mathcal{P}(L(\vlam)) \\ R'(c,j)=\emptyset}} \sum_{(I_1, \dotsc, I_{\ell(R)}) \in \mathcal{I}_{\ell(R)}(n)} \sum_{\substack{h, h_1, \dotsc, h_{\ell(R)}\geq 0\\ \ell(\vlam) - \ell(R) + h + \sum_{\alpha=1}^{\ell(R)} h_{\alpha}= g}} S^{(h)}(\vlam) x^{\lambda^{(c)}_j} \prod_{R_\alpha\neq \{(c,j)\}} W_{|R_\alpha|+|I_\alpha|, h_\alpha}^{(\lambda_{R_\alpha}, c_{R_\alpha})} (x_{I_\alpha})
\end{multline}
\end{theorem}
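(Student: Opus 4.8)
The plan is to adapt the argument of \cite{QuarticTR}, which itself rests on \cite{BlobbedTR, BlobbedTR2}, and to verify that the two genuinely new features here — the coloring of the matrices and the non-topological $1/N$ expansion of the blob coefficients — do not obstruct it. Throughout, following the remark after \eqref{TopologicalCoefficients}, one works as if the $S^{(h)}(\vlam)$ were $N$-independent; the true coefficients are recovered at the end through the re-expansion \eqref{TopologicalCoefficients}, which merely reshuffles terms between orders and does not interfere with the recursion.

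The first step is to extract the linear loop equation. In \eqref{TopologicalLoopEquation} at genus $g$ with base variable $x_1$ of color $c$, the terms linear in the top correlation function $W_{n,g}$ are exactly the two quadratic pieces carrying a factor $W_{1,0}$ (those excluded in the polar-part formula by the conditions $(|I_1|,h)\neq(1,0)$ and $(|I_2|,h)\neq(1,g)$), together with the $|\vlam|=2$, $\lambda^{(c)}=(2)$ contribution of the potential; the $(1,1)$-type and $(1)(1)$-type blob terms drop out because $W_{1,0}^{(1,c')}=0$. Using $W_{1,0}(x,c)=\tfrac{a_c}{2}\bigl(x-\sigma(x,c)\bigr)$, these assemble into $\bigl(2W_{1,0}(x_1,c)-a_cx_1\bigr)W_{n,g}=-a_c\,\sigma(x_1,c)\,W_{n,g}$. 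Passing to the Zhukovski coordinate $z_1$ and to the form $\omega_{n,g}$, and using that the prefactor $-a_c\,\sigma(x(z_1),c)\,dx(z_1)=\tfrac{1}{\sqrt{a_c}}\Delta\omega_{1,0}(z_1)$ is odd under $\iota(z)=1/z$, this casts \eqref{TopologicalLoopEquation} into the standard shape in which $\Delta\omega_{1,0}(z_1)$ multiplies $\omega_{n,g}(z_1,z_L)$ with $z_L=(z_2,\dots,z_n)$. From here the two abstract loop equations of \cite{BlobbedTR2} follow in the usual way: $\omega_{n,g}(z_1,z_L)+\iota^{*}_{z_1}\omega_{n,g}(z_1,z_L)$ is holomorphic at the ramification points $\mathcal{R}=\{z=\pm1\in\C_c\}$, and the quadratic quantity $\omega_{n+1,g-1}(z_1,\iota(z_1),z_L)+\sum'\omega\,\omega$ (with the primed sum as in the statement) is holomorphic at $\mathcal{R}$ with a double zero there; the coloring enters only through the indicator factors $\mathbbm{1}(x_1,c)$ and is inert at this step. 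Granting these, the Cauchy-kernel and residue computation of \cite{BlobbedTR, BlobbedTR2}, applied verbatim on each copy $\C_c$, produces the stated expression for $P\omega_{n,g}$ with kernel $K(z,z_1)=\Delta G(z,z_1)/\bigl(2\Delta\omega_{1,0}(z)\bigr)$; invertibility of $A+B$ is precisely what makes $\omega_{2,0}$, and hence $G$, well defined by \eqref{omega20}, whose diagonal part supplies the usual local behaviour near $\mathcal{R}$.

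For the holomorphic part I would use that, by the one-cut analysis, $\omega_{n,g}$ for $(n,g)\neq(1,0),(2,0)$ is holomorphic on $\mathcal{C}_{\text{Gaussian}}^n$ away from $z_i\in\{0,\pm1\}$, so that, since $P\omega_{n,g}$ carries all the poles at $\mathcal{R}$, the difference $H\omega_{n,g}=\omega_{n,g}-P\omega_{n,g}$ is holomorphic in a neighborhood of each unit circle $\mathbbm{U}_c$ and extends to $0<|z_i|<1$. Re-inserting \eqref{TopologicalLoopEquation}, the blob terms with $R'(c,j)=\emptyset$ reorganize precisely into $V_{n,g}$, hence into $\nu_{n,g}$, and integrating against the Cauchy kernel $\omega_{2,0}(z_1,z)$ along $\bigcup_c\mathbbm{U}_c$ — whose extra double pole at $z=0$, responsible for the $B(A+B)^{-1}$ mixing in \eqref{omega20}, is exactly what feeds the holomorphic blobs — reconstructs $H\omega_{n,g}$, yielding the second formula. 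This is the colored analogue of the construction of \cite{BlobbedTR}.

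The step I expect to be the main obstacle is not the algebra but the analyticity input underlying both parts: one must check that, despite the multi-trace interactions and the color mixing through $B$, each $\omega_{n,g}$ with $(n,g)\neq(1,0),(2,0)$ is singular only on $\mathcal{R}$ in the physical sheet $|z_i|\ge1$ and continues holomorphically across the $\mathbbm{U}_c$ to $0<|z_i|<1$. This is exactly where one uses that the large $N$ limit \eqref{GaussianCoefficients} is Gaussian color by color: the one-cut statement is then classical, so that — in contrast with the generic disc equation \eqref{GenericDiscEquation}, which would require an extension of \cite{BM-Jehanne} to coupled systems with catalytic variables — no new 1-cut lemma is needed. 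Once this analyticity is secured, the colors and the non-topological decorations are purely a matter of bookkeeping and the recursion closes exactly as in \cite{QuarticTR}.
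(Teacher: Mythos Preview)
Your proposal is correct and follows the same route as the paper: the paper's proof is a single sentence observing that the loop equations have the same form as in \cite{QuarticTR} and that all the arguments borrowed from \cite{BlobbedTR, BlobbedTR2} apply. You have simply unpacked those steps in more detail, correctly identifying the linear and quadratic abstract loop equations, the role of the Gaussian large $N$ limit in securing the one-cut analyticity, and the fact that the coloring and non-topological coefficients are absorbed by bookkeeping.
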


\begin{proof}
The loop equations have the same form as in \cite{QuarticTR} and all the arguments, which were borrowed from \cite{BlobbedTR, BlobbedTR2} apply.
\end{proof}

\section*{Conclusion}

We have shown that, as long as there are quartic melonic interactions, one can find in arbitrary tensor models a set of correlation functions which satisfy the blobbed topological recursion in a universal way. The spectral curve is a disjoint union of Gaussian spectral curves, with an additional holomorphic part to the cylinder function which always has the same form.

Those results rely on the conditions 1 and 2 presented in the introduction and detailed throughout the article. In particular, the specifics of the model, i.e. the choice of interactions, do not matter as long as the effective action obtained after the formal integration of all the matrices except $Y_1, \dotsc, Y_d$ has a well-defined $1/N$ expansion as we have described in Theorem \ref{thm:PartialIntegration}. This is why our formulas all have the same structure as in the case of the quartic melonic model in \cite{QuarticTR}.

We have also provided theorems \ref{thm:Expectations}, \ref{thm:IF} and \ref{thm:BubbleExpectation} to relate the expectations of the $U(N)^d$-invariant observables on the tensor and matrix sides.

There are still many interesting questions about the topological recursion for tensor models. Are there other sets of correlation functions satisfying the topological recursion? Is it always a blobbed recursion? Can condition 2 from the introduction be removed? Is it possible to proceed without going to matrix models and derive topological recursions directly in the tensor formulation? There have been some efforts to use directly the Schwinger-Dyson equations of tensor models, in \cite{SDLargeN} and \cite{DoubleScaling}, thus extracting the double scaling limit of tensor models with melonic interactions for instance, but this is still far from the topological recursion. We hope some of those questions can be tackled in the near future.

\end{document}